%
\documentclass[runningheads]{llncs}
\usepackage[T1]{fontenc}
%
\usepackage{graphicx}
%

\usepackage{hyperref}

\usepackage{subcaption}
\captionsetup{compatibility=false}

\usepackage{fancybox}
    {\begin{Sbox}\begin{minipage}{#1}}%
	{\end{minipage}\end{Sbox}\fbox{\TheSbox}}

\newenvironment{framed}%
    {\begin{Sbox}\begin{minipage}{.98\textwidth}}%
	{\end{minipage}\end{Sbox}\fbox{\TheSbox}}

\usepackage{tabularx}
\usepackage{multirow}

\usepackage{tikz}
\usepackage{pgfplots}

\definecolor{oxgreen}{RGB}{105,145,59}
\definecolor{oxred}{RGB}{190,15,52}
\definecolor{oxlighterblue}{HTML}{1a385a}

\usepackage{booktabs}   

\newtheorem{assumption}{Assumption}

\usepackage{proof}
\usepackage{amsfonts,amssymb,mathtools,amsmath}
\usepackage{stmaryrd}
\usepackage{thm-restate}
\usepackage{enumitem}
\usepackage{wrapfig}
\usepackage[misc]{ifsym}

\usepackage[capitalize]{cleveref}
\crefname{assumption}{Assumption}{Assumptions}
\crefname{problem}{Problem}{Problems}

\newif\ifdraft
\draftfalse

\usepackage{import}
\usepackage{savesym}
\savesymbol{comment}

\ifdraft
\usepackage[draft]{commenting}
\else
\usepackage[nompar]{commenting}
\fi

\declareauthor{lo}{LO}{teal}
\authorcommand{lo}{comment}

\declareauthor{dw}{DW}{blue}
\authorcommand{dw}{comment}

\declareauthor{bk}{BK}{green}
\authorcommand{bk}{comment}
\setdefaultauthor{dw}

\declareauthor{bk}{BK}{green}
\authorcommand{bk}{comment}

\makeatletter
\renewcommand{\comm@todo@mpar}[1]{}
\makeatother

\def\divider{%
  \leavevmode\leaders\hrule height 0.6ex depth \dimexpr0.4pt-0.6ex\hfill%
  \kern0pt%
}


\newcommand\defn[1]{\emph{#1}}
\newcommand\defeq{\coloneqq}

\newcommand\Real{\mathbb R}
\newcommand\Realp{\Real_{>0}}

\newcommand\compi\rightsquigarrow

\newcommand\csfc\rightsquigarrow

\newcommand\sd{T}

\newcommand\nat{\mathbb N}

\newcommand\sample[1]{\mathbf{sample}\,#1}
\newcommand\score{\mathbf{score}}
\newcommand\ifc[3]{\mathbf{if }\,#1 <0\,\mathbf{ then }\,#2\,\mathbf{ else }\,#3}

\newcommand\Set{\mathbf{Set}}

\newcommand\Fr{\mathbf{Fr}}
\newcommand\VFr{\mathbf{VectFr}}
\newcommand\qbs{\mathbf{QBS}}
\newcommand\PCFReal{R}
\newcommand\PCFlog{\underline\log}
\newcommand\PCFexp{\underline\exp}
\newcommand\PCFplus{\underline+}
\newcommand\PCFmul{\underline\cdot}
\newcommand\PCFmin{\underline -}
\newcommand\PCFcirc{\underline\circ}

\newcommand\PCFinv{\underline{{}^{-1}}}

\newcommand\PCFpower{\underline{{}^a}}
\newcommand\PCFtimes{\underline{a\,\cdot}\,}

\newcommand\genj{\vdash_?}
\newcommand\polyj{\vdash_{\mathrm{poly}}}
\newcommand\intj{\vdash_{\mathrm{SGD}}}
\newcommand\gsj{\vdash_{\mathrm{unif}}}
\newcommand\dep{\Delta}

\newcommand{\mathhighlight}[1]{{\color{red!60!black} #1}}

\newcommand\ann{\alpha}
\newcommand\sann{\beta}
\newcommand\true{\mathbf t}
\newcommand\false{\mathbf f}

\DeclareMathOperator\valuefn{value}
\DeclareMathOperator\weightfn{weight}

\newcommand\infix[1]{\mathbin{#1}}

\newcommand\safet{\sigma}

\newcommand\smooth{\sigma_\eta}

\newcommand\transt[2]{\mathbf{transform}\,\sample_{#1}\,\mathbf{ by }\,#2}

\newcommand\nbt[4]{#1\langle#2,#3,#4\rangle}

\newcommand\trt{\Sigma}
\newcommand\addtr{\bullet}

\newcommand\PCFRp{\PCFReal_{>0}}

\newcommand\pdf{\mathrm{pdf}}

\newcommand\dist{\mathcal D}
\newcommand\mdist{\mathbf{\dist}}
\newcommand\nd{\mathcal N}
\newcommand\ed{\mathcal E}

\newcommand\app{\ensuremath{\mathbin{+\mkern-8mu+}}}

\newcommand\boundary\partial
\newcommand\interior\circ

\newcommand\unif{\xrightarrow{\textrm{unif.}}}
\newcommand\sau{\xrightarrow{\textrm{u.a.u.}}}

\newcommand\Jac{\mathbf{J}}

\newcommand\ball[1]{\mathbf B_{#1}(\mathbf 0)}

\newcommand{\tif}{\text{if }}

\newcommand{\tow}{\text{otherwise}}

\newcommand{\E}{\mathbb{E}}
\DeclareMathOperator{\dom}{dom}

\DeclareMathOperator*{\Leb}{Leb}

\newcommand*\diff{\mathop{}\!\mathrm{d}}

\newcommand\para{{\boldsymbol\theta}}
\newcommand\parasp{{\boldsymbol\Theta}}

\newcommand\lat\tr

\newcommand\obs{{\mathbf x}}
\newcommand\repx{\mathbf z}
\newcommand\rep{{\boldsymbol\phi}}

\newcommand\comp{\mathbin{@}}

\newcommand\flcomp{\mathbin{\ast}}
\newcommand\lrcomp{\mathbin{\odot}}

\newcommand\tr{\mathbf s}
\newcommand\from{:}
\newcommand\pto\rightharpoonup

\newcommand\etr{{[]}}

\newcommand\rel{\mathcal R}
\newcommand\ppred{\mathcal P}
\newcommand\qpred{\mathcal Q}

\newcommand\sem[1]{\llbracket #1\rrbracket}
\newcommand\sema[1]{\llbracket #1\rrbracket_{\eta}}

\newcommand\lyy{\hbox{\textsc{Lyy18} }}
\renewcommand\score{\hbox{\textsc{Score} }}
\newcommand\nested{\hbox{\textsc{Smooth} }}
\newcommand\reparam{\hbox{\textsc{Reparam }}}

\definecolor{oxblue}{RGB}{0,33,71}

\def\orcidID#1{\smash{\href{http://orcid.org/#1}{\protect\raisebox{-1.25pt}{\protect\includegraphics{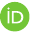}}}}}

\begin{document}
%
\title{Fast and Correct Gradient-Based Optimisation for Probabilistic Programming via Smoothing}
\titlerunning{Fast and Correct Optimisation for Probabilistic Programming via Smoothing}

%
\author{Basim Khajwal\inst{1} \and
C.-H. Luke Ong\inst{1,2}\orcidID{0000-0001-7509-680X} \and
Dominik Wagner\inst{1}$^{\text{(\Letter)}}$ \orcidID{0000-0002-2807-8462}}
\institute{University of Oxford \and
NTU Singapore}
%
\maketitle              
\begin{abstract}
  We study the foundations of variational inference, which frames posterior inference as an optimisation problem, for probabilistic programming.
  The dominant approach for optimisation in practice is stochastic gradient descent. In particular, a variant using the so-called reparameterisation gradient estimator exhibits fast convergence in a traditional statistics setting.
  Unfortunately, discontinuities, which are readily expressible in programming languages, can compromise the correctness of this approach.
  We consider a simple (higher-order, probabilistic) programming language with conditionals,
  and we endow our language with both a measurable and a \emph{smoothed} (approximate) value semantics.
  We present type systems which establish technical pre-conditions. Thus we can prove stochastic gradient descent with the reparameterisation gradient estimator to be correct when applied to the smoothed problem. Besides, we can solve the original problem up to any error tolerance by choosing an accuracy coefficient suitably.
  Empirically we demonstrate that our approach has a similar convergence as a key competitor, but is simpler, faster, and attains orders of magnitude reduction in work-normalised variance.

\keywords{probabilistic programming  \and variational inference \and reparameterisation gradient
\and value semantics \and type systems.}
\end{abstract}

\section{Introduction}
\label{sec:intro}


Probabilistic programming is a programming paradigm which has the vision to make statistical methods, in particular Bayesian inference, accessible to a wide audience. This is achieved by a separation of concerns: the domain experts wishing to gain statistical insights focus on modelling, whilst the inference is performed automatically. (In some recent systems \cite{pyro,DBLP:conf/pldi/Cusumano-Towner19} users can improve efficiency by writing their own inference code.)

In essence, probabilistic programming languages extend more traditional programming languages with constructs such as $\mathbf{score}$ or $\mathbf{observe}$ (as well as $\sample$) to define the prior $p(\repx)$ and likelihood $p(\obs\mid\repx)$. The task of inference is to derive the posterior $p(\repx\mid\obs)$, which is in principle governed by Bayes' law yet usually intractable.

Whilst the paradigm was originally conceived in the context of statistics and Bayesian machine learning,
probabilistic programming has in recent years proven to be a very fruitful subject for the programming language community.
Researchers have made significant theoretical contributions such as underpinning languages with rigorous (categorical) semantics \cite{SYWHK16,S17,HKSY17,VKS19,ETP14,DK20} and investigating the correctness of inference algorithms \cite{HNRS15,BLGS16,LYRY19}. The latter were mostly designed in the context of ``traditional'' statistics and features such as conditionals, which are ubiquitous in programming, pose a major challenge for correctness.

Inference algorithms broadly fall into two categories:
Markov chain Monte Carlo (MCMC), which yields a sequence of samples asymptotically approaching the true posterior, and variational inference.


\subsubsection*{Variational Inference.}
In the variational inference approach to Bayesian statistics \cite{Zhang2019,Murphy2012,B07,BKM17}, the problem of approximating difficult-to-compute posterior probability distributions is transformed to an optimisation problem.
The idea is to approximate the posterior probability $p(\repx \mid \obs)$ using a family of ``simpler'' densities $q_\para(\repx)$ over the latent variables $\repx$, parameterised by $\para$.
The optimisation problem is then to find the parameter $\para^\ast$ such that $q_{\para^\ast}(\repx)$ is ``closest'' to the true posterior $p(\repx \mid \obs)$.
Since the variational family may not contain the true posterior, $q_{\para^\ast}$ is an approximation in general. In practice, variational inference has proven to yield good approximations much faster than MCMC.

Formally, the idea is captured by minimising the \emph{KL-divergence} \cite{Murphy2012,B07} between the variational approximation and the true posterior.
This is equivalent to maximising the ELBO function, which only depends on the joint distribution $p(\obs,\repx)$ and \emph{not} the posterior, which we seek to infer after all:
\begin{equation}
  \label{eq:ELBO}
\mathrm{ELBO}_\para\defeq
\E_{\repx\sim q_\para{(\repx)}}[\log p(\obs,\repx)-\log q_\para(\repx)]
\end{equation}

\subsubsection*{Gradient Based Optimisation.}
In practice, variants of \emph{Stochastic Gradient Descent (SGD)} are frequently employed to solve optimisation problems of the following form: $\text{argmin}_\para\,\E_{\lat\sim q{(\lat)}}[f(\para,\lat)]$.
 In its simplest version, SGD follows Monte Carlo estimates of the gradient in each step:
\begin{align*}
  \para_{k+1}&\defeq\para_k-\gamma_k\cdot\underbrace{\frac 1 N\sum_{i=1}^N\nabla_\theta f\left(\para_k,\lat_k^{(i)}\right)}_{\text{gradient estimator}}
\end{align*}
where $\lat_k^{(i)}\sim q\left(\lat_k^{(i)}\right)$ and $\gamma_k$ is the \emph{step size}.

For the correctness of SGD it is crucial that the estimation of the gradient is \emph{unbiased}, i.e.\ correct in expectation:
\begin{align*}
  \E_{\lat^{(1)},\ldots,\lat^{(N)}\sim q}\left[\frac 1 N\sum_{i=1}^N\nabla_\theta f\left(\para,\lat^{(i)}\right)\right]=\nabla_\theta\E_{\lat\sim q{(\lat)}}[f(\para,\lat)]
\end{align*}
This property, which is about commuting differentiation and integration, can be established by the dominated convergence theorem \cite[Theorem 6.28]{K13}.

Note that we cannot directly estimate the gradient of the ELBO in \cref{eq:ELBO} \changed[dw]{with Monte Carlo} because the distribution w.r.t.\ which the expectation is taken also depends on the parameters. However, the so-called \emph{log-derivative trick} can be used to derive an unbiased estimate, which is known as the \emph{Score} or \emph{REINFORCE} estimator \cite{RGB14,WW13,DBLP:conf/icml/MnihG14}.

\subsubsection*{Reparameterisation Gradient.}

Whilst the score estimator has the virtue of being very widely applicable, it unfortunately suffers from high variance, which can cause SGD to yield very poor results\footnote{see e.g.\ \cref{fig:temperature-graph}}.

The \emph{reparameterisation gradient estimator}---the dominant approach in variational inference---reparameterises the latent variable $\repx$ in terms of a base random variable $\lat$ (viewed as the entropy source) via a diffeomorphic transformation $\rep_\para$, such as a location-scale transformation or cumulative distribution function.
For example, if the distribution of the latent variable $z$ is a Gaussian $\mathcal{N}(z \mid \mu, \sigma^2)$ with parameters $\para = \{\mu,\sigma\}$ then the location-scale transformation using the standard normal as the base distribution gives rise to the reparameterisation
\begin{align}
  \label{eq:locscale}
  z \sim \nd(z \mid \mu, \sigma^2)
  \iff
  z = \phi_{\mu,\sigma}(s), \quad s \sim \nd(0,1).
\end{align}
where $\phi_{\mu,\sigma}(s)\defeq s\cdot\sigma+\mu$.
The key advantage of this setup (often called ``reparameterisation trick'' \cite{DBLP:journals/corr/KingmaW13,DBLP:conf/icml/TitsiasL14,DBLP:conf/icml/RezendeMW14}) is that we have removed the dependency on $\para$ from the distribution w.r.t.\ which the expectation is taken. Therefore, we can now differentiate (by backpropagation) with respect to the parameters $\para$ of the variational distributions using Monte Carlo simulation with draws from the base distribution $\lat$.
Thus, succinctly,
we have
\begin{align*}
\nabla_\para \,\E_{\repx\sim q_\para(\repx)}[f(\para,\repx)]
=
\nabla_\para \,\E_{\lat\sim q(\lat)}[f(\para,\rep_\para(\lat))]
&=
\E_{\lat\sim q(\lat)}[\nabla_\para \, f(\para,\rep_\para(\lat))]
\end{align*}

The main benefit of the reparameterisation gradient estimator is that it has a significantly lower variance than the score estimator, resulting in faster convergence.

\subsubsection{Bias of the Reparameterisation Gradient.}
Unfortunately, the reparameterisation gradient estimator
is biased for non-differentiable models \cite{LYY18}, which are readily expressible in programming languages with conditionals:
\begin{example}
  \label{ex:biased}
  The counterexample in \cite[Proposition~2]{LYY18}, where the objective function is the ELBO for a non-differentiable model, can be simplified to
  \begin{align*}
    f(\theta,s)&= -0.5\cdot\theta^2+
    \begin{cases}
      0&\tif s+\theta<0\\
      1&\tow
    \end{cases}
  \end{align*}
  Observe that (see \cref{fig:biased}):
  \[
    \nabla_\theta\,\E_{s\sim\nd(0,1)}\left[f(\theta,s)\right]
    =
    -\theta+\nd(-\theta\mid 0,1)
    \neq
    -\theta
    =
    \E_{s\sim\nd(0,1)}\left[\nabla_\theta f(\theta,s)\right]
  \]
\end{example}
Crucially \emph{this may compromise convergence to critical points or maximisers}:
even if we can find a point where the gradient estimator vanishes, it may not be a critical point (let alone optimum) of the original optimisation problem (cf.~\cref{fig:exelbo})


\begin{figure}[t]
  \centering
  \begin{subfigure}[h]{.43\textwidth}
    \centering
    \begin{tikzpicture}
      \begin{axis}[
        height=5cm,
        axis lines = middle,
        xlabel = \(\theta\),
        ]
        \addplot [
        domain=-1:1,
        samples=100,
        color=oxred,
        dashed,
        thick
        ]
        {-x};
        \addplot [
        update limits=false,
        domain=-1:1,
        samples=100,
        color=oxgreen,
        thick
        ]
        {-x+1/sqrt(2*pi)*exp(-0.5*x*x)};
      \end{axis}
    \end{tikzpicture}
    \caption{Dashed red: biased estimator $\E_{s\sim\nd(0,1)}\left[\nabla_\theta f(\theta,s)\right]$, solid green: true gradient $\nabla_\theta\,\E_{s\sim\nd(0,1)}\left[f(\theta,s)\right]$.}
    \label{fig:biased}
  \end{subfigure}
  \qquad
  \begin{subfigure}[h]{.5\textwidth}
    \includegraphics[width=\linewidth]{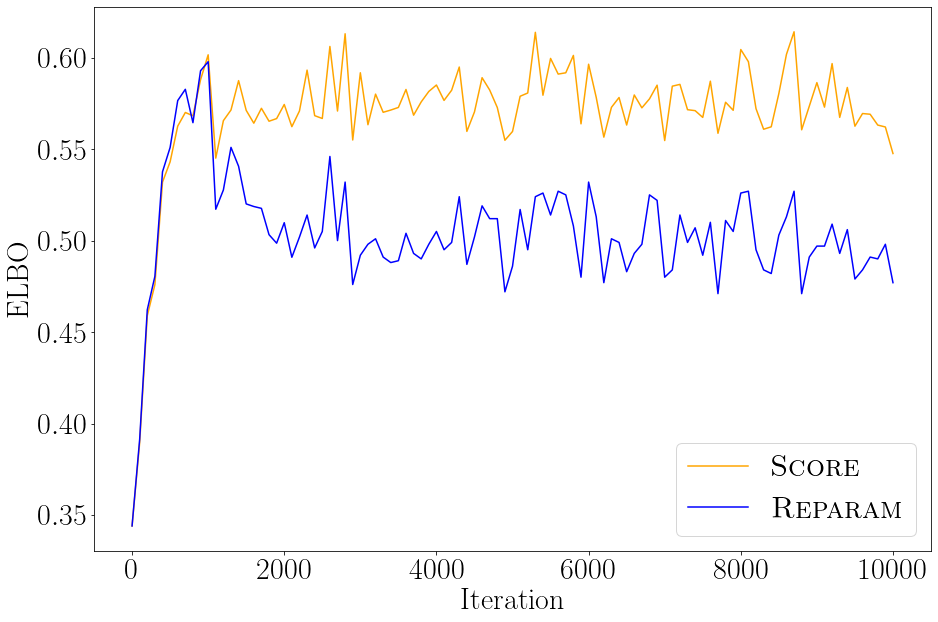}
      \caption{ELBO trajectories (higher means better) obtained with our implementation (cf.~\cref{sec:empirical})}
      \label{fig:exelbo}
  \end{subfigure}
  \caption{Bias of the reparameterisation gradient estimator for \cref{ex:biased}.}
\end{figure}

\subsection*{Informal Approach}
As our starting point we take a variant of the simply typed lambda calculus with reals, conditionals and a sampling construct.
We abstract the optimisation of the ELBO to the following generic optimisation problem
\begin{equation}
  \text{argmin}_\para\,\E_{\lat\sim\dist}[\sem M(\para,\lat)]
\end{equation}
where $\sem M$ is the value function \cite{BLGS16,MOPW}
 of \changed[dw]{a program} $M$ and $\dist$ is independent of the parameters $\para$ and it is determined by the distributions from which $M$ samples.
Owing to the presence of conditionals, the function $\sem M$ may not be continuous, let alone differentiable.

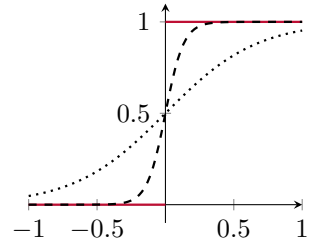
\begin{wrapfigure}{r}{0.3\linewidth}
  \vspace*{-1.1cm}
  \begin{tikzpicture}
    \begin{axis}[
      ymin=-0.1,
      ymax=1.1,
      height=4.5cm,
      axis lines = middle,
      ]
      \addplot [
      domain=-1:0,
      samples=100,
      color=oxred,
      line width=0.3mm
      ]
      {{0}};
      \addplot [
      domain=0:1,
      samples=100,
      color=oxred,
      line width=0.3mm
      ]
      {{1}};

      \addplot [
      domain=-1:1,
      samples=100,
      line width=0.3mm,
      dotted
      ]
      {{1/(1+exp(-3*x))}};

      \addplot [
      update limits=false,
      domain=-1:1,
      samples=100,
      line width=0.3mm,
      dashed
      ]
      {1/(1+exp(-15*x))};
    \end{axis}
  \end{tikzpicture}
  \caption{(Logistic) sigmoid function $\smooth$ (dotted: $\eta=\frac 1 3$, dashed: $\eta=\frac 1{15}$) and the Heaviside step function (red, solid).}
  \label{fig:sigf}
\end{wrapfigure}
\cref{ex:biased} can be expressed as
\begin{align*}
  (\lambda z\ldotp-0.5\cdot\theta^2+(\ifc z 0 1))\,(\sample_\nd+\theta)
\end{align*}

Our approach is based on a denotational semantics $\sema{(-)}$ (for accuracy coefficient $\eta > 0$) of programs in the (new) cartesian closed category $\VFr$, which generalises smooth manifolds and extends Fr\"olicher spaces (see e.g.~\cite{FK88,S11}) with a vector space structure.

Intuitively, we replace the Heaviside step-function usually arising in the interpretation of conditionals by smooth approximations.
In particular, we interpret the conditional of \cref{ex:biased} as
\begin{gather*}
  \sema {\ifc{s+\theta}{\underline 0}{\underline 1}}(\theta,s)\defeq \smooth(s+\theta)
\end{gather*}
where $\smooth$ is a smooth function.
For instance we can choose $\smooth(x)\defeq\sigma(\frac x\eta)$ where $\sigma(x) \defeq \frac 1{1+\exp(-x)}$ is the (logistic) sigmoid function (cf. \cref{fig:sigf}).
Thus, the program $M$ is interpreted by a smooth function $\sema M$, for which the reparameterisation gradient may be estimated unbiasedly.
  Therefore, we apply stochastic gradient descent on the smoothed program.


\subsection*{Contributions}
The high-level contribution of this paper is laying a theoretical foundation for correct yet efficient (variational) inference for probabilistic programming. We employ a smoothed interpretation of programs to obtain unbiased (reparameterisation) gradient estimators and establish technical pre-conditions by type systems.
In more detail:
\begin{enumerate}
  \item We present a simple (higher-order) programming language with conditionals.
  We employ \emph{trace types} to capture precisely the samples drawn in a fully eager call-by-value evaluation strategy.
  \item We endow our language with both a (measurable) denotational value semantics and a smoothed (hence approximate) value semantics. For the latter we furnish a categorical model based on Fr\"olicher spaces.
  \item We develop type systems enforcing vital technical pre-conditions: unbiasedness of the reparameterisation gradient estimator and the correctness of stochastic gradient descent, as well as the uniform convergence of the smoothing to the original problem.
  Thus, our smoothing approach in principle yields correct solutions up to arbitrary error tolerances.
  \item We conduct an empirical evaluation demonstrating that our approach exhibits a similar convergence to an unbiased correction of the reparameterised gradient estimator by \cite{LYY18} -- our main baseline.
  However our estimator is simpler and more efficient: it is faster and attains orders of magnitude reduction in work-normalised variance.
\end{enumerate}

\paragraph{Outline.}
In the next section we introduce a simple higher-order probabilistic programming language, its denotational value semantics and operational semantics; Optimisation \cref{p:opt} is then stated.
\cref{sec:smoothing} is devoted to a smoothed denotational value semantics, and we state the Smooth Optimisation \cref{p:optsm}.
In \cref{sec:sgd,sec:conv} we develop annotation based type systems enforcing the correctness of SGD and the convergence of the smoothing, respectively.
Related work is briefly discussed in \cref{sec:related} before
 we present the results of our empirical evaluation in \cref{sec:empirical}.
We conclude in \cref{sec:conclusion} and discuss future directions.

\paragraph{Notation.}
We use the following conventions: bold font for vectors and lists, $\app$ for concatenation of lists, $\nabla_\para$ for gradients (w.r.t.\ $\para$),$[\phi]$ for the Iverson bracket of a predicate $\phi$ and calligraphic font for distributions, in particular $\nd$ for normal distributions.
Besides, we highlight noteworthy items using $\mathhighlight{\mathrm{red}}$.


\section{A Simple Programming Language}
\label{sec:PL}




In this section, we introduce our programming language, which is the simply-typed lambda calculus with reals, augmented with conditionals and sampling from continuous distributions.

\subsection{Syntax}
The \emph{raw terms} of the programming language are defined by the grammar:
\begin{align*}
  M::= x&\mid\theta_i\mid\underline r\mid \underline +\mid\underline\cdot\mid\PCFmin\mid\PCFinv\mid\underline\exp\mid\underline\log\\
  &\mid \ifc MMM\mid\sample_\dist\mid\lambda x\ldotp M\mid M\,M
\end{align*}
where $x$ and $\theta_i$ respectively range over (denumerable collections of) \emph{variables} and \emph{parameters}, $r \in \Real$, and $\dist$ is a probability distribution over $\Real$ (potentially with a support which is a strict subset of $\Real$).
As is customary we use infix, postfix and prefix notation: $M\infix\PCFplus N$ (addition), $M\infix\PCFmul N$ (multiplication), $M\PCFinv$ (inverse), and $\PCFmin M$ (numeric negation). We frequently omit the underline to reduce clutter.

\begin{example}[Encoding the ELBO for Variational Inference]
  \label{ex:varinf}
  We consider the example used by \cite{LYY18} in their Prop.~2 to prove the biasedness of the reparameterisation gradient. (In \cref{ex:biased} we discussed a simplified version thereof.)
  The joint density is
  \begin{align*}
    p(z)&\defeq \nd(z\mid 0,1)\cdot
    \begin{cases}
      \nd(0\mid -2,1)&\tif z<0\\
      \nd(0\mid 5,1)&\tow
    \end{cases}
  \end{align*}
  and they use a variational family with density $q_\theta(z)\defeq\nd(z\mid \theta,1)$, which is reparameterised using a standard normal noise distribution and transformation $s\mapsto s+\theta$.

  First, we define an auxiliary term for the pdf of normals with mean $m$ and standard derivation $s$:
  \begin{align*}
    N\equiv \lambda x,m,s\ldotp \big(\underline{\sqrt{2\pi}}\cdot s\big)^{-1}\cdot\PCFexp\,\Big(\underline{-0.5}\cdot\big((x+(-m))\cdot s^{-1}\big)^2\Big)
  \end{align*}
  Then, we can define
  \begin{align*}
    M&\equiv \big(\lambda z\ldotp \underbrace{\PCFlog\,(N\,z\,\underline 0\,\underline 1)+(\ifc {z}{\PCFlog\,(N\,\underline 0\,\underline{(-2)}\,{\underline 1})}{\PCFlog\,(N\,\underline 0\,\underline 5\,\underline 1)})}_{\log p}-\\
    &\hspace*{5cm}\underbrace{\PCFlog\,(N\,z\,\theta\,\underline 1)}_{\log q_\theta}\big)
    \,\big(\sample_\nd+\theta\big)
  \end{align*}
\end{example}

\subsection{A Basic Trace-Based Type System}
\label{sec:basic type system}

\emph{Types} are generated from \emph{base types} ($\PCFReal$ and $\PCFRp$, the reals and positive reals) and \emph{trace types} (typically $\Sigma$, which is a finite list of probability distributions) as well as by a \emph{trace-based function space} constructor of the form $\tau \addtr \Sigma \to \tau'$.
Formally types are defined by the following grammar:
\begin{align*}
  &\textbf{trace types}&\trt&::= [\dist_1,\ldots,\dist_n] \qquad n \geq 0\\
  &\textbf{base types}&\iota&::=\PCFReal\mid\PCFRp\\
  &\textbf{safe types}&\safet&::=\iota\mid\safet\addtr\mathhighlight{[]}\to\safet\\
  &\textbf{types}&\tau&::=\iota\mid\tau\addtr\trt\to\tau
\end{align*}
where $\dist_i$ are probability distributions.
Intuitively a trace type is a description of the space of execution traces of a probabilistic program.
Using trace types, a distinctive feature of our type system is that a program's type precisely characterises the space of its possible execution traces \cite{DBLP:journals/pacmpl/LewCSCM20}.
We use list concatenation notation $\app$ for trace types, and the shorthand $\tau_1\to\tau_2$ for function types of the form $\tau_1\addtr[]\to\tau_2$.
Intuitively, a term has type $\tau\addtr\trt\to\tau'$ if, when given a value of type $\tau$, it reduces to a value of type $\tau'$ using all the samples in $\trt$.

Dual context \emph{typing judgements} of the form, $\Gamma\mid\trt\vdash M:\tau$, are defined in \cref{fig:stype}, where $\Gamma = x_1:\tau_1, \cdots, x_n:\tau_n, \theta_1 : \tau_1', \cdots, \theta_m : \tau_m'$ is a finite map describing a set of variable-type and parameter-type bindings; and the trace type $\trt$ precisely captures the distributions from which samples are drawn in a (fully eager) call-by-value evaluation of the term $M$.


The subtyping of types, as defined in \cref{fig:subtyping}, is essentially standard; for contexts, we define $\Gamma\sqsubseteq\Gamma'$ if for every $x:\tau$ in $\Gamma$ there exists $x:\tau'$ in $\Gamma'$ such that $\tau'\sqsubseteq\tau$.

Trace types are unique (cf.~\cref{app:basict}):
\begin{restatable}{lemma}{unitt}
  \label{lem:unitt}
  If $\Gamma\mid\trt\vdash M:\tau$ and $\Gamma\mid\trt'\vdash M:\tau'$ then $\trt=\trt'$.
\end{restatable}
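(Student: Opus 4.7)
The plan is a straightforward structural induction on the term $M$, but strengthened so that the induction hypothesis also controls the trace annotations appearing \emph{inside} function types that arise in the derivation, not just the outermost trace $\trt$. I would formulate the strengthened statement roughly as follows: if $\Gamma\mid\trt\vdash M:\tau$ and $\Gamma\mid\trt'\vdash M:\tau'$, then $\trt=\trt'$ and, moreover, $\tau$ and $\tau'$ have the same ``trace skeleton'' (i.e.\ agree on every $\Sigma$ occurring in a subtype of the form $\tau_1\addtr\Sigma\to\tau_2$). This strengthening is needed because the application case must recover the internal annotation $\Sigma$ from the type of the function, and a bare trace-uniqueness statement is too weak to feed back into the induction.

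The base cases are immediate: for $x$, $\theta_i$, $\underline r$ and the arithmetic primitives, the typing rules force $\trt=[]$; for $\sample_\dist$ the rule forces $\trt=[\dist]$ (or the corresponding trace position in a function type, depending on how sampling is presented). Abstraction $\lambda x\ldotp M$ always yields outer trace $[]$ and pushes the trace content of the body into the $\Sigma$ annotation of the resulting function type, so the strengthened IH on the body gives exactly what is required. For an application $M_1\,M_2$ where $M_1$ has type $\tau\addtr\Sigma\to\tau'$, the derived trace must be $\trt_1\app\trt_2\app\Sigma$; by IH the outer traces $\trt_1,\trt_2$ are unique, and by the strengthened IH the annotation $\Sigma$ inside the function type of $M_1$ is determined by $M_1$. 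The conditional case works because fully eager call-by-value evaluation forces the rule to concatenate the traces of the guard and \emph{both} branches: $\trt=\trt_0\app\trt_1\app\trt_2$, each determined by IH.

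The main obstacle is ruling out genuine non-determinism introduced by the subsumption rule from \cref{fig:subtyping}. I would handle this by a short separate lemma showing that subtyping is trace-preserving: at function types $\tau_1\addtr\Sigma\to\tau_2$ the annotation $\Sigma$ occurs invariantly (the subtyping rule cannot alter it), and at base types there is no $\Sigma$ at all. Consequently, applying subsumption to a derivation of $\Gamma\mid\trt\vdash M:\tau$ cannot change $\trt$ and cannot change any internal $\Sigma$ in $\tau$. With this lemma in hand, the induction reduces to the syntax-directed cases described above and the proof goes through.
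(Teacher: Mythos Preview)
Your overall strategy---strengthen the induction hypothesis so that it also controls the trace annotations buried inside function types, not just the outer $\trt$---is the right one, and it is what the paper does too. The gap is in the particular invariant you propose. Taking ``same trace skeleton'' literally (the two types agree on every $\Sigma$ occurring at an arrow), the strengthened statement is simply false: in the empty context, $\lambda f\ldotp f\,\underline 0$ can be assigned both $(\PCFReal\addtr[\nd]\to\PCFReal)\addtr[\nd]\to\PCFReal$ and $(\PCFReal\addtr[\ed]\to\PCFReal)\addtr[\ed]\to\PCFReal$, and these do not share a trace skeleton. The same phenomenon breaks your abstraction case directly: from two derivations of $\Gamma\mid[]\vdash\lambda y\ldotp M:\tau_1\addtr\Sigma\to\tau_2$ and $\Gamma\mid[]\vdash\lambda y\ldotp M:\tau_1'\addtr\Sigma'\to\tau_2'$ you obtain premises for the body $M$ in \emph{different} contexts $\Gamma,y{:}\tau_1$ and $\Gamma,y{:}\tau_1'$ (the language is Curry-style), so an induction hypothesis stated only for identical contexts does not fire.

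The paper repairs both issues at once by replacing ``same trace skeleton'' with a conditional relation $\approx$ on types: $\iota\approx\iota'$ at base types and, at arrows, $(\tau_1\addtr\trt\to\tau_2)\approx(\tau_1'\addtr\trt'\to\tau_2')$ iff $\tau_1\approx\tau_1'$ \emph{implies} $\trt=\trt'$ and $\tau_2\approx\tau_2'$. This relation is lifted to contexts, and one then proves the stronger statement: if $\Gamma\mid\trt\vdash M:\tau$, $\Gamma'\mid\trt'\vdash M:\tau'$ and $\Gamma\approx\Gamma'$, then $\trt=\trt'$ and $\tau\approx\tau'$. The implication at arrow types is precisely what lets the abstraction case go through (the hypothesis $\tau_1\approx\tau_1'$ is \emph{assumed}, giving related extended contexts for the body) and what the application case then exploits to pin down the inner $\Sigma$. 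Your observation that subtyping is trace-preserving is correct and is still needed for the subsumption case.
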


A term has \emph{safe type} $\safet$ if it does not contain $\sample_\dist$ or $\safet$ is a base type.
Thus, perhaps slightly confusingly, we have ${} \mid[\dist]\vdash\sample_\dist:\PCFReal$, and $\PCFReal$ is considered a safe type.
Note that we use the metavariable $\safet$ to denote safe types.

\paragraph{Conditionals.}
The branches of conditionals must have a safe type.
Otherwise it would not be clear how to type terms such as
\begin{align*}
  M&\equiv \ifc x {(\lambda x\ldotp \sample_\nd)}{(\lambda x\ldotp \sample_\ed+\sample_\ed)}\\
  N&\equiv (\lambda f\ldotp f\,(f\,\sample_\nd))\,M
\end{align*}
because the branches draw a different number of samples from different distributions, and have types $\PCFReal\addtr[\nd]\to\PCFReal$ and $\PCFReal\addtr[\ed,\ed]\to\PCFReal$, respectively.
However, for $M'\equiv \ifc x{\sample_\nd}{\sample_\ed+\sample_\ed}$ we can (safely) type
\begin{align*}
   x:\PCFReal\mid[\nd,\ed,\ed]&\vdash M':\PCFReal\\
   \mid[]&\vdash \lambda x\ldotp M':\PCFReal\addtr[\nd,\ed,\ed]\to\PCFReal\\
   \mid[\nd,\nd,\ed,\ed,\nd,\ed,\ed]&\vdash (\lambda f\ldotp f\,(f\,\sample_\nd))\,(\lambda x\ldotp M'):\PCFReal
\end{align*}


\begin{figure}[t]
\begin{framed}
  \begin{subfigure}{\linewidth}
  \begin{gather*}
    \infer{\iota\sqsubseteq\iota}{}\qquad\infer{\PCFRp\sqsubseteq\PCFReal}{}\qquad
      \infer{(\tau_1\addtr\trt\to\tau_2)\sqsubseteq(\tau_1'\addtr\trt\to\tau_2')}{\tau'_1\sqsubseteq\tau_1&
      \tau_2\sqsubseteq\tau'_2}
  \end{gather*}
  \caption{Subtyping}
  \label{fig:subtyping}
  \end{subfigure}

 \begin{subfigure}{\linewidth}
  \begin{gather*}
    \infer[\Gamma\sqsubseteq\Gamma',
    \tau\sqsubseteq\tau']{\Gamma'\mid\trt\vdash M:\tau'}{\Gamma\mid\trt\vdash M:\tau}\qquad
    \infer{x:\tau\mid[]\vdash x:\tau}{}\\
    \infer[r\in\Real]{\mid[]\vdash \underline r:\PCFReal}{}\qquad
    \infer[r\in\Real_{>0}]{\mid[]\vdash \underline r:\PCFRp}{}\\
    \infer[\circ\in\{+,\cdot\}]{\mid[]\vdash \PCFcirc:\PCFReal\to\PCFReal\to\PCFReal}{}\qquad
    \infer[\circ\in\{+,\cdot\}]{\mid[]\vdash \PCFcirc:\PCFRp\to\PCFRp\to\PCFRp}{}\\
    \infer{\mid[]\vdash \underline-:\PCFReal\to\PCFReal}{}\qquad
    \infer{\mid[]\vdash\PCFinv:\PCFReal_{\mathhighlight{>0}}\to\PCFRp}{}\\
    \infer{\mid[]\vdash \PCFexp:\PCFReal\to\PCFRp}{}\qquad
    \infer{\mid[]\vdash \PCFlog:\PCFReal_{\mathhighlight{>0}}\to\PCFReal}{}\\
    \infer{\Gamma\mid\trt\app\trt'\app\trt''\vdash\ifc LMN:\mathhighlight\safet}{\Gamma\mid\trt\vdash L:\PCFReal&\Gamma\mid\trt'\vdash M:\mathhighlight\safet&\Gamma\mid\trt''\vdash N:\mathhighlight\safet}\qquad
    \infer{\mid[\dist]\vdash\sample_{\mathcal\dist}:\PCFReal}{}\\
    \infer{\Gamma\mid[]\vdash\lambda y\ldotp M:\tau_1\addtr\trt\to\tau_2}{\Gamma,y:\tau_1\mid\trt\vdash M:\tau_2}\qquad
    \infer{\Gamma\mid\trt_1\app\trt_2\app\trt_3\vdash M\,N:\tau_2}{\Gamma\mid\trt_1\vdash M:\tau_1\addtr\trt_3\to\tau_2&\Gamma\mid\trt_2\vdash N:\tau_1}
  \end{gather*}
  \caption{Typing judgments}
  \label{fig:stype}
\end{subfigure}
  \caption{A Basic Trace-based Type System}
\end{framed}
\end{figure}

\begin{example}
  Consider the following terms:
\begin{align*}
  L&\equiv\lambda x\ldotp\sample_\nd+\sample_\nd\\
  M&\equiv\ifc x{(\lambda y\ldotp y+y)\,\sample_\nd}{(\sample_\nd+\sample_\nd)}
\end{align*}
   We can derive the following typing judgements:
  \begin{align*}
    \mid [] &\vdash L:\PCFRp\addtr [\nd,\nd]\to\PCFReal\\
    x : \PCFRp \mid [\nd,\nd,\nd]&\vdash M:\PCFReal\\
    \mid[]&\vdash\lambda x\ldotp M:\PCFRp\addtr[\nd,\nd,\nd]\to\PCFReal\\
    \mid[\nd,\nd,\nd,\nd]&\vdash(\lambda x\ldotp M)\,\sample_\nd:\PCFReal\\
    {} \mid [\nd, \nd] &\vdash (\lambda f\ldotp f \, (f \, 0))\,(\lambda x\ldotp \sample_\nd) : \PCFReal
  \end{align*}
  Note that $\ifc x{(\lambda x\ldotp\sample_\nd)}{(\lambda x\ldotp x)}$ is \emph{not} typable. 
\end{example}

\subsection{Denotational Value Semantics}
\label{sec:densem}

Next, we endow our language with a (measurable) value semantics.
It is well-known that the category of measurable spaces and measurable functions is not cartesian-closed \cite{Aumann61}, which means that there is no interpretation of the lambda calculus as measurable functions.
These difficulties led \cite{Heunen2017c} to develop the category $\mathbf{QBS}$ of \emph{quasi-Borel spaces}. In \cref{app:densem} we recall the definition. Notably, morphisms can be combined piecewisely, which we need for conditionals.

We interpret our programming language in the category $\qbs$ of quasi-Borel spaces.
Types are interpreted as follows: 
\[
  \begin{array}{c}
  \sem{\PCFReal} \defeq (\Real, M_\Real) \qquad
  \sem{\PCFRp} \defeq (\Realp, M_{\Realp}) \qquad
  \sem{[\dist_1,\ldots,\dist_n]} \defeq (\Real, M_\Real)^n\\
  \sem{\tau_1\addtr\trt\to\tau_2} \defeq\sem{\tau_1}\times\sem\trt\Rightarrow\sem{\tau_2}
\end{array}
\]
where $M_\Real$ is the set of measurable functions $\Real \to \Real$; similarly for $M_{\Realp}$.
(As for trace types, we use list notation (and list concatenation) for traces.)

We first define a handy helper function for interpreting application.
For $f:\sem\Gamma\times\Real^{n_1}\Rightarrow\sem{\tau_1\addtr\trt_3\to\tau_2}$ and $g:\sem\Gamma\times\Real^{n_2}\Rightarrow\sem{\tau_1}$ define
\begin{align*}
  f\comp g:\sem\Gamma\times\Real^{n_1+n_2+|\trt_3|}&\Rightarrow\sem{\tau_2}\\
  (\gamma,\tr_1\app\tr_2\app\tr_3)&\mapsto f(\gamma,\tr_1)(g(\gamma,\tr_2),\tr_3)&\tr_1\in\Real^{n_1},\tr_2\in\Real^{n_2},\tr_3\in\Real^{|\trt_3|}
\end{align*}

We interpret terms-in-context, $\sem{\Gamma\mid\trt\vdash M:\tau}:\sem\Gamma\times\sem\trt\to\sem\tau$, as follows: \dw{give all cases} \dw{subtyping}
\begin{align*}
  \sem{\Gamma\mid[\dist]\vdash\sample_\dist:\PCFReal}(\gamma,[s])&\defeq s\\
  \sem{\Gamma\mid[]\vdash\lambda y\ldotp M:\tau_1\addtr\trt\to\tau_2}(\gamma,\etr)&\defeq\\
  &\hspace*{-2cm}
  (v,\tr)\in\sem{\tau_1}\times\sem\trt\mapsto\sem{\Gamma,x:\tau_1\mid\trt\vdash M:\tau_2}((\gamma,v),\tr)\\
  \sem{\Gamma\mid\trt_1\app\trt_2\app\trt_3\vdash M\,N:\tau}&\defeq\\
  &\hspace*{-2cm}
  \sem{\Gamma\mid\trt_1\vdash M:\tau_1\addtr\trt_3\to\tau_2}\comp\sem{\Gamma\mid\trt_2\vdash N:\tau_1}\\
  \sem{\Gamma\mid\trt_1\app\trt_2\app\trt_3\vdash\ifc LMN:\tau}(\gamma,\tr_1\app\tr_2\app\tr_3))\defeq\hspace{-6cm}&\\
  &\hspace*{-2cm}
  \begin{cases}
  \sem{\Gamma\mid\trt_2\vdash M:\tau}(\gamma,\tr_2)&\tif \sem{\Gamma\mid\trt_1\vdash L:\PCFReal}(\gamma,\tr_1)<0\\
  \sem{\Gamma\mid\trt_3\vdash N:\tau}(\gamma,\tr_3)&\tow
\end{cases}
\end{align*}

It is not difficult to see that this interpretation of terms-in-context is well-defined and total.
For the conditional clause, we may assume that the trace type and the trace are presented as partitions $\trt_1 \app  \trt_2 \app  \trt_3$ and $\tr_1 \app  \tr_2 \app  \tr_3$ respectively.
This is justified because it follows from the judgement $\Gamma\mid\trt_1\app\trt_2\app\trt_3\vdash\ifc LMN:\tau$ that $\Gamma \mid \Sigma_1 \vdash L : R$, $\Gamma \mid \Sigma_2 \vdash M : \sigma$ and $\Gamma \mid \Sigma_3 \vdash N : \sigma$ are provable; and we know that each of $\Sigma_1, \Sigma_2$ and $\Sigma_3$ is unique, thanks to \cref{lem:unitt}; their respective lengths then determine the partition $\tr_1 \app  \tr_2 \app  \tr_3$.
Similarly for the application clause, the components $\Sigma_1$ and $\Sigma_2$ are determined by \cref{lem:unitt}, and $\Sigma_3$ by the type of $M$.

%

\subsection{Relation to Operational Semantics}
\label{sec:opsem}
We can also endow our language with a big-step CBV sampling-based semantics similar to \cite{BLGS16,MOPW}, as defined in \cref{fig:ops} of \cref{app:PL}.
We write $M\Downarrow_w^\tr V$ to mean that $M$ reduces to value $V$, which is a real constant or an abstraction, using the execution trace $\tr$ and accumulating weight $w$.

Based on this, we can define the \defn{value-} and \defn{weight}-functions:
\begin{align*}
  \valuefn_M(\tr)&\defeq \begin{cases}
    V&\tif M\Downarrow^\tr_w V\\
    \mathrm{undef}&\tow
\end{cases}
&
\weightfn_M(\tr)&\defeq
\begin{cases}
  w&\tif M\Downarrow^\tr_w V\\
  0&\tow
\end{cases}
\end{align*}

Our semantics is a bit non-standard in that for conditionals we evaluate both branches eagerly. The technical advantage is that for every (closed) term-in-context, ${} \mid [\dist_1, \cdots, \dist_n] \vdash M : \iota$, $M$ reduces to a (unique) value using exactly the traces of the length encoded in the typing, i.e.,~$n$.



So in this sense, the operational semantics is ``total'': there is no divergence.
Notice that there is no partiality caused by partial primitives such as $1/x$, thanks to the typing.

Moreover there is a simple connection to our denotational value semantics:

%
%

\begin{proposition}
  Let ${} \mid[\dist_1,\ldots,\dist_n]\vdash M:\iota$. Then
  \begin{enumerate}
    \item $\dom(\valuefn_M)=\Real^n$
    \item $\underline{\sem M}=\valuefn_M$
    \item $\weightfn_M(\tr)=\prod_{j=1}^n\pdf_{\dist_j}(s_j)$
  \end{enumerate}
\end{proposition}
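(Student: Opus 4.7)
All three claims should follow by a single structural induction on the typing derivation, suitably generalised to open terms so that the induction passes through $\lambda$-abstraction and application. Concretely, I would prove that for every judgement $\Gamma \mid [\dist_1,\ldots,\dist_n] \vdash M : \tau$ and every substitution $\gamma$ of closed values for the free variables of $M$ that respects $\Gamma$, the closed term $M[\gamma]$ (i) reduces operationally on every trace $\tr \in \Real^n$, (ii) agrees with $\sem{M}(\gamma,\cdot)$ under a logical relation indexed by $\tau$, and (iii) produces weight $\prod_{j=1}^n \pdf_{\dist_j}(s_j)$. Restricting $\gamma = ()$ and $\tau = \iota$ then yields the three stated conclusions.

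For claim (1), totality of $\valuefn_M$ on $\Real^n$ hinges on two features of the type system. First, the typings $\mid[]\vdash \PCFinv : \PCFReal_{>0}\to\PCFRp$ and $\mid[]\vdash \PCFlog : \PCFReal_{>0}\to\PCFReal$ guarantee that the partial primitives are applied only to strictly positive arguments, so their operational rules never get stuck. Second, the conditional rule evaluates both branches eagerly, and by \cref{lem:unitt} the trace types $\trt_1,\trt_2,\trt_3$ of the guard and the two branches are unique, fixing a unique partition $\tr = \tr_1 \app \tr_2 \app \tr_3$ of any input trace of length $|\trt_1|+|\trt_2|+|\trt_3|$; by the induction hypothesis each piece drives its subterm to a value. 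The same trace-uniqueness argument handles the application rule.

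For claim (2), the logical relation at base type identifies a numeric value $\underline r$ with $r \in \Real$, and at a function type $\tau_1\addtr\trt\to\tau_2$ it relates an abstraction $V$ with a morphism $f$ whenever, for any related argument--denotation pair and any trace $\tr \in \sem\trt$, the big-step application of $V$ consuming $\tr$ yields a value related to $f$ applied accordingly. The inductive cases for constants, $\sample_\dist$, primitives, abstraction, and application are then direct from the clauses defining $\sem{(-)}$ together with the corresponding rules of \cref{fig:ops}; the conditional case uses that both semantics branch on the sign of $\sem{L}(\gamma,\tr_1)$ and then select the matching subterm. Specialising to a closed term of base type $\iota$ gives $\underline{\sem M} = \valuefn_M$.

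Claim (3) is the simplest: inspecting the operational rules, weight is accumulated multiplicatively only at $\sample_\dist$ leaves, contributing $\pdf_\dist(s)$ for the sample value $s$, while all other constructs propagate the product of sub-weights. The trace-uniqueness from \cref{lem:unitt} ensures that the sample leaves are traversed in the exact order recorded by $\trt$, so the accumulated weight is $\prod_{j=1}^n \pdf_{\dist_j}(s_j)$. The main obstacle is phrasing the logical relation in (2) so that the abstraction and application cases compose cleanly with the trace partitioning; once \cref{lem:unitt} has been established to pin the partitions down, the remaining work is largely bookkeeping through the induction.
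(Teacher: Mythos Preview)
The paper does not actually supply a proof of this proposition: it is stated in \cref{sec:opsem} without argument, and the supplementary material for that section only contains the big-step rules (\cref{fig:ops}), the substitution lemma (\cref{lem:substsimp}), and a brief discussion of subject reduction. So there is nothing to compare against directly.

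That said, your plan is the right one and would go through. Generalising to open terms via closing substitutions and a type-indexed logical relation is the standard route to adequacy for a higher-order language, and it is necessary here: a naive induction on closed terms of base type cannot pass through the application rule, since the operator reduces to an abstraction rather than a numeral. You have also identified the two points where the type system does real work---the restriction of $\PCFinv$ and $\PCFlog$ to $\PCFRp$ for totality, and \cref{lem:unitt} for the unique trace partition at conditionals and applications---both of which the paper explicitly flags as the reason the operational semantics is ``total''. The weight argument in (3) is indeed pure bookkeeping once the trace partition is fixed. The only thing I would add is that the paper's \cref{lem:substsimp} (operational substitution) and \cref{lem:substsem} (denotational substitution) are the off-the-shelf lemmas you will invoke in the application case of your induction, so you may as well cite them rather than re-derive them inside the logical-relation proof.
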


\subsection{Problem Statement}
We are finally ready to formally state our optimisation problem:
%


\begin{problem}
  \label{p:opt}
  $ $ Optimisation
  \normalfont
  \vspace*{-2pt}
  \par\addvspace{.5\baselineskip}
    \noindent
    \begin{tabularx}{\textwidth}{@{\hspace{\parindent}} l X c}
      \textbf{Given:} & term-in-context, $\theta_1 : \iota_1, \cdots, \theta_m : \iota_m \mid [\dist_1,\ldots,\dist_n]\vdash M:\PCFReal$
      \\[2pt]
      \textbf{Find:} & $\text{argmin}_\para \ \E_{s_1\sim\dist_1,\ldots,s_n\sim\dist_n}\left[\sem M(\para,\lat)\right]$
    \end{tabularx}
    \par\addvspace{.5\baselineskip}
\end{problem}


%
%

\section{Smoothed Denotational Value Semantics}
\label{sec:smoothing}

Now we turn to our smoothed denotational value semantics,
which we use to avoid the bias in the reparameterisation gradient estimator.
It is parameterised by a family of smooth functions $\smooth:\Real\to[0,1]$.
Intuitively, we replace the Heaviside step-function arising in the interpretation of conditionals by smooth approximations (cf.~\cref{fig:sigf}). In particular, conditionals $\ifc z{\underline 0}{\underline 1}$ are interpreted as $z\mapsto\smooth(z)$ rather than $[z\geq 0]$ (using Iverson brackets).

Our primary example is $\smooth(x)\defeq\sigma(\frac x\eta)$, where $\sigma$ is the (logistic) sigmoid $\sigma(x)\defeq\frac 1{1+\exp(-x)}$, see \cref{fig:sigf}.
Whilst at this stage no further properties other than smoothness are required, we will later need to restrict $\smooth$ to have good properties, in particular to convergence to the Heaviside step function.

As a categorical model we propose \emph{vector Frölicher spaces} $\VFr$, which (to our knowledge) is a new construction, affording a simple and direct interpretation of the smoothed conditionals.

\subsection{Fr\"olicher Spaces}
We recall the definition of
Fr\"olicher spaces, which generalise smooth spaces\footnote{$C^\infty(\Real,\Real)$ is the set of smooth functions $\Real\to\Real$}:
  A \emph{Fr\"olicher space} is a triple $(X,\mathcal C_X,\mathcal F_X)$ where
     $X$ is a set,
    $\mathcal C_X\subseteq\Set(\Real,X)$ is a set of \emph{curves} and
    $\mathcal F_X\subseteq\Set(X,\Real)$ is a set of \emph{functionals}.
  satisfying
  \begin{enumerate}
    \item if $c\in\mathcal C_X$ and $f\in\mathcal F_X$ then $f\circ c\in C^\infty(\Real,\Real)$
    \item if $c:\Real\to X$ such that for all $f\in\mathcal F_X$, $f\circ c\in C^\infty(\Real,\Real)$ then $c\in\mathcal C_X$
    \item if $f:X\to\Real$ such that for all $c\in\mathcal C_X$, $f \circ c\in C^\infty(\Real,\Real)$ then $f\in\mathcal F_X$.
  \end{enumerate}
  A \emph{morphism} between Fr\"olicher spaces $(X,\mathcal C_X,\mathcal F_X)$ and $(Y,\mathcal C_Y,\mathcal F_Y)$ is a map $\phi\from X\to Y$ satisfying $f\circ\phi\circ c\in C^\infty(\Real,\Real)$ for all $f\in\mathcal F_Y$ and $c\in\mathcal C_X$.

Fr\"olicher spaces and their morphisms constitute a category $\Fr$, which is well-known to be cartesian closed \cite{FK88,S11}.

\subsection{Vector Frölicher Spaces}
To interpret our programming language smoothly we would like to interpret conditionals as $\smooth$-weighted convex combinations of its branches:
\begin{align}
  \sema{\ifc LMN}(\gamma,\tr_1\app\tr_2\app\tr_3)&\defeq\notag\\
  &\hspace*{-5cm}\smooth(-\sema{L}(\gamma,\tr_1))\cdot\sema{M}(\gamma,\tr_2)+\smooth(\sema{L}(\gamma,\tr_1))\cdot\sema{N}(\gamma,\tr_3)
  \label{eq:smcond}
\end{align}
By what we have discussed so far, this only makes sense if the branches have ground type because Fr\"olicher spaces are not equipped with a vector space structure but we take weighted combinations of morphisms. In particular if $\phi_1,\phi_2:X\to Y$ and $\alpha:X\to\Real$ are morphisms then $\alpha \, \phi_1+\phi_2$ ought to be a morphism too.
Therefore, we enrich Fr\"olicher spaces with an additional vector space structure:
\begin{definition}
  A $\Real$-\emph{vector Fr\"olicher space} is a Fr\"olicher space $(X,\mathcal C_X,\mathcal F_X)$ such that $X$ is an $\Real$-vector space and whenever $c,c'\in\mathcal C_X$ and $\alpha \in C^\infty(\Real,\Real)$ then $\alpha \, c+c'\in\mathcal C_X$ (defined pointwise).

  A \emph{morphism} between $\Real$-\emph{vector Fr\"olicher spaces} is a morphism between Fr\"olicher spaces, i.e.\ $\phi:(X,\mathcal C_X,\mathcal F_X)\to (Y,\mathcal C_Y,\mathcal F_Y)$ is a morphism if for all $c\in\mathcal C_X$ and $f\in\mathcal F_Y$, $f\circ\phi\circ c\in C^\infty(\Real,\Real)$.
\end{definition}
$\Real$-\emph{vector Fr\"olicher space} and their morphisms constitute a category $\VFr$. There is an evident forgetful functor fully faithfully embedding $\VFr$ in $\Fr$.
Note that the above restriction is a bit stronger than requiring that $\mathcal C_X$ is also a vector space. ($\alpha$ is not necessarily a constant.)
The main benefit is the following, which is crucial for the interpretation of conditionals as in \cref{eq:smcond}:
\begin{lemma}
   If $\phi_1,\phi_2\in\VFr(X,Y)$ and $\alpha\in\VFr(X,\Real)$ then $\alpha \, \phi_1+\phi_2\in\VFr(X,Y)$ (defined pointwisely).
\end{lemma}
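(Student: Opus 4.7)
The plan is a direct verification against the definition of a $\VFr$-morphism: unfold what it means for $\alpha \, \phi_1 + \phi_2$ to send every curve of $X$ into a curve of $Y$ (after postcomposition with any functional of $Y$), and assemble this from the two hypotheses plus the extra closure axiom that distinguishes $\VFr$ from $\Fr$.

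First I would fix an arbitrary curve $c \in \mathcal C_X$ and aim to show that the composite $(\alpha \, \phi_1 + \phi_2) \circ c : \Real \to Y$ lies in $\mathcal C_Y$. By Fr\"olicher axiom (2) applied to $Y$, it suffices to check that $f \circ (\alpha \, \phi_1 + \phi_2) \circ c \in C^\infty(\Real,\Real)$ for every $f \in \mathcal F_Y$; once this is established, the conclusion that $\alpha \, \phi_1 + \phi_2$ is a $\VFr$-morphism is immediate from the definition of morphism.

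The key intermediate step is to observe that $\phi_1 \circ c$ and $\phi_2 \circ c$ are themselves curves in $Y$. Indeed, $\phi_i$ being a morphism gives $f \circ \phi_i \circ c \in C^\infty(\Real,\Real)$ for every $f \in \mathcal F_Y$, so Fr\"olicher axiom (2) promotes $\phi_i \circ c$ to an element of $\mathcal C_Y$. Similarly, since $\alpha : X \to \Real$ is a morphism and $\Real$ carries its canonical Fr\"olicher structure (curves $= C^\infty(\Real,\Real)$), we get $\alpha \circ c \in C^\infty(\Real,\Real)$.

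At this point the vector-space closure axiom for $\mathcal C_Y$ does the work: given $\phi_1 \circ c, \phi_2 \circ c \in \mathcal C_Y$ and the smooth scalar $\alpha \circ c \in C^\infty(\Real,\Real)$, the pointwise combination $(\alpha \circ c)\cdot(\phi_1 \circ c) + (\phi_2 \circ c)$ lies in $\mathcal C_Y$. But this combination is exactly $(\alpha \, \phi_1 + \phi_2) \circ c$. Finally, postcomposing by any $f \in \mathcal F_Y$ and invoking Fr\"olicher axiom (1) gives smoothness of $f \circ (\alpha \, \phi_1 + \phi_2) \circ c$, completing the verification. There is no serious obstacle here: the definition of $\Real$-vector Fr\"olicher space was engineered precisely so that scalar-multiplication-by-a-smooth-function is respected by $\mathcal C_Y$, and the lemma just threads this axiom through the standard curve/functional duality.
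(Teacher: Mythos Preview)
Your proof is correct and follows essentially the same route as the paper: show that for any $c\in\mathcal C_X$ the composite $(\alpha\,\phi_1+\phi_2)\circ c=(\alpha\circ c)\cdot(\phi_1\circ c)+(\phi_2\circ c)$ lies in $\mathcal C_Y$ by the vector-Fr\"olicher closure axiom, then conclude. Your invocation of axiom~(2) to reduce to checking smoothness after postcomposition with $f\in\mathcal F_Y$ is a harmless detour---you end up establishing membership in $\mathcal C_Y$ directly via the closure axiom anyway, which is exactly what the paper does in one line.
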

\begin{proof}
  Suppose $c\in\mathcal C_X$ and $f\in\mathcal F_Y$. Then $(\alpha_1 \, \phi_1+\phi_2)\circ c=(\alpha\circ c)\cdot(\phi_1\circ c)+(\phi_2\circ c)\in\mathcal C_Y$ (defined pointwisely) and the claim follows.
\end{proof}

Similarly as for Fr\"olicher spaces, if $X$ is an $\Real$-vector space then any $\mathcal C\subseteq\Set(X,\Real)$ \emph{generates} a $\Real$-vector Fr\"olicher space $(X,\mathcal C_X,\mathcal F_X)$, where
\begin{align*}
  \mathcal F_X&\defeq\{f:X\to\Real\mid\forall c\in\mathcal C\ldotp f\circ c\in C^\infty(\Real,\Real)\}\\
  \widetilde{\mathcal C}_X&\defeq\{c:\Real\to X\mid\forall f\in\mathcal F_X\ldotp f\circ c\in C^\infty(\Real,\Real)\}\\
  \mathcal C_X&\defeq\left\{\sum_{i=1}^n\alpha_i \, c_i\mid n\in\nat,\forall i\leq n\ldotp \alpha_i\in C^\infty(\Real,\Real), c_i\in\widetilde{\mathcal C}_X\right\}
\end{align*}
Having modified the notion of Frölicher spaces generated by a set of curves, the proof for cartesian closure carries over (more details are provided in \cref{app:smooth}) and we conclude:
\begin{restatable}{proposition}{vfrccc}
  $\VFr$ is cartesian closed.
\end{restatable}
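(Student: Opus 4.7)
The plan is to exhibit a terminal object, binary products, and exponentials, following the pattern for $\Fr$ but with careful attention to the vector space structure and to the modified definition of curves in $\VFr$ (which is closed under $\alpha \, c + c'$ rather than merely under constant linear combinations).

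For the terminal object, I would take the zero-dimensional vector space $\{0\}$ with all maps as both curves and functionals; all axioms, including the vector-curve closure, hold trivially. For products, given $X, Y \in \VFr$, I would equip $X \times Y$ with componentwise vector space operations and the standard Fr\"olicher product structure (so curves are pairs $(c_1, c_2)$ with $c_i \in \mathcal C_{X_i}$). Closure under $\alpha \, c + c'$ holds componentwise by the corresponding property in $X$ and $Y$, hence $X \times Y \in \VFr$, and the universal property is inherited from $\Fr$ via the fully faithful forgetful functor.

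For exponentials, I would set $Y^X \defeq \VFr(X, Y)$ with pointwise vector space operations; this is well-defined by the preceding lemma showing $\alpha \, \phi_1 + \phi_2 \in \VFr(X, Y)$ whenever $\alpha \in \VFr(X, \Real)$ and $\phi_1, \phi_2 \in \VFr(X, Y)$. I would then equip $Y^X$ with the $\VFr$-structure generated (via the procedure stated just before the proposition) by the set of curves
\[
  \mathcal C \defeq \{c : \Real \to Y^X \mid \text{uncurry}(c) \in \VFr(\Real \times X, Y)\}.
\]
The adjunction $\VFr(Z \times X, Y) \cong \VFr(Z, Y^X)$ is established by restricting the standard set-theoretic curry/uncurry bijection, and verifying that each direction preserves morphisms: given a $\VFr$-morphism $\phi : Z \times X \to Y$, the curried map $\hat\phi : Z \to Y^X$ sends any curve in $Z$ to an element of $\mathcal C$ (since its uncurry factors through $\phi$ and hence is a morphism); conversely, given $\psi : Z \to Y^X$ in $\VFr$, one chases a curve in $Z$ and a curve in $X$ through the generation procedure on $Y^X$ to conclude that the uncurry lies in $\VFr(Z \times X, Y)$.

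The main obstacle is ensuring that the closure step $\sum_i \alpha_i \, c_i$ in the definition of $\mathcal C_{Y^X}$ does not produce curves whose uncurries leave $\VFr(\Real \times X, Y)$. The key calculation is that if $c_1, \ldots, c_n \in \mathcal C$ with uncurries $\bar c_i \in \VFr(\Real \times X, Y)$ and $\alpha_i \in C^\infty(\Real, \Real)$, then the uncurry of $\sum_i \alpha_i \, c_i$ equals $\sum_i (\alpha_i \circ \pi_1) \cdot \bar c_i$, and iterated application of the lemma on $\alpha \, \phi_1 + \phi_2$ (with $\alpha_i \circ \pi_1 \in \VFr(\Real \times X, \Real)$) shows this is again a $\VFr$-morphism. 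Once this compatibility is in place, the evaluation map is a morphism, naturality of currying is routine, and the triangle identities carry over verbatim from the cartesian-closed structure of $\Fr$.
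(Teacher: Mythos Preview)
Your proposal is correct and follows essentially the same route as the paper: the terminal object is the zero vector space, products are generated from componentwise pairs of curves, and the exponential is the hom-set $\VFr(X,Y)$ equipped with the $\VFr$-structure generated by curves whose uncurry lies in $\Fr(\Real\times X,Y)$. The only organisational difference is that the paper abstracts your closure-step calculation into a standalone remark (to verify that a map out of a generated $\VFr$-space is a morphism it suffices to test against the generating curves) and then applies it to show that $\mathrm{eval}$ is a morphism, whereas you unfold that verification explicitly via the lemma on $\alpha\,\phi_1+\phi_2$.
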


\subsection{Smoothed Interpretation}
We have now discussed all ingredients to interpret our language (smoothly) in the cartesian closed category $\VFr$.
We call $\sema M$ the $\eta$-\defn{smoothing} of $\sem M$ (or of $M$, by abuse of language).
 The interpretation is mostly standard and follows \cref{sec:densem}, except for the case for conditionals.
 The latter is given by \cref{eq:smcond}, for which the additional vector space structure is required. \dw{details in appendix}

Finally, we can phrase a smoothed version of our Optimisation \cref{p:opt}:
%


\begin{problem}
  \label{p:optsm}
  $ $ $\eta$-Smoothed Optimisation
  \normalfont
  \vspace*{-2pt}
  \par\addvspace{.5\baselineskip}
    \noindent
    \begin{tabularx}{\textwidth}{@{\hspace{\parindent}} l X c}
      \textbf{Given:} & term-in-context, $\theta_1 : \iota_1, \cdots, \theta_m : \iota_m \mid [\dist_1,\ldots,\dist_n]\vdash M:\PCFReal$, and \emph{accuracy} coefficient $\eta>0$
      \\[2pt]
      \textbf{Find:} & $\text{argmin}_\para \ \E_{s_1\sim\dist_1,\ldots,s_n\sim\dist_n}\left[\sema M(\para,\lat)\right]$
    \end{tabularx}
    \par\addvspace{.5\baselineskip}
\end{problem}



\section{Correctness of SGD for Smoothed Problem and Unbiasedness of the Reparameterisation Gradient}
\label{sec:sgd}

Next, we apply stochastic gradient descent (SGD) with the reparameterisation gradient estimator to the smoothed problem (for the batch size $N=1$):
\begin{align}
  \label{eq:sgd}
  \para_{k+1}&\defeq\para_k-\gamma_k\cdot\nabla_\theta\sema M\left(\para_k,\lat_k\right) &\lat_k\sim\mdist
\end{align}
where $\para\mid[\lat\sim\mdist]\vdash M:\PCFReal$ (slightly abusing notation in the trace type).

A classical choice for the step-size sequence is $\gamma_k\in\Theta(1/k)$, which satisfies the so-called \emph{Robbins-Monro} criterion:
\begin{align}
  \sum_{k\in\nat}\gamma_k&=\infty&\sum_{k\in\nat}\gamma^2_k&<\infty\label{eq:rm}
\end{align}

In this section we wish to establish the correctness of the SGD procedure applied to the smoothing \cref{eq:sgd}.

\subsection{Desiderata}
\label{sec:desi}
First, we ought to take a step back and observe that the optimisation problems we are trying to solve can be ill-defined due to a failure of integrability:
take $M\equiv(\lambda x\ldotp \PCFexp\,(x\infix\PCFmul x))\,\sample_\nd$: we have $\E_{z\sim\nd}[\sem M(z)]=\infty$, independently of parameters. Therefore, we aim to guarantee:
  \begin{enumerate}[label=(SGD\arabic*)]
    \setcounter{enumi}{-1}
    \setlength{\itemindent}{3em}
    \item\label{it:int} The optimisation problems (both smoothed and unsmoothed) are well-defined.
  \end{enumerate}

 Since $\E[\sema M(\para,\lat)]$ (and $\E[\sem M(\para,\lat)]$) may not be a convex function in the parameters $\para$, we cannot hope to always find global optima.
  We seek instead \emph{stationary points}, where the gradient w.r.t.~the parameters $\para$ vanishes.
The following results (whose proof is standard) provide sufficient conditions for the convergence of SGD to stationary points (see e.g.\ \cite{BT00} or \cite[Chapter~2]{B15}):
\begin{restatable}[Convergence]{proposition}{csgd}
  \label{prop:csgd}
  Suppose $(\gamma_k)_{k\in\nat}$ satisfies the Robbins-Monro criterion \cref{eq:rm} and
  $g(\para)\defeq\E_{\lat}[f(\para,\lat)]$
  is well-defined. If $\parasp\subseteq\Real^m$ satisfies

  \begin{enumerate}[label=(SGD\arabic*)]
    \setlength{\itemindent}{3em}
    \item\label{it:sgdunb} \textit{Unbiasedness:} $\nabla_\para g(\para)=\E_\lat[\nabla_\para f(\para,\lat)]$ for all $\para\in\parasp$
    \item\label{it:sgdlip} $g$ is $L$-Lipschitz smooth on $\parasp$ for some $L>0$:
    \[
    \|\nabla_\para g(\para)-\nabla_\para g(\para')\|\leq L\cdot \|\para-\para'\|\qquad \text{for all }\para,\para'\in\parasp
    \]
    \item\label{it:sgdbv} \textit{Bounded Variance:} $\sup_{\para\in\parasp}\E_{\lat}[\|\nabla_\para f_k(\para,\lat)\|^2]<\infty$
  \end{enumerate}
  then \emph{$\inf_{i\in\nat}\E[\|\nabla g(\para_i)\|^2]=0$} or $\para_i\not\in\parasp$ for some $i\in\nat$.
\end{restatable}
Unbiasedness \ref{it:sgdunb} requires commuting differentiation and integration. The validity of this operation can be established by the dominated convergence theorem \cite[Theorem 6.28]{K13}, see \cref{app:desi}. To be applicable the partial derivatives of $f$ w.r.t.\ the parameters need to be dominated uniformly by an integrable function.
Formally:
\begin{definition}
  Let $f:\parasp\times\Real^n\to\Real$ and $g:\Real^n\to\Real$. We say that $g$ \emph{uniformly dominates} $f$ if for all $(\para,\lat)\in\parasp\times\Real^n$, $|f(\para,\lat)|\leq g(\lat)$.
\end{definition}

Also note that for Lipschitz smoothness \ref{it:sgdlip} it suffices to uniformly bound the second-order partial derivatives.


In the remainder of this section we present two type systems which restrict the language to guarantee properties~\ref{it:int} to \ref{it:sgdbv}.

\subsection{Piecewise Polynomials and Distributions with Finite Moments}
\label{sec:poly}

As a first illustrative step we consider a type system $\polyj$, which restricts terms to (piecewise) polynomials,
and distributions with finite moments. Recall that a distribution $\dist$ has (all) \emph{finite moments} if for all $p\in\nat$, $\E_{s\sim\dist}[|s|^p]<\infty$.
Distributions with finite moments include the following commonly used distributions: normal, exponential, logistic and gamma distributions. A non-example is the Cauchy distribution, which famously does not even have an expectation.

\dw{terminology?}
\begin{definition}
  \changed[dw]{For a distribution $\dist$ with finite moments,}
  $f:\Real^n\to\Real$ has (all) \emph{finite moments} if for all $p\in\nat$, $\E_{\lat\sim\dist}[|f(\lat)|^p]<\infty$.
\end{definition}
Functions with finite moments have good closure properties:
\begin{restatable}{lemma}{absmom}
  \label{lem:absmom}
  If $f,g:\Real^n\to\Real$ have (all) finite moments so do $-f,f+g,f\cdot g$.
\end{restatable}
In particular, if a distribution has finite moments then polynomials do, too.
Consequently, intuitively, it is sufficient to simply (the details are explicitly spelled out in \cref{app:poly}):
\begin{enumerate}
  \item require that the distributions $\dist$ in the sample rule have finite moments:
  \begin{gather*}
    \infer[\dist\text{ has finite moments}]{\mid[\dist]\polyj\sample_\dist:\PCFReal}{}
  \end{gather*}
  \item remove the rules for $\PCFinv$, $\PCFexp$ and $\PCFlog$
   from the type system $\polyj$.
\end{enumerate} 


\subsubsection{Type Soundness I: Well-Definedness.}
Henceforth, we fix parameters $\theta_1:\iota_1,\ldots,\theta_m:\iota_m$.
Intuitively, it is pretty obvious that $\sem M$ is a piecewise polynomial whenever $\para\mid\trt\polyj M:\iota$. Nonetheless, we prove the property formally to illustrate our proof technique, a variant of logical relations, employed throughout the rest of the paper.

We define a slightly stronger logical predicate $\ppred^{(n)}_\tau$ on $\parasp\times\Real^n\to\sem\tau$, which allows us to obtain a uniform upper bound:
\begin{enumerate}
  \item $f\in\ppred^{(n)}_\iota$ if $f$ is uniformly dominated by a function with finite moments
  \item $f\in\ppred^{(n)}_{\tau_1\addtr\trt_3\to\tau_2}$ if for all $n_2\in\nat$ and $g\in\ppred^{(n+n_2)}_{\tau_1}$, $f\lrcomp g\in\ppred^{(n+n_2+|\trt_3|)}_{\tau_2}$
\end{enumerate}
where for $f:\parasp\times\Real^{n_1}\to\sem{\tau_1\addtr\trt_3\to\tau_2}$ and $g:\parasp\times\Real^{n_1+n_2}\to\sem{\tau_1}$ we define
\begin{align*}
  f\lrcomp g:\parasp\times\Real^{n_1+n_2+|\trt_3|}&\to\tau_2\\
  (\para,\tr_1\app\tr_2\app\tr_3)&\mapsto f(\para,\tr_1)(g(\para,\tr_1\app\tr_2),\tr_3)
\end{align*}
Intuitively, $g$ may depend on the samples in $\tr_2$ (in addition to $\tr_1$) and the function application may consume further samples $\tr_3$ (as determined by the trace type $\trt_3$).
By induction on safe types we prove the following result, which is important for conditionals:
\begin{lemma}
  \label{lem:ppredcond}
   If $f\in\ppred^{(n)}_\iota$ and $g,h\in\ppred^{(n)}_\safet$ then $[f(-)<0]\infix\cdot g+[f(-)\geq 0]\infix\cdot h\in\ppred^{(n)}_\safet$. \dw{this abuses notation. is it still understandable?}
\end{lemma}
\begin{proof}
  For base types it follows from \cref{lem:absmom}.
  Hence, suppose $\safet$ has the form $\safet_1\addtr[]\to\safet_2$. Let $n_2\in\nat$ and $x\in\ppred_{\safet_1}^{n+n_2}$. By definition, $(g\lrcomp x),(h\lrcomp x)\in\ppred_{\safet_2}^{(n+n_2)}$.
  Let $\widehat f$ be the extension (ignoring the additional samples) of $f$ to $\parasp\times\Real^{n+n_2}\to\Real$. It is easy to see that also $\widehat f\in\ppred_\iota^{(n+n_2)}$
  By the inductive hypothesis,
  \begin{align*}
    [\widehat f(-)<0]\cdot (g\lrcomp x)+[\widehat f(-)\geq 0]\cdot (h\lrcomp x)\in\ppred^{(n+n_2)}_{\safet_2}
  \end{align*} Finally, by definition,
  \begin{align*}
    ([f(-)<0]\cdot g+[f(-)\geq 0]\cdot h)\lrcomp x=[\widehat f(-)<0]\cdot (g\lrcomp x)+[\widehat f(-)\geq 0]\cdot (h\lrcomp x)
  \end{align*}
\end{proof}

\begin{assumption}
  \label{ass:comp}
  We assume that $\parasp\subseteq\sem{\iota_1}\times\cdots\times\sem{\iota_m}$ is compact.
\end{assumption}
\begin{restatable}[Fundamental]{lemma}{fundpolyf}
  \label{lem:fundpoly1}
  If $\para,x_1:\tau_1,\ldots,x_\ell:\tau_\ell\mid\trt\polyj M:\tau$, $n\in\nat$, $\xi_1\in\ppred^{(n)}_{\tau_1},\ldots,\xi_\ell\in\ppred^{(n)}_{\tau_\ell}$ then
  $\sem M\flcomp\langle\xi_1,\ldots,\xi_\ell\rangle\in\ppred^{(n+|\trt|)}_{\tau}$, where
  \begin{align*}
    \sem M\flcomp\langle\xi_1,\ldots,\xi_\ell\rangle:\parasp\times\Real^{n+|\trt|}&\to\sem\tau\\
    (\para,\tr\app\tr')&\mapsto \sem M((\para, \xi_1(\para,\tr),\ldots,\xi_\ell(\para,\tr)),\tr')
  \end{align*}
\end{restatable}
It is worth noting that, in contrast to more standard fundamental lemmas, here we need to capture the dependency of the free variables on some number $n$ of further samples.
E.g.\ in the context of $(\lambda x\ldotp x)\,\sample_\nd$ the subterm $x$ depends on a sample although this is not apparent if we consider $x$ in isolation.

\cref{lem:fundpoly1} is proven by structural induction (cf.~\cref{app:poly} for details). The most interesting cases include: parameters, primitive operations and conditionals. In the case for parameters we exploit the compactness of $\parasp$ (\cref{ass:comp}). For primitive operations we note that as a consequence of \cref{lem:absmom} each $\ppred_{\iota}^{(n)}$ is closed under negation\footnote{for $\iota=\PCFReal$}, addition and multiplication. Finally, for conditionals we exploit \cref{lem:absmom}.

\dw{conclude integrability and show Schwartz}

\subsubsection{Type Soundness II: Correctness of SGD.}
Next, we address the integrability for the \emph{smoothed} problem as well as \ref{it:sgdunb} to \ref{it:sgdbv}.
We establish that not only $\sema M$ but also its partial derivatives up to order 2 are uniformly dominated by functions with finite moments. For this to possibly hold we require:
\begin{assumption}
  \label{ass:sig1}
  For every $\eta>0$,
  \begin{align*}
    \sup_{x\in\Real}|\smooth(x)|&<\infty&
    \sup_{x\in\Real}|\smooth'(x)|&<\infty&
    \sup_{x\in\Real}|\smooth''(x)|&<\infty
  \end{align*}
\end{assumption}
Note that, for example, the logistic sigmoid satisfies \cref{ass:sig1}.

We can then prove a fundamental lemma similar to \cref{lem:fundpoly1}, \textit{mutatis mutandis}, using a logical predicate in $\VFr$. We stipulate
$f\in\qpred^{(n)}_{\iota}$ if its partial derivatives up to order 2 are uniformly dominated by a function with finite moments.
In addition to \cref{lem:absmom} we exploit standard rules for differentiation (such as the sum, product and chain rule) as well as \cref{ass:sig1}.  We conclude:

\begin{proposition}
  If $\para\mid\trt\polyj M:\PCFReal$ then the partial derivatives up to order 2 of $\sema M$ are uniformly dominated by a function with all finite moments.
\end{proposition}
Consequently, the Smoothed Optimisation \cref{p:optsm} is not only well-defined but, by the dominated convergence theorem \cite[Theorem 6.28]{K13}, the reparameterisation gradient estimator is unbiased. Furthermore, \ref{it:sgdunb} to \ref{it:sgdbv} are satisfied and SGD is correct.

\paragraph{Discussion.}
The type system $\polyj$ is simple yet guarantees correctness of SGD.
However, it is somewhat restrictive; in particular, it does not allow the expression of many ELBOs arising in variational inference directly as they often have the form of logarithms of exponential terms (cf.\ \cref{ex:varinf}).

\subsection{A Generic Type System with Annotations}
\label{sec:gentype}
Next, we present a generic type system with annotations.
In \cref{sec:int} we give an instantiation to make $\polyj$ more permissible and in \cref{sec:conv} we turn towards a different property: the uniform convergence of the smoothings.

Typing judgements have the form $\Gamma\mid\trt\genj M:\tau$, where ``$?$'' indicates the property we aim to establish, and we annotate base types. Thus, types are generated from
\begin{align*}
  &\textbf{trace types}&\trt&::=[s_1\sim\dist_1,\ldots,s_n\sim\dist_n]\\
  &\textbf{base types}&\iota&::=\PCFReal\mid\PCFRp\\
  &\textbf{safe types}&\safet&::=\iota^{\mathhighlight{\sann}}\mid\safet\addtr[]\to\safet\\
  &\textbf{types}&\tau&::=\iota^{\mathhighlight{\ann}}\mid\tau\addtr\trt\to\tau
\end{align*}
\dw{this may cause problems with name generation: $(\lambda f\ldotp f\,(f\,0))\,(\lambda x\ldotp\sample)$, but that's not so much of a problem (for soundness) because naming samples the same is safe}
Annotations are drawn from a set and
may possibly restricted for safe types. Secondly, the trace types are now annotated with variables, typically $\trt = [s_1\sim\dist_1,\ldots,s_n\sim\dist_n]$ where the variables $s_j$ are pairwise distinct.

For the subtyping relation we can constrain the annotations at the base type level (see \cref{fig:gensubtyping});
 the extension to higher types is accomplished as before.

The typing rules have the same form but they are extended with the annotations on base types and side conditions possibly constraining them.
For example, the rules for addition, exponentiation and sampling are modified as follows:
\begin{gather*}
  \infer[\mathhighlight{\text{(cond.\ Add)}}]{\mid[]\genj\PCFplus:\iota^{\ann_1}\to\iota^{\ann_2}\to\iota^{\ann}}{}\qquad
  \infer[\mathhighlight{\text{(cond.\ Exp)}}]{\mid[]\genj\PCFexp:\PCFReal^{\ann}\to\PCFRp^{\ann'}}{}\\
  \infer[\mathhighlight{\text{(cond.\ Sample)}}]{\mid[s_j\sim\dist]\genj\sample_\dist:\PCFReal^{\ann}}{}
\end{gather*}
The rules for subtyping, variables, abstractions and applications do not need to be changed at all but they use annotated types instead of the types of \cref{sec:basic type system}.
\begin{gather*}
  \infer[\Gamma\sqsubseteq_?\Gamma',\tau\sqsubseteq_?\tau']{\Gamma'\mid\trt\genj M:\tau'}{\Gamma\mid\trt\genj M:\tau}
  \qquad
  \infer{x:\tau \mid [] \genj x:\tau}{}\\
  \infer{\Gamma\mid[]\genj\lambda y\ldotp M:\tau_1\addtr\trt\to\tau_2}{\Gamma,y:\tau_1\mid\trt\genj M:\tau_2}\qquad
  \infer{\Gamma\mid\trt_1\app\trt_2\app\trt_3\genj M\,N:\tau_2}{\Gamma\mid\trt_2\genj M:\tau_1\addtr\trt_3\to\tau_2&\Gamma\mid\trt_1\genj N:\tau_1}
\end{gather*}
The full type system is presented in \cref{app:gentype}.

\begin{table}[t]
  \caption{Overview of type systems in this paper.}
  \label{tab:type}
  \centering
  \small
  \begin{tabular}{llcc}
    \toprule
    property & Section & judgement & annotation  \\
    \midrule
    totality & \cref{sec:basic type system} & $\vdash$ & -- \\
    \multirow{2}{*}{correctness SGD} & \cref{sec:poly}  & $\polyj$ & none/$\ast$  \\
    & \cref{sec:int} & $\intj$ & $0/1$\\
    uniform convergence & \cref{sec:gs} & $\gsj$ & $(\false,\dep)/(\true,\dep)$ \\
    \bottomrule
  \end{tabular}
\end{table}

$\polyj$ can be considered a special case of $\genj$ whereby we use the singleton $\ast$ as annotations, a contradictory side condition (such as $\mathrm{false}$) for the undesired primitives $\PCFinv$, $\PCFexp$ and $\PCFlog$, and use the side condition ``$\dist$ has finite moments'' for sample as above.

\cref{tab:type} provides an overview of the type systems of this paper and their purpose.
$\genj$ and its instantiations refine the basic type system of \cref{sec:basic type system} in the sense that if a term-in-context is provable in the annotated type system, then its erasure (i.e.~erasure of the annotations of base types and distributions) is provable in the basic type system.
This is straightforward to check.

\subsection{A More Permissible Type System}
\label{sec:int}
In this section we discuss another instantiation, $\intj$, of the generic type system system to guarantee \ref{it:int} to \ref{it:sgdbv}, which is more permissible than $\polyj$. In particular, we would like to support \cref{ex:varinf}, which uses logarithms and densities involving exponentials. Intuitively, we need to ensure that subterms involving $\PCFexp$ are ``neutralised'' by a corresponding $\PCFlog$. To achieve this we annotate base types with $0$ or $1$, ordered discretely. $0$ is the only annotation for safe base types and can be thought of as ``integrable''; $1$ denotes ``needs to be passed through log''.  More precisely, we constrain the typing rules such that if $\para\mid\trt\intj M:\iota^{(e)}$ then\footnote{using the convention $\log^0$ is the identity} $\log^e\circ\sem M$ and the partial derivatives of $\log^e\circ\sema M$ up to order 2 are uniformly dominated by a function with finite moments.


We subtype base types as follows: $\iota_1^{(e_1)}\sqsubseteq_{\mathrm{SGD}}\iota_2^{(e_2)}$ if $\iota_1\sqsubseteq\iota_2$ (as defined in \cref{fig:subtyping}) and $e_1=e_2$, or $\iota_1=\PCFRp=\iota_2$ and $e_1\leq e_2$. The second disjunct may come as a surprise but we ensure that terms of type $\PCFRp^{(0)}$ cannot depend on samples at all.

In \cref{fig:intjb} we list the most important rules; we relegate the full type system to \cref{app:int}.
\begin{figure}[t]
    \begin{gather*}
      \infer{\mid[]\intj\PCFexp:\PCFReal^{(0)}\to\PCFRp^{(1)}}{}\qquad
      \infer{\mid[]\intj\PCFlog:\PCFRp^{(e)}\to\PCFReal^{(0)}}{}\\
      \infer{\mid[]\intj\PCFplus:\iota^{(0)}\to\iota^{(0)}\to\iota^{(0)}}{}\qquad
      \infer{\mid[]\intj\PCFmul:\iota^{(e)}\to\iota^{(e)}\to\iota^{(e)}}{}\\
      \infer{\mid[]\intj\PCFmin:\PCFReal^{(0)}\to\PCFReal^{(0)}}{}\qquad
      \infer{\mid[]\intj\PCFinv:\PCFRp^{(e)}\to\PCFRp^{(e)}}{}\\
      \infer{\Gamma\mid\trt\app\trt'\app\trt''\intj\ifc LMN:\mathhighlight\safet}{\Gamma\mid\trt\intj L:\mathhighlight{\iota^{(0)}}&\Gamma\mid\trt'\intj M:\mathhighlight\safet&\Gamma\mid\trt''\intj N:\mathhighlight\safet}\\
      \infer[\dist\text{ has finite moments}]{\mid[s_j\sim\dist]\intj\sample_\dist:\PCFReal^{(0)}}{}
    \end{gather*}
    \caption{Excerpt of the typing rules (cf.~\cref{app:int}) for the correctness of SGD.}
    \label{fig:intjb}
\end{figure}
$\PCFexp$ and $\PCFlog$ increase and decrease the annotation respectively.
The rules for the primitive operations and conditionals are motivated by the closure properties of \cref{lem:absmom} and the elementary fact that
$\log\circ (f\cdot g)=(\log\circ f)+(\log\circ g)$ and $\log\circ (f^{-1})=-\log\circ f$
for $f,g:\parasp\times\Real^n\to\Real$.
\begin{example}
  $\theta:\PCFRp^{(0)}\mid[\nd,\nd]\intj\PCFlog\,(\theta^{-1}\cdot\PCFexp\,(\sample_\nd))+\sample_\nd:\PCFReal^{(0)}$
\end{example}
Note that the branches of conditionals need to have safe type, which rules out branches with type $\PCFReal^{(1)}$. This is because logarithms do not behave nicely when composed with addition as used in the smoothed interpretation of conditionals.

Besides, observe that in the rules for logarithm and inverses $e=0$ is allowed, which may come as a surprise\footnote{Recall that terms of type $\PCFRp^{(0)}$ cannot depend on samples.}.
This is e.g.\  necessary for the typability of the variational inference \cref{ex:varinf}:
\begin{example}[Typing for Variational Inference]
  \label{ex:varinftype}
It holds $\mid[]\vdash N:\PCFReal^{(0)}\to\PCFReal^{(0)}\to\PCFRp^{(0)}\to\PCFRp^{(1)}$ and $\theta:\PCFReal^{(0)}\mid[s_1\sim\nd]\vdash M:\PCFReal^{(0)}$.
\end{example}

\subsubsection{Type Soundness.}
To formally establish type soundness, we can use a logical predicate, which is very similar to the one in \cref{sec:poly} (N.B.\ the additional \cref{it:logpredc}):
in particular
$f\in\qpred^{(n)}_{\iota^{(e)}}$ if
  \begin{enumerate}
    \item partial derivatives of $\log^e\circ f$ up to order 2 are uniformly dominated by a function with finite moments
    \item\label{it:logpredc} if $\iota^{(e)}$ is $\PCFRp^{(0)}$ then $f$ is dominated by a positive constant function
  \end{enumerate}

Using this and a similar logical predicate for $\sem{(-)}$ we can show:
\begin{proposition}
  If $\theta_1:\iota^{(0)},\ldots,\theta_m:\iota_m^{(0)}\mid\trt\intj M:\iota^{(0)}$ then
  \begin{enumerate}
    \item all distributions in $\trt$ have finite moments
    \item $\sem M$ and for each $\eta>0$ the partial derivatives up to order 2 of $\sema M$ are uniformly dominated by a function with finite moments.
  \end{enumerate}
\end{proposition}
Consequently, again the Smoothed Optimisation \cref{p:optsm} is not only well-defined but by the dominated convergence theorem, the reparameterisation gradient estimator is unbiased. Furthermore, \ref{it:sgdunb} to \ref{it:sgdbv} are satisfied and SGD is correct.

\section{Uniform Convergence}
\label{sec:conv}

In the preceding section we have shown that SGD with the reparameterisation gradient can be employed to correctly (in the sense of \cref{prop:csgd}) solve the Smoothed Optimisation \cref{p:optsm} for any fixed accuracy coefficient.
However, \textit{a priori}, it is not clear how a solution of the Smoothed \cref{p:optsm} can help to solve the original \cref{p:opt}.

The following illustrates the potential for significant discrepancies:
\begin{example}
\label{ex:0g}
    Consider $M\equiv\ifc 0 {\theta\cdot\theta+\underline 1}{(\theta-\underline 1)\cdot(\theta-\underline 1)}$.
  Notice that the global minimum and the only stationary point of $\sema M$ is at $\theta=\frac 1 2$ regardless of $\eta>0$, where
  $\sema M(\frac 1 2)=\frac 3 4$.
  On the other hand $\sem M(\frac 1 2)=\frac 1 4$ and the global minimum of $\sem M$ is at $\theta=1$.
\end{example}

In this section we investigate under which conditions the smoothed objective function converges to the original objective function \emph{uniformly} in $\para\in\parasp$:
\begin{enumerate}[label=(Unif)]
  \setlength{\itemindent}{3em}
  \item\label{it:dunif} $\E_{\lat\sim\mdist}\left[\sema M(\para,\lat)\right]\unif\E_{\lat\sim\mdist}\left[\sem M(\para,\lat)\right]$ as $\eta\searrow 0$ for $\para\in\parasp$
\end{enumerate}
We design a type system guaranteeing this.

The practical significance of uniform convergence is that \emph{before} running SGD, for every error tolerance $\epsilon>0$ we can find an accuracy coefficient $\eta>0$ such that the difference between the smoothed and original objective function does not exceed $\epsilon$, in particular for $\para^*$ delivered by the SGD run for the $\eta$-smoothed problem.

\paragraph{Discussion of Restrictions.}
To rule out the pathology of \cref{ex:0g} we require that guards are non-0 almost everywhere.

Furthermore, as a consequence of the uniform limit theorem \cite{M99}, \ref{it:dunif} can only possibly hold if the \emph{expectation} $\E_{\lat\sim\mdist}\left[\sem M(\para,\lat)\right]$ is continuous (as a function of the parameters $\para$).
For a straightforward counterexample take $M\equiv\ifc\theta {\underline 0} {\underline 1}$, we have $\E_{\lat}[\sem M(\theta)]=[\theta\geq 0]$ which is discontinuous, let alone differentiable, at $\theta = 0$.
Our approach is to require that guards do not depend directly on parameters but they may do so, indirectly, via a diffeomorphic\footnote{\cref{ex:div} in \cref{app:conv} illustrates why it is \emph{not} sufficient to restrict the reparameterisation transform to \emph{bijections} (rather, we require it to be a diffeomorphism).} reparameterisation transform; see \cref{ex:affine transform}. We call such guards \emph{safe}.



In summary, our aim, intuitively, is to ensure that guards are the composition of a diffeomorphic transformation of the random samples (potentially depending on parameters) and a function which does not vanish almost everywhere.



\subsection{Type System for Guard Safety}
\label{sec:gs}

In order to enforce this requirement and to make the transformation more explicit, we introduce syntactic sugar, $\transt\dist\sd$, for applications of the form $\sd\,\sample_\dist$.
\begin{example}
  \label{ex:locscaletrans}
  As expressed in \cref{eq:locscale}, we can obtain samples from $\nd(\mu,\sigma^2)$ via $\transt\nd{(\lambda s\ldotp s\cdot\sigma+\mu)}$, which is syntactic sugar for the term $(\lambda s\ldotp s\cdot\sigma+\mu)\,\sample_\nd$.
\end{example}
We  propose another instance of the generic type system of \cref{sec:gentype}, $\gsj$, where we annotate base types by $\ann=(g,\dep)$, where $g\in\{\false,\true\}$ denotes whether we seek to establish guard safety and $\dep$ is a finite set of $s_j$ capturing possible dependencies on samples. We subtype base types as follows: $\iota_1^{(g_1,\dep_1)}\sqsubseteq_{\mathrm{unif}}\iota_2^{(g_2,\dep_2)}$ if $\iota_1\sqsubseteq\iota_2$ (as defined in \cref{fig:subtyping}),
$\dep_1\subseteq\dep_2$ and $g_1\preceq g_2$, where $\true\preceq\false$. This is motivated by the intuition that we can always drop\footnote{as long as it is not used in guards} guard safety and add more dependencies.

The rule for conditionals ensures that only safe guards are used.
The unary operations preserve variable dependencies and guard safety.
Parameters and constants are not guard safe and depend on no samples (see \cref{app:conv} for the full type system):
\begin{gather*}
  \infer{\Gamma\mid\trt\app\trt'\app\trt''\gsj\ifc LMN:\safet}{\Gamma\mid\trt\gsj L:\iota^{(\mathhighlight\true,\dep)}&\Gamma\mid\trt'\gsj M:\safet&\Gamma\mid\trt''\gsj N:\safet}\\
  \infer{\mid[]\gsj\PCFmin:\PCFReal^{(g,\dep)}\to\PCFReal^{(g,\dep)}}{}\\
  \infer{\theta_i:\iota^{(\false,\emptyset)}\mid[]\gsj\theta_i:\iota^{(\false,\emptyset)}}{}\qquad
  \infer[r\in\sem\iota]{\mid[]\gsj\underline r:\iota^{(\false,\emptyset)}}{}\\
  \infer[T\text{ diffeomorphic}]{\para\mid[s_j\sim\dist]\gsj\transt\dist T:\PCFReal^{(\true,\{s_j\})}}{\para\mid[]\gsj\sd:\PCFReal^{\ann}\to\PCFReal^{\ann}}
\end{gather*}
A term $\para\mid[]\gsj\sd:\PCFReal^{\ann}\to\PCFReal^{\ann}$ is diffeomorphic if $\sem T(\para,[])=\sema T(\para,[]) :\Real\to\Real$ is a diffeomorphism \dw{omitting $\Real^0$ due to trace type} for each $\para\in\parasp$, i.e.\ differentiable and bijective with differentiable inverse.

First, we can express affine transformations, in particular, the location-scale transformations as in \cref{ex:locscaletrans}:
\begin{example}[Location-Scale Transformation]
\label{ex:affine transform}
The term-in-context
\[
  \sigma:\PCFRp^{(\false,\emptyset)},\mu:\PCFReal^{(\false,\emptyset)} \mid [] \vdash \lambda s\ldotp\sigma\cdot s +\mu:\PCFReal^{(\false,\{s_1\})} \to \PCFReal^{(\false,\{s_1\})}
\]
is diffeomorphic.
(However for $\sigma:\mathhighlight{\PCFReal}^{(\false,\emptyset)}$ it is \emph{not} because it admits $\sigma=0$.)
Hence, the reparameterisation transform
\[
  G \equiv \sigma:\PCFRp^{(\false,\emptyset)},\mu:\PCFReal^{(\false,\emptyset)} \mid [s_1 : \mathcal D] \vdash  \transt \dist {(\lambda s.  s\cdot \sigma + \mu)} : \PCFReal^{(\true,\{ s_1 \})}
\]
which has $g$-flag $\true$, is admissible as a guard term.
Notice that $G$ depends on the parameters, $\sigma$ and $\mu$, \emph{indirectly} through a diffeomorphism, which is permitted by the type system.
\end{example}

If guard safety is sought to be established for the binary operations, we require that operands do not share dependencies on samples:
\begin{gather*}
  \infer[\circ\in\{+,\cdot\}]{\mid[]\gsj\PCFcirc:\iota^{(\false,\dep)}\to\iota^{(\false,\dep)}\to\iota^{(\mathhighlight\false,\dep)}}{}\\
  \infer[\circ\in\{+,\cdot\},\dep_1\cap\dep_2=\emptyset]{\mid[]\gsj\PCFcirc:\iota^{(\true,\dep_1)}\to\iota^{(\true,\dep_2)}\to\iota^{(\true,\dep_1\cup\dep_2)}}{}
\end{gather*}
This is designed to address:
\begin{example}[Non-Constant Guards]
  \label{ex:constant function cannot be used in guards}
  We have
  \(
    {}\mid[] \vdash (\lambda x . x + (- x)) : \PCFReal^{(\false,\{s_1\})} \to\PCFReal^{(\false,\{s_1\})},
  \)
  noting that we must use $g = \false$ for the $\PCFplus$ rule; and because $\PCFReal^{(\true,\{s_j\})} \sqsubseteq_{\mathrm{unif}} \PCFReal^{(\false,\{s_j\})}$, we have
  \[
    {}\mid[] \vdash (\lambda x . x + (\PCFmin x)) : \PCFReal^{(\true,\{s_1\})} \to\PCFReal^{(\false,\{s_1\})} .
  \]
  Now $\transt \dist {(\lambda y. y)}$ has type $\PCFReal^{(\true,\{s_1\})}$ with the $g$-flag necessarily set to $\true$;
  and so the term
  \[
    M \equiv \big(\lambda x. x+(- x)\big) \, \transt \dist {(\lambda y. y)}
  \]
  which denotes 0, has type $\PCFReal^{(\false,\{s_1\})}$, but \emph{not} $\PCFReal^{(\true,\{s_1\})}$.
  It follows that $M$ cannot be used in guards (notice the side condition of the rule for conditional), which is as desired: recall \cref{ex:0g}.
Similarly consider the term
\begin{align}
N &\equiv \big(\lambda x. (\lambda y \, z . \ifc {y + (- z)} {M_1} {M_2} ) \, x \, x \big)\notag
\\
&\qquad\qquad\qquad\qquad\qquad \,
(\transt \dist {(\lambda y. y)})
\label{eq:guard term 0 long}
\end{align}
When evaluated, the term ${y + (- z)}$ in the guard has denotation 0.
For the same reason as above, the term $N$ is not refinement typable.
\end{example}



The type system is however incomplete, in the sense that there are terms-in-context that satisfy the property \ref{it:dunif} but which are not typable.

\begin{example}[Incompleteness] The following term-in-context denotes the ``identity'':
\[
  {}\mid[] \vdash (\lambda x . (\underline{2} \cdot x) + (- x)) : \PCFReal^{(\true,\{s_1\})} \to \PCFReal^{(\false,\{s_1\})}
\]
but it does \emph{not} have type $\PCFReal^{(\true,\{s_1\})} \to\PCFReal^{(\true,\{s_1\})}$.
Then, using the same reasoning as \cref{ex:constant function cannot be used in guards}, the term
\[
  G \equiv (\lambda x. (\underline{2} \cdot x) + (- x)) \, (\transt \dist {(\lambda y. y)})
\]
has type $\PCFReal^{(\false,\{s_1\})}$, but \emph{not} $\PCFReal^{(\true,\{s_1\})}$, and so $\ifc G {\underline 0} {\underline 1}$ is not typable, even though $G$ can safely be used in guards.
\end{example}




\subsection{Type Soundness}
\label{sec:convsound}
Henceforth, we fix parameters $\theta_1:\iota^{(\false,\emptyset)}_1,\ldots,\theta_m:\iota^{(\false,\emptyset)}_m$.

Now, we address how to show property \ref{it:dunif}, i.e.\ that for $\para\mid\trt\gsj M:\iota^{(g,\dep)}$, the $\eta$-smoothed 
$\E[\sema M(\para,\lat)]$ converges uniformly for $\para\in\parasp$ as $\eta\searrow 0$.
For this to hold we clearly need to require that $\smooth$ has good (uniform) convergence properties (as far as the unavoidable discontinuity at $0$ allows for):
\begin{assumption}
  \label{ass:sig2}
  For every $\delta>0$, $\smooth\unif[(-)>0]$ on $(-\infty,-\delta)\cup(\delta,\infty)$.
\end{assumption}
Observe that in general even if $M$ is typable $\sema M$ does \emph{not} converge uniformly in both $\para$ and $\lat$ because $\sem M$ may still be discontinuous in $\lat$:
\begin{example}
  \label{ex:nconv}
  For $M\equiv\ifc{(\transt\nd{(\lambda s\ldotp s+\theta)})}{\underline 0}{\underline 1}$, $\sem M(\theta,s)=[s+\theta\geq 0]$, which is discontinuous, and $\sema M(\theta,s)=\smooth(s+\theta)$.
\end{example}

\iffalse
\begin{example}
  \label{ex:nconv}
  Observe that
  \begin{align*}
    f(\theta,z)\defeq\sem{\ifc{(\transt\nd(\lambda z\ldotp z\mathbin{\underline{+}}\theta))}{0}{1}}(\theta,z)=[z\mathbin{\underline{+}}\theta\geq 0]
  \end{align*} is discontinuous at $(\theta,-\theta)$.
  Therefore, despite the fact that
  \begin{align*}
    f_\eta(\theta,z)\defeq\sema{\ifc{(\transt\nd(\lambda z\ldotp z+\theta))}{0}{1}}(\theta,z)=\smooth(z\mathbin{\underline{+}}\theta)
  \end{align*}
  converges a.e.\ to $f$, the convergence cannot possibly be uniform.
\end{example}
\fi
However, if $\para\mid\Sigma\vdash M:\iota^{(g,\dep)}$ then $\sema M$ \emph{does} converge to $\sem M$ 
uniformly almost uniformly, i.e.,
uniformly in $\para\in\parasp$ and \emph{almost} uniformly in $\lat\in\Real^n$. Formally, we define:
\begin{definition}
  Let $f,f_\eta:\parasp\times \Real^n\to\Real$, $\mu$ be a \dw{what} measure on $\Real^n$.
  We say that $f_\eta$ \defn{converges uniformly almost uniformly} to $f$ (notation: $f_\eta\sau f$) if there exist sequences $(\delta_k)_{k\in\nat}$, $(\epsilon_k)_{k\in\nat}$ and $(\eta_k)_{k\in\nat}$ such that $\lim_{k \to \infty}\delta_k=0=\lim_{k \to \infty}\epsilon_k$; and for every $k\in\nat$ and $\para\in\parasp$ there exists $U\subseteq \Real^n$ such that
  \begin{enumerate}
    \item $\mu(U)<\delta_k$ and
    \item for every $0<\eta<\eta_k$ and $\lat\in\Real^n\setminus U$, $|f_\eta(\para,\lat)-f(\para,\lat)|<\epsilon_k$.
  \end{enumerate}
\end{definition}
If $f,f_\eta$ are independent of $\para$ this notion coincides with standard almost uniform convergence.
For $M$ from \cref{ex:nconv} $\sema M\sau\sem M$ holds although uniform convergence fails.

However, uniform almost uniform convergence entails uniform convergence of \emph{expectations}:
\begin{restatable}{lemma}{unisau}
  \label{lem:unisau}
  Let $f,f_\eta:\parasp\times\Real^n\to\Real$ have finite moments.

  If $f_\eta\sau f$ then $\E_{\lat\sim\mdist}[f_\eta(\para,\lat)]\unif\E_{\lat\sim\mdist}[f(\para,\lat)]$.
\end{restatable}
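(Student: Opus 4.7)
The plan is to bound $|\E_{\lat\sim\mdist}[f_\eta(\para,\lat)]-\E_{\lat\sim\mdist}[f(\para,\lat)]|$ uniformly in $\para$ by splitting the integral over the ``bad'' set $U$ supplied by the definition of $\sau$ and its complement. Fix sequences $(\delta_k),(\epsilon_k),(\eta_k)$ witnessing $f_\eta\sau f$. For each $k\in\nat$ and each $\para\in\parasp$, pick the corresponding $U=U_{k,\para}\subseteq\Real^n$ with $\mu(U)<\delta_k$ such that $|f_\eta(\para,\lat)-f(\para,\lat)|<\epsilon_k$ for all $0<\eta<\eta_k$ and all $\lat\in\Real^n\setminus U$.

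Now I would write
\[
|\E_{\lat\sim\mdist}[f_\eta(\para,\lat)]-\E_{\lat\sim\mdist}[f(\para,\lat)]|
\;\leq\;
\int_{U}|f_\eta(\para,\lat)-f(\para,\lat)|\,\mdist(\diff\lat)
\;+\;
\int_{\Real^n\setminus U}|f_\eta(\para,\lat)-f(\para,\lat)|\,\mdist(\diff\lat).
\]
The second term is bounded by $\epsilon_k\cdot\mdist(\Real^n\setminus U)\leq\epsilon_k$, since $\mdist$ is a probability measure. For the first term the key step is to apply the Cauchy--Schwarz inequality:
\[
\int_{U}|f_\eta-f|\,\diff\mdist
\;\leq\;
\mdist(U)^{1/2}\cdot\Bigl(\int (f_\eta-f)^2\,\diff\mdist\Bigr)^{1/2}
\;\leq\;
\sqrt{\delta_k}\cdot\sqrt{M},
\]
where $M$ is a uniform bound on $\E_{\lat}[(f_\eta(\para,\lat)-f(\para,\lat))^2]$ obtained from the finite moment hypothesis via Minkowski's inequality, namely $M\leq\bigl(\sup_{\para,\eta}\|f_\eta(\para,\cdot)\|_{L^2}+\sup_{\para}\|f(\para,\cdot)\|_{L^2}\bigr)^2$.

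Combining, for every $\para\in\parasp$ and every $\eta<\eta_k$,
\[
|\E_{\lat\sim\mdist}[f_\eta(\para,\lat)]-\E_{\lat\sim\mdist}[f(\para,\lat)]|
\;\leq\;
\epsilon_k+\sqrt{\delta_k\cdot M}.
\]
Since the right-hand side is independent of $\para$ and tends to $0$ as $k\to\infty$, this gives uniform convergence of the expectations in $\para$. The only real subtlety, and the step I would be most careful about, is making the finite-moments assumption do the work uniformly: we must read ``finite moments'' as a uniform bound on (at least) the second moment over $\para\in\parasp$ and over $\eta$ near $0$, so that the Cauchy--Schwarz bound on the bad set $U$ does not blow up as we vary the parameter. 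Everything else is routine.
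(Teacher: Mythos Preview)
Your proof is correct and follows essentially the same approach as the paper: split the integral over the bad set $U$ and its complement, bound the integral over the complement by $\epsilon_k$, and control the integral over $U$ via Cauchy--Schwarz together with a uniform $L^2$ bound coming from the finite-moments hypothesis. You also correctly flag the one genuine subtlety---that ``finite moments'' must be read as a uniform second-moment bound over $\para\in\parasp$ and small $\eta$---which the paper's proof likewise notes is needed.
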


As a consequence, it suffices to establish $\sema M\sau\sem M$. We achieve this by positing an infinitary logical relation between sequences of morphisms in $\VFr$ (corresponding to the smoothings) and morphisms in $\qbs$ (corresponding to the measurable standard semantics). We then prove a Fundamental \cref{lem:fundconv} (details are in \cref{app:conv}).
Not surprisingly the case for conditionals is most interesting. This makes use of \cref{ass:sig2} and exploits that guards, for which the typing rules assert the guard safety flag to be $\true$, can only be $0$ at sets of measure $0$. We conclude:

\begin{theorem}
  If $\theta_1:\iota^{(\false,\emptyset)}_1,\ldots,\theta_m:\iota^{(\false,\emptyset)}_m\mid\trt\gsj M:\PCFReal^{(g,\dep)}$ then $\sema M\sau\sem M$.
   In particular, if $\sema M$ and $\sem M$ also have finite moments then
   \begin{align*}
     \E_{\lat\sim\mdist}[\sema M(\para,\lat)]&\unif\E_{\lat\sim\mdist}[\sem M(\para,\lat)]&\text{ as } \eta\searrow 0\text{ for }\para\in\parasp
   \end{align*}
\end{theorem}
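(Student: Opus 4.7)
The plan is to establish $\sema M \sau \sem M$ via an infinitary logical relations argument by induction on the typing derivation of $M$, and then derive uniform convergence of expectations as an immediate corollary via \cref{lem:unisau}. Concretely, I would define for each refinement type $\tau$ a relation $R_\tau$ between sequences $(f_\eta)_{\eta > 0}$ of morphisms in $\VFr$ and morphisms $f$ in $\qbs$. At a base type $\iota^{(g,\dep)}$, the relation asks for $f_\eta \sau f$, together with an extra clause when $g = \true$: namely that $f(\para,\lat)$, viewed as a function of those sample-coordinates in $\dep$, is non-zero almost everywhere for every $\para \in \parasp$ (this is the property that will make guards behave well). At function types $\tau_1 \addtr \trt \to \tau_2$, the relation is defined by the usual closure under application: related inputs produce related outputs. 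The core goal is the Fundamental Lemma: for every derivable $\Gamma \mid \trt \gsj M : \tau$, the sequence $(\sema M)_\eta$ is $R_\tau$-related to $\sem M$.

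The straightforward cases are variables, parameters and constants (constant sequences are trivially $\sau$-convergent, and the guard-safety clause is either vacuous because $g=\false$ or discharged by the fact that $\para$-dependencies and constants satisfy the required non-vanishing property trivially). For the unary and binary primitive operations I would compose the exceptional sets $U$ from the inductive hypotheses; uniform continuity of $+,\cdot,\exp,\log,(-)^{-1}$ on compact sets yields the $\sau$ part. The binary case in which guard safety is preserved uses the side condition $\dep_1 \cap \dep_2 = \emptyset$ crucially: if $f$ depends only on coordinates in $\dep_1$ and $g$ only on coordinates in $\dep_2$, and both are non-zero a.e.\ in their respective coordinates for every $\para$, then $f + g$ and $f \cdot g$ are non-zero a.e.\ in $\dep_1 \cup \dep_2$ by a Fubini-type argument. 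For $\transt\dist T$, the diffeomorphism assumption on $T$ implies that the pushforward of the (continuous) base distribution under $T(\para,\cdot)$ is still equivalent to Lebesgue measure, so the identity coordinate is non-zero a.e.; the $\sau$ property is immediate since $\sem T = \sema T$.

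The main obstacle, as anticipated, is the conditional case $\ifc L M N$. Here $L$ has type $\PCFReal^{(\true,\dep)}$, so by the inductive hypothesis $\sema L \sau \sem L$ and $\sem L(\para,\cdot)$ is non-zero almost everywhere in the coordinates of $\dep$, uniformly in $\para$. Given a tolerance $\epsilon_k$, I would for each $\para$ choose a measurable set $U_\para$ capturing (i) the exceptional sets coming from the inductive hypotheses applied to $L, M, N$, and (ii) the set where $|\sem L(\para, \lat)| < \delta_k'$ for a suitably small $\delta_k'$. The measure of (ii) shrinks to $0$ uniformly in $\para$ because the $\{\sem L = 0\}$ set has measure $0$ and $\sem L$ is, up to the diffeomorphism, a non-vanishing measurable function. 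Outside $U_\para$ one has $|\sem L(\para,\lat)| \geq \delta_k'$, and then \cref{ass:sig2} gives $\smooth(\pm \sem L(\para,\lat)) \unif [\pm \sem L(\para,\lat) > 0]$, while $\sema L$ is close to $\sem L$ by induction, so continuity of $\smooth$ (more precisely the uniform convergence of $\smooth$ to the step function on $\{|x| \geq \delta_k'\}$) yields
\begin{equation*}
\smooth(\pm \sema L(\para,\lat)) \cdot \sema{M \text{ or } N}(\para,\lat) \longrightarrow [\pm \sem L(\para,\lat) > 0] \cdot \sem{M \text{ or } N}(\para,\lat)
\end{equation*}
uniformly in $\para$ and $\lat \in \Real^n \setminus U_\para$. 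Combining the two branches recovers \cref{eq:smcond} on one side and the standard if-then-else on the other, establishing $\sema{\ifc L M N} \sau \sem{\ifc L M N}$; the guard-safety clause for the resulting type need not be maintained because branches of conditionals have safe type only. The application and abstraction cases reduce to the definition of $R$ at function types. Finally, the ``in particular'' part follows by instantiating the established $\sau$ convergence into \cref{lem:unisau}, using the assumed finite-moment hypothesis.
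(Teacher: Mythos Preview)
Your overall plan (logical relation, fundamental lemma, \cref{ass:sig2} for the conditional, \cref{lem:unisau} for the corollary) matches the paper, but the invariant you propose at guard-safe base types is too weak to close the induction. Recording only ``$f(\para,\cdot)$ is non-zero a.e.\ for every $\para$'' at $\iota^{(\true,\dep)}$ fails already at the addition rule: your Fubini argument for $f+g$ with $\dep_1\cap\dep_2=\emptyset$ would need every level set $\{f=c\}$ to be null, not just $\{f=0\}$; as a schematic failure, $f$ and $g$ each $\{\pm 1\}$-valued are non-zero everywhere yet $f+g$ vanishes on half the product space. The same weakness bites in the conditional step: ``non-zero a.e.\ for every $\para$'' does not by itself give the \emph{uniform-in-$\para$} shrinkage of $\mu\{|\sem L(\para,\cdot)|<\delta\}$ that you need, and although you appeal to a diffeomorphism there, your invariant does not carry one. (Also, the guard-safety clause \emph{does} need to be maintained through conditionals: branches may well have type $\iota^{(\true,\dep')}$, and then so does the result.)

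The paper's fix is to strengthen the relation: it is indexed by a list $\Phi$ of diffeomorphisms accumulated from the $\transt\dist T$ subterms, and at $\iota^{(\true,\dep)}$ one demands a factorisation $f=g\circ\rep_\para$ (likewise $f_\eta=g_\eta\circ\rep_\para$) through the diffeomorphism $\rep_\para$ determined by $\Phi$, where $g$ is a \emph{$\para$-independent}, piecewise analytic, non-constant function of the $\dep$-coordinates only. Non-constancy plus piecewise analyticity survives addition under disjoint dependencies (a non-constant analytic function of the $\dep_1$-variables stays non-constant after adding any function of the disjoint $\dep_2$-variables) and forces all level sets to be null. The $\para$-independence of $g$ is exactly what yields uniformity in the conditional case (\cref{lem:ausig}): the set $\{|g|<\delta\}$ does not move with $\para$, and pulling it back through the smoothly $\para$-dependent diffeomorphism over compact $\parasp$ controls its $\mu$-measure uniformly.
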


We finally note that $\gsj$ can be made more permissible by adding syntactic sugar for $a$-fold (for $a\in\nat_{>0}$)
 addition $\PCFtimes M\equiv M\infix\PCFplus\cdots\infix\PCFplus M$ and multiplication $M\PCFpower\equiv M\infix\PCFmul\cdots\infix\PCFmul M$.  This admits more terms as guards, but safely (see \cref{fig:gsjt}).

%



\section{Related Work}
\label{sec:related}




\cite{LYY18} is both the starting point for our work and the most natural source for comparison.
They correct the (biased) reparameterisation gradient estimator for non-differentiable models by additional non-trivial \emph{boundary} terms.
They present an efficient method for \emph{affine} guards only.
Besides, they are not concerned with the \emph{convergence} of gradient-based optimisation procedures;
nor do they discuss how assumptions they make may be manifested in a programming language.

In the context of the reparameterisation gradient, \cite{MMT17} and \cite{JGP17} relax discrete random variables in a continuous way, effectively dealing with a specific class of discontinuous models.
\cite{Z81} use a similar smoothing for discontinuous optimisation but they do not consider a full programming language.

Motivated by guaranteeing absolute continuity (which is a necessary but not sufficient criterion for the correctness of e.g.\ variational inference), \cite{DBLP:journals/pacmpl/LewCSCM20} use an approach similar to our trace types to track the samples which are drawn.
They do not support standard conditionals but their ``work-around'' is also eager in the sense of combining the traces of both branches.
Besides, they do not support a full higher-order language, in which higher-order terms can draw samples. Thus, they do not need to consider \emph{function} types tracking the samples drawn during evaluation.

\section{Empirical Evaluation}
\label{sec:empirical}


\dw{DISCUSS: variance vs.~approximation error}

\dw{constant vs.\ decreasing step size}


We evaluate our smoothed gradient estimator (\textsc{Smooth}) against the biased reparameterisation estimator (\textsc{Reparam}), the unbiased correction of it (\textsc{LYY18}) due to \cite{LYY18}, and the unbiased (\textsc{Score}) estimator \cite{RGB14,WW13,DBLP:conf/icml/MnihG14}.
The experimental setup is based on that of \cite{LYY18}.
The implementation is written in Python, using automatic differentiation (provided by the \texttt{jax} library) to implement each of the above estimators for an arbitrary probabilistic program.
For each estimator and model, we used the Adam \cite{DBLP:journals/corr/KingmaB14} optimiser for $10,000$ iterations using a learning rate of $0.001$, with the exception of \texttt{xornet} for which we used $0.01$.
The initial model parameters $\para_0$ were fixed for each model across all runs.
In each iteration, we used \changed[dw]{$N=16$} Monte Carlo samples from the gradient estimator.
For the \lyy estimator, a single subsample for the boundary term was used in each estimate.
For our smoothed estimator \changed[dw]{we use accuracy coefficients $\eta\in\{0.1,0.15,0.2\}$}.
Further details are discussed in \cref{app:expt-setup}.

\paragraph{Compilation for First-Order Programs.}

All our benchmarks are first-order. We compile a potentially discontinuous program to a smooth program (parameterised by $\smooth$) using the compatible closure of
\begin{align*}
  \ifc LMN\rightsquigarrow (\lambda w\ldotp \smooth(-w)\cdot M+\smooth(w)\cdot N)\,L
\end{align*}
Note that the size only increases linearly and that we avoid of an exponential blow-up by using abstractions rather than duplicating the guard $L$.

\paragraph{Models.}
\dw{models typable}
We include the models from \cite{LYY18}, an example from differential privacy \cite{DavidsonPilon15} and a neural network for which our main competitor, the estimator of \cite{LYY18}, is \emph{not} applicable (see \cref{app:models} for more details).

\begin{figure}[h!t]
  \centering
  \begin{subfigure}[h]{.48\textwidth}
    \includegraphics[width=\linewidth]{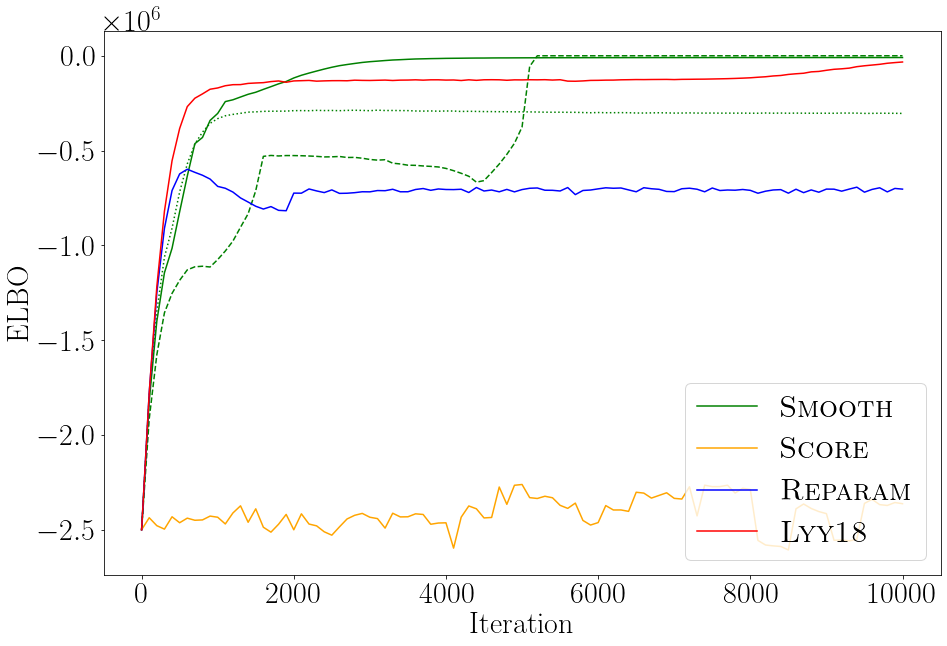}
    \caption{\texttt{temperature}}
    \label{fig:temperature-graph}
  \end{subfigure}
  \quad
  \begin{subfigure}[h]{.48\textwidth}
    \includegraphics[width=\linewidth]{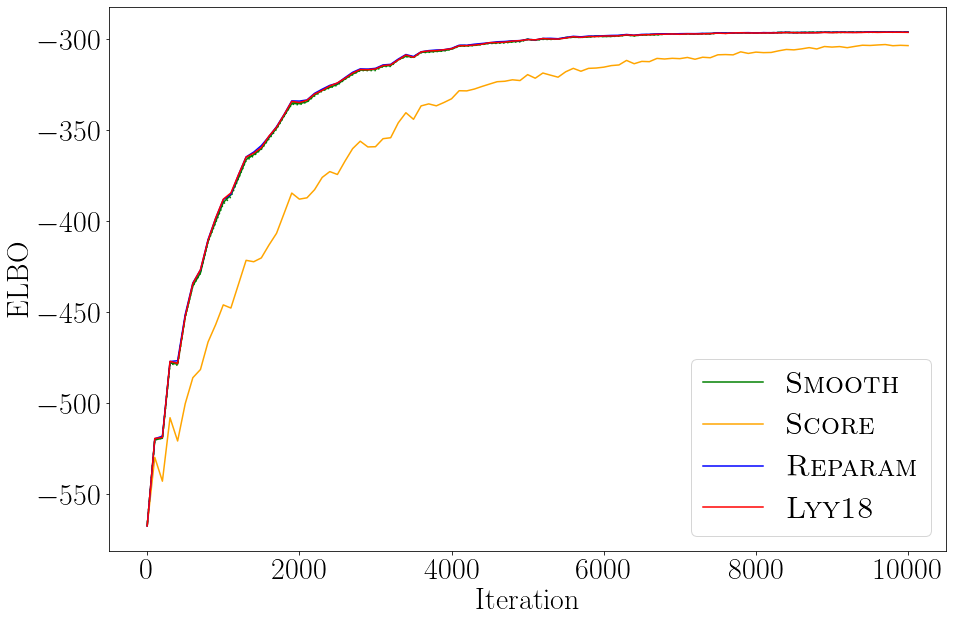}
    \caption{\texttt{textmsg}}
    \label{fig:textmsg-graph}
  \end{subfigure}
  \\
  \begin{subfigure}[h]{.48\textwidth}
    \includegraphics[width=\linewidth]{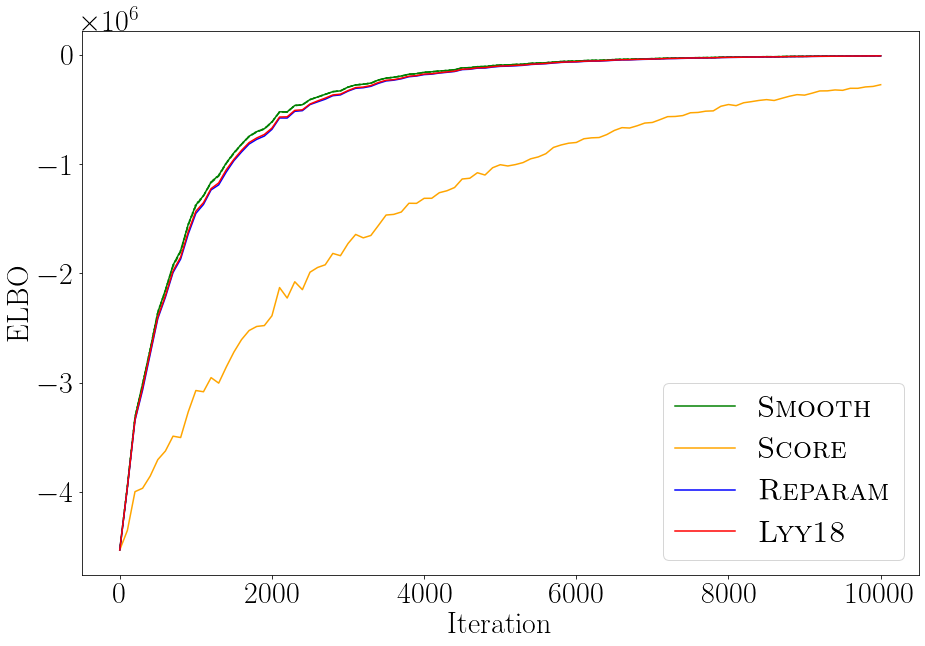}
    \caption{\texttt{influenza}}
    \label{fig:influenza-graph}
  \end{subfigure}
  \quad
  \begin{subfigure}[h]{.48\textwidth}
    \includegraphics[width=\linewidth]{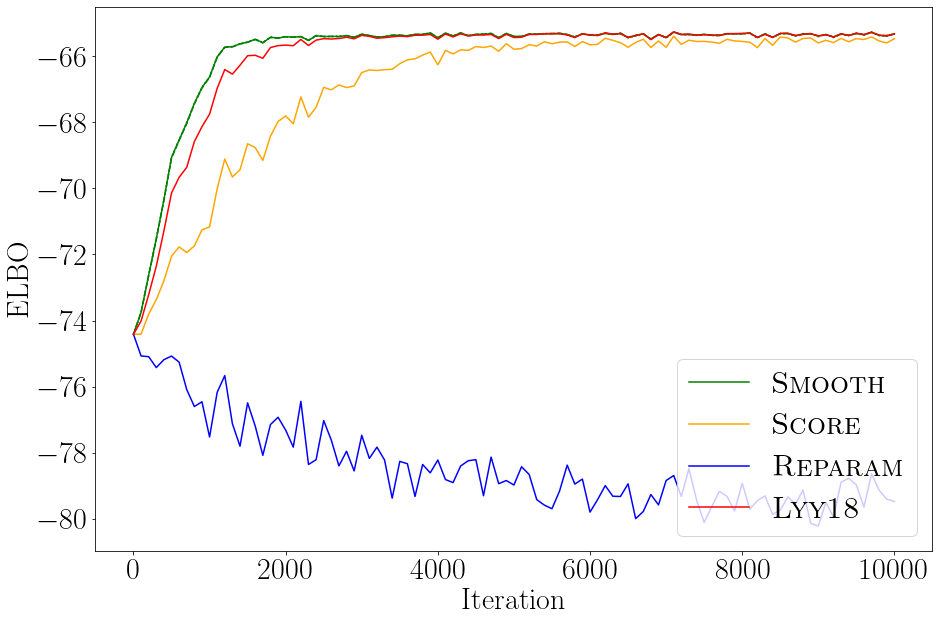}
    \caption{\texttt{cheating}}
    \label{fig:cheating-graph}
  \end{subfigure}
  \\[5pt]
  \begin{subfigure}[h]{.48\textwidth}
    \includegraphics[width=\linewidth]{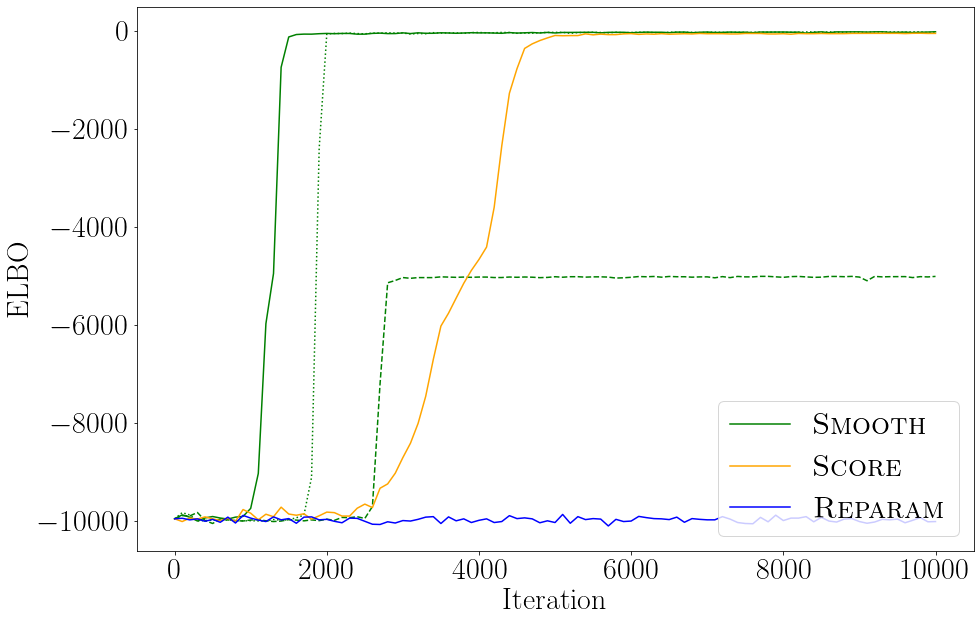}
    \caption{\texttt{xornet}}
    \label{fig:xornet-graph}
  \end{subfigure}
  \caption{ELBO trajectories for each model. A single colour is used for each
  estimator and the accuracy coefficient $\eta=0.1,0.15,0.2$ for \textsc{Smooth} is represented by dashed, solid and
  dotted lines respectively. \label{fig:estimator variance}}
\end{figure}

\subsection*{Analysis of Results}
We plot the ELBO trajectories in \cref{fig:estimator variance} and include data on the computational cost and variance in \cref{tab:var} in \cref{app:results}.


The ELBO graph for the \texttt{temperature} model in \cref{fig:temperature-graph} and the \texttt{cheating} model in \cref{fig:cheating-graph} shows that the \reparam estimator is biased, converging to suboptimal values when compared to the \nested and \lyy estimators.
For the \texttt{temperature} model we can also see from the graph and the data in \cref{tab:temperature} that the \score estimator exhibits extremely high variance, and does not converge.


Finally, the \texttt{xornet} model shows the difficulty of training step-function based neural nets.
The \lyy estimator is not applicable here since there are non-affine conditionals.
In \cref{fig:xornet-graph}, the \reparam estimator makes no progress while other estimators manage to converge to close to $0$ ELBO, showing that they learn a network that correctly classifies all points.
In particular, the \nested estimator converges the quickest. \dw{perhaps ommit if needed}

\textit{Summa summarum}, the results reveal where the \reparam estimator is biased and that the \nested estimator does not have the same limitation.
Where the \lyy estimator is defined, they converge to roughly the same objective value; and the smoothing approach is generalisable to more complex models such as neural networks with non-linear boundaries.
Our proposed \nested estimator has consistently significantly lower work-normalised variance, up to 3 orders of magnitude.


\section{Conclusion and Future Directions}
\label{sec:conclusion}
We have discussed a simple probabilistic programming language to formalise an optimisation problem arising e.g.\ in variational inference for probabilistic programming. We have endowed our language with a denotational (measurable) value semantics and a smoothed approximation of potentially discontinuous programs, which is parameterised by an accuracy coefficient.
We have proposed type systems to guarantee pleasing properties in the context of the optimisation problem:
For a fixed accuracy coefficient, stochastic gradient descent converges to stationary points even with the reparameterisation gradient (which is \emph{unbiased}).
Besides, the smoothed objective function converges uniformly to the true objective as the accuracy is improved.

Our type systems can be used to \emph{independently} check these two properties to obtain partial theoretical guarantees even if one of the systems suffers from incompleteness.
We also stress that SGD and the smoothed unbiased gradient estimator can even be applied to programs which are \emph{not} typable.


Experiments with our prototype implementation confirm the benefits of reduced variance and unbiasedness.
Compared to the unbiased correction of the reparameterised gradient estimator due to \cite{LYY18},
our estimator has a similar convergence, but is simpler, faster, and attains orders of magnitude (2 to 3,000 x) reduction in work-normalised variance.

\paragraph{Future Directions.}
A natural avenue for future research is to make the language and type systems more complete, i.e.\ to support more well-behaved programs, in particular programs involving recursion.



Furthermore, the choice of accuracy coefficients leaves room for further investigations. We anticipate it could be fruitful not to fix an accuracy coefficient upfront but to gradually enhance it \emph{during} the optimisation either via a pre-determined schedule (dependent on structural properties of the program), or adaptively.


%


\bibliographystyle{splncs04}
\bibliography{lit}

\begin{thebibliography}{10}
\providecommand{\url}[1]{\texttt{#1}}
\providecommand{\urlprefix}{URL }
\providecommand{\doi}[1]{https://doi.org/#1}

\bibitem{Aumann61}
Aumann, R.J.: Borel structures for function spaces. Illinois Journal of
  Mathematics  \textbf{5} (1961)

\bibitem{B15}
Bertsekas, D.: Convex optimization algorithms. Athena Scientific (2015)

\bibitem{BT00}
Bertsekas, D.P., Tsitsiklis, J.N.: Gradient convergence in gradient methods
  with errors. {SIAM} J. Optim.  \textbf{10}(3),  627--642 (2000)

\bibitem{pyro}
Bingham, E., Chen, J.P., Jankowiak, M., Obermeyer, F., Pradhan, N., Karaletsos,
  T., Singh, R., Szerlip, P.A., Horsfall, P., Goodman, N.D.: Pyro: Deep
  universal probabilistic programming. J. Mach. Learn. Res.  \textbf{20},
  28:1--28:6 (2019)

\bibitem{B07}
Bishop, C.M.: Pattern recognition and machine learning, 5th Edition.
  Information science and statistics, Springer (2007)

\bibitem{BKM17}
Blei, D.M., Kucukelbir, A., McAuliffe, J.D.: Variational inference: A review
  for statisticians. Journal of the American Statistical Association
  \textbf{112}(518),  859--877 (2017). \doi{10.1080/01621459.2017.1285773}

\bibitem{BLGS16}
Borgstr{\"{o}}m, J., Lago, U.D., Gordon, A.D., Szymczak, M.: A lambda-calculus
  foundation for universal probabilistic programming. In: Proceedings of the
  21st {ACM} {SIGPLAN} International Conference on Functional Programming,
  {ICFP} 2016, Nara, Japan, September 18-22, 2016. pp. 33--46 (2016)

\bibitem{Botev2017}
Botev, Z., Ridder, A.: {Variance Reduction}. In: Wiley StatsRef: Statistics
  Reference Online, pp.~1--6 (2017)

\bibitem{DBLP:conf/pldi/Cusumano-Towner19}
Cusumano{-}Towner, M.F., Saad, F.A., Lew, A.K., Mansinghka, V.K.: Gen: a
  general-purpose probabilistic programming system with programmable inference.
  In: McKinley, K.S., Fisher, K. (eds.) Proceedings of the 40th {ACM} {SIGPLAN}
  Conference on Programming Language Design and Implementation, {PLDI} 2019,
  Phoenix, AZ, USA, June 22-26, 2019. pp. 221--236. {ACM} (2019).
  \doi{10.1145/3314221.3314642}, \url{https://doi.org/10.1145/3314221.3314642}

\bibitem{DK20}
Dahlqvist, F., Kozen, D.: Semantics of higher-order probabilistic programs with
  conditioning. Proc. {ACM} Program. Lang.  \textbf{4}({POPL}),  57:1--57:29
  (2020)

\bibitem{DavidsonPilon15}
Davidson-Pilon, C.: Bayesian Methods for Hackers: Probabilistic Programming and
  Bayesian Inference. Addison-Wesley Professional (2015)

\bibitem{ETP14}
Ehrhard, T., Tasson, C., Pagani, M.: Probabilistic coherence spaces are fully
  abstract for probabilistic {PCF}. In: The 41st Annual {ACM} {SIGPLAN-SIGACT}
  Symposium on Principles of Programming Languages, {POPL} '14, San Diego, CA,
  USA, January 20-21, 2014. pp. 309--320 (2014)

\bibitem{FK88}
Frölicher, A., Kriegl, A.: Linear Spaces and Differentiation Theory.
  Interscience, J. Wiley and Son, New York (1988)

\bibitem{glynn1992asymptotic}
Glynn, P.W., Whitt, W.: The asymptotic efficiency of simulation estimators.
  Operations research  \textbf{40}(3),  505--520 (1992)

\bibitem{Heunen2017c}
Heunen, C., Kammar, O., Staton, S., Yang, H.: {A convenient category for
  higher-order probability theory}. Proc. Symposium Logic in Computer Science
  (2017)

\bibitem{HKSY17}
Heunen, C., Kammar, O., Staton, S., Yang, H.: A convenient category for
  higher-order probability theory. In: 32nd Annual {ACM/IEEE} Symposium on
  Logic in Computer Science, {LICS} 2017, Reykjavik, Iceland, June 20-23, 2017.
  pp. 1--12 (2017)

\bibitem{HNRS15}
Hur, C., Nori, A.V., Rajamani, S.K., Samuel, S.: A provably correct sampler for
  probabilistic programs. In: 35th {IARCS} Annual Conference on Foundation of
  Software Technology and Theoretical Computer Science, {FSTTCS} 2015, December
  16-18, 2015, Bangalore, India. pp. 475--488 (2015)

\bibitem{JGP17}
Jang, E., Gu, S., Poole, B.: Categorical reparameterization with
  gumbel-softmax. In: 5th International Conference on Learning Representations,
  {ICLR} 2017, Toulon, France, April 24-26, 2017, Conference Track Proceedings
  (2017)

\bibitem{DBLP:journals/corr/KingmaB14}
Kingma, D.P., Ba, J.: Adam: {A} method for stochastic optimization. In: Bengio,
  Y., LeCun, Y. (eds.) 3rd International Conference on Learning
  Representations, {ICLR} 2015, San Diego, CA, USA, May 7-9, 2015, Conference
  Track Proceedings (2015)

\bibitem{DBLP:journals/corr/KingmaW13}
Kingma, D.P., Welling, M.: Auto-encoding variational bayes. In: Bengio, Y.,
  LeCun, Y. (eds.) 2nd International Conference on Learning Representations,
  {ICLR} 2014, Banff, AB, Canada, April 14-16, 2014, Conference Track
  Proceedings (2014)

\bibitem{K13}
Klenke, A.: Probability Theory: A Comprehensive Course. Universitext, Springer
  London (2014)

\bibitem{LYRY19}
Lee, W., Yu, H., Rival, X., Yang, H.: Towards verified stochastic variational
  inference for probabilistic programs. {PACMPL}  \textbf{4}({POPL}) (2020)

\bibitem{LYY18}
Lee, W., Yu, H., Yang, H.: Reparameterization gradient for non-differentiable
  models. In: Advances in Neural Information Processing Systems 31: Annual
  Conference on Neural Information Processing Systems 2018, NeurIPS 2018, 3-8
  December 2018, Montr{\'{e}}al, Canada. pp. 5558--5568 (2018)

\bibitem{DBLP:journals/pacmpl/LewCSCM20}
Lew, A.K., Cusumano{-}Towner, M.F., Sherman, B., Carbin, M., Mansinghka, V.K.:
  Trace types and denotational semantics for sound programmable inference in
  probabilistic languages. Proc. {ACM} Program. Lang.  \textbf{4}({POPL}),
  19:1--19:32 (2020)

\bibitem{MMT17}
Maddison, C.J., Mnih, A., Teh, Y.W.: The concrete distribution: {A} continuous
  relaxation of discrete random variables. In: 5th International Conference on
  Learning Representations, {ICLR} 2017, Toulon, France, April 24-26, 2017,
  Conference Track Proceedings (2017)

\bibitem{MOPW}
Mak, C., Ong, C.L., Paquet, H., Wagner, D.: Densities of almost surely
  terminating probabilistic programs are differentiable almost everywhere. In:
  Yoshida, N. (ed.) Programming Languages and Systems - 30th European Symposium
  on Programming, {ESOP} 2021, Held as Part of the European Joint Conferences
  on Theory and Practice of Software, {ETAPS} 2021, Luxembourg City,
  Luxembourg, March 27 - April 1, 2021, Proceedings. Lecture Notes in Computer
  Science, vol. 12648, pp. 432--461. Springer (2021)

\bibitem{DBLP:conf/icml/MnihG14}
Minh, A., Gregor, K.: Neural variational inference and learning in belief
  networks. In: Proceedings of the 31th International Conference on Machine
  Learning, {ICML} 2014, Beijing, China, 21-26 June 2014. {JMLR} Workshop and
  Conference Proceedings, vol.~32, pp. 1791--1799. JMLR.org (2014)

\bibitem{M15}
Mityagin, B.: The zero set of a real analytic function (2015)

\bibitem{M99}
Munkres, J.R.: Topology. Prentice Hall, New Delhi,, 2nd. edn. (1999)

\bibitem{Murphy2012}
Murphy, K.P.: Machine Learning: A Probabilististic Perspective. MIT Press
  (2012)

\bibitem{RGB14}
Ranganath, R., Gerrish, S., Blei, D.M.: Black box variational inference. In:
  Proceedings of the Seventeenth International Conference on Artificial
  Intelligence and Statistics, {AISTATS} 2014, Reykjavik, Iceland, April 22-25,
  2014. pp. 814--822 (2014)

\bibitem{DBLP:conf/icml/RezendeMW14}
Rezende, D.J., Mohamed, S., Wierstra, D.: Stochastic backpropagation and
  approximate inference in deep generative models. In: Proceedings of the 31th
  International Conference on Machine Learning, {ICML} 2014, Beijing, China,
  21-26 June 2014. {JMLR} Workshop and Conference Proceedings, vol.~32, pp.
  1278--1286. JMLR.org (2014)

\bibitem{ShumwayS05}
Shumway, R.H., Stoffer, D.S.: Time Series Analysis and Its Applications.
  Springer Texts in Statistics, Springer-Verlag (2005)

\bibitem{DBLP:conf/qest/SoudjaniMN17}
Soudjani, S.E.Z., Majumdar, R., Nagapetyan, T.: Multilevel monte carlo method
  for statistical model checking of hybrid systems. In: Bertrand, N.,
  Bortolussi, L. (eds.) Quantitative Evaluation of Systems - 14th International
  Conference, {QEST} 2017, Berlin, Germany, September 5-7, 2017, Proceedings.
  Lecture Notes in Computer Science, vol. 10503, pp. 351--367. Springer (2017)

\bibitem{S11}
Stacey, A.: Comparative smootheology. Theory and Applications of Categories
  \textbf{25}(4),  64--117 (2011)

\bibitem{S17}
Staton, S.: Commutative semantics for probabilistic programming. In:
  Programming Languages and Systems - 26th European Symposium on Programming,
  {ESOP} 2017, Held as Part of the European Joint Conferences on Theory and
  Practice of Software, {ETAPS} 2017, Uppsala, Sweden, April 22-29, 2017,
  Proceedings. pp. 855--879 (2017)

\bibitem{SYWHK16}
Staton, S., Yang, H., Wood, F.D., Heunen, C., Kammar, O.: Semantics for
  probabilistic programming: higher-order functions, continuous distributions,
  and soft constraints. In: Proceedings of the 31st Annual {ACM/IEEE} Symposium
  on Logic in Computer Science, {LICS} '16, New York, NY, USA, July 5-8, 2016.
  pp. 525--534 (2016)

\bibitem{DBLP:conf/icml/TitsiasL14}
Titsias, M.K., L{\'{a}}zaro{-}Gredilla, M.: Doubly stochastic variational bayes
  for non-conjugate inference. In: Proceedings of the 31th International
  Conference on Machine Learning, {ICML} 2014, Beijing, China, 21-26 June 2014.
  pp. 1971--1979 (2014)

\bibitem{VKS19}
V{\'{a}}k{\'{a}}r, M., Kammar, O., Staton, S.: A domain theory for statistical
  probabilistic programming. {PACMPL}  \textbf{3}({POPL}),  36:1--36:29 (2019)

\bibitem{WW13}
Wingate, D., Weber, T.: Automated variational inference in probabilistic
  programming. CoRR  \textbf{abs/1301.1299} (2013)

\bibitem{Z81}
Zang, I.: Discontinuous optimization by smoothing. Mathematics of Operations
  Research  \textbf{6}(1),  140--152 (1981)

\bibitem{Zhang2019}
Zhang, C., Butepage, J., Kjellstrom, H., Mandt, S.: {Advances in Variational
  Inference}. IEEE Trans. Pattern Anal. Mach. Intell.  \textbf{41}(8),
  2008--2026 (2019)

\end{thebibliography}

\newpage

\appendix

\section{Supplementary Materials for \cref{sec:PL}}
\label{app:PL}

\subsection{Supplementary Materials for  \cref{sec:basic type system}}
\label{app:basict}

\unitt*
\begin{proof}[sketch]
  We define an equivalence relation $\approx$ on types by
  \begin{enumerate}
    \item $\iota\approx\iota'$
    \item $(\tau_1\addtr\trt \to \tau_2)
    \approx
    (\tau'_1\addtr\trt' \to \tau'_2)$ iff $\tau_1\approx\tau'_1$ implies $\trt=\trt'$ and $\tau_2\approx\tau'_2$
  \end{enumerate}
  Intuitively, two types are related by $\approx$ if for (inductively) related arguments they draw the same samples and again have related return types.
  We extend the relation to contexts: $\Gamma\approx\Gamma'$ if for all $x:\tau$ in $\Gamma$ and $x:\tau'$ in $\Gamma'$, $\tau\approx\tau'$.

  Then we show by induction that if $\Gamma\mid\trt\vdash M:\tau$, $\Gamma'\mid\trt'\vdash M:\tau'$ and $\Gamma\approx\Gamma'$ then $\trt=\trt'$ and $\tau\approx\tau'$. Finally, this strengthened statement allows us to prove the tricky case of the lemma: application.
\end{proof}

\subsection{Supplementary Materials for \cref{sec:densem}}
\label{app:densem}

Like measurable space $(X, \Sigma_X)$, a \emph{quasi Borel space} (QBS) is a pair $(X, M_X)$ where $X$ is a set;
but instead of axiomatising the measurable subsets $\Sigma_X$, QBS axiomatises the \emph{admissible random elements} $M_X$.
The set $M_X$, which is a collection of functions $\Real \to X$, must satisfy the following closure properties:
\begin{itemize}
\item if $\alpha \in M_X$ and $f : \Real \to \Real$ is measurable, then $\alpha \circ f \in M_X$
\item if $\alpha : \Real \to X$ is constant then $\alpha \in M_X$
\item given a countable partition of the reals $\Real = \biguplus_{i \in \nat} S_i$ where each $S_i$ is Borel, and $\{ \alpha_i \}_{i \in \nat} \subseteq M_X$, the function $r \mapsto \alpha_i(r)$ where $r \in S_i$ is in $M_X$.
\end{itemize}
The $\mathbf{QBS}$ morphisms $(X, M_X) \to (Y, M_Y)$ are functions $f : X \to Y$ such that $f \circ \alpha \in M_Y$ whenever $\alpha \in M_X$.

\dw{check}
\begin{lemma}[Substitution]
  \label{lem:substsem}
  Let $\Gamma,x:\tau'\mid\trt\vdash M:\tau$ and $\Gamma\mid[]\vdash N:\tau'$.

   Then
  $\sem M\big(\big(\gamma,\sem N(\gamma,[])\big),\lat\big)=\sem{M[N/x]}(\gamma,\lat)$.
\end{lemma}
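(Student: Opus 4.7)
The plan is to proceed by induction on the derivation of the typing judgement $\Gamma, x:\tau' \mid \Sigma \vdash M : \tau$. Before starting the induction, I would first establish a routine \emph{weakening} lemma (if $\Gamma \mid \Sigma \vdash M : \tau$ and $\Gamma \sqsubseteq \Gamma'$ with $\Gamma'$ extending $\Gamma$, then $\Gamma' \mid \Sigma \vdash M : \tau$ with $\sem{\Gamma' \mid \Sigma \vdash M : \tau}$ obtained by pre-composing with the obvious projection on $\sem{\Gamma'} \to \sem{\Gamma}$). This is needed in the application and abstraction cases because the premise $\Gamma \mid [] \vdash N : \tau'$ is stated in the smaller context.

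For the base cases: when $M = x$ we have $M[N/x] = N$ and $\sem{x}((\gamma, \sem{N}(\gamma,[])), []) = \sem{N}(\gamma,[])$ by the variable rule, so both sides agree. When $M = y$ for $y \ne x$, or $M$ is a constant, primitive, or $\sample_{\mathcal D}$, we have $M[N/x] = M$ and the semantics does not inspect the $x$-component of the environment, so the equality is immediate. The subtyping/subsumption case is handled by observing that the denotation of a coercion is independent of $x$, hence commutes with the substitution.

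For the inductive cases on term formers, the idea is to push the equation under the structural combinators $\comp$ and the pairing used in the interpretation. For application $M_1\,M_2$ with $\Gamma, x{:}\tau' \mid \Sigma_1\app\Sigma_2\app\Sigma_3 \vdash M_1\,M_2 : \tau_2$, the induction hypothesis applied to $M_1$ and $M_2$ gives
\begin{align*}
\sem{M_1}((\gamma,\sem{N}(\gamma,[])), \tr_1) &= \sem{M_1[N/x]}(\gamma, \tr_1),\\
\sem{M_2}((\gamma,\sem{N}(\gamma,[])), \tr_2) &= \sem{M_2[N/x]}(\gamma, \tr_2),
\end{align*}
and plugging these into the defining clause of $\comp$ yields the desired equation. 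The conditional case is entirely analogous: the induction hypothesis on the guard $L$ and on each branch $M_i$ gives equality of the three components, and the piecewise definition of $\sem{\ifc{L}{M_1}{M_2}}$ preserves this.

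The case I expect to be the main obstacle is the abstraction $M = \lambda y. M_0$. Here I would first silently $\alpha$-rename so that $y \ne x$ and $y \notin \mathrm{FV}(N)$, which is the standard hygiene condition that makes $(\lambda y. M_0)[N/x] = \lambda y. (M_0[N/x])$. Then, applying the induction hypothesis to the body $\Gamma, x{:}\tau', y{:}\tau_1 \mid \Sigma \vdash M_0 : \tau_2$ requires that $N$ be typable in the extended context $\Gamma, y{:}\tau_1$, which is where the weakening lemma is used; moreover, the denotation of $N$ in the enlarged context is (by weakening) independent of the $y$-component, so $\sem{N}((\gamma, v), []) = \sem{N}(\gamma, [])$. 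After a swap of the two rightmost components of the environment (which corresponds to the standard exchange isomorphism in $\qbs$), the induction hypothesis delivers exactly what the definition of $\sem{\lambda y. M_0}$ requires. The remaining bookkeeping—that all constructions used are $\qbs$-morphisms and the equalities hold pointwise—is routine.
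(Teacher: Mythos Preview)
Your proof is correct and follows the standard route for a denotational substitution lemma: induction on the typing derivation, with weakening and exchange to handle the abstraction case. The paper does not actually supply a proof of this lemma---it is stated in the supplementary material (with an internal ``check'' annotation) and left implicit---so there is nothing to compare against beyond noting that your argument is exactly the expected one.
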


\lo{Corollary - correctness of interpretation: Let $\Gamma \mid [] \vdash \lambda x. M: \tau' \addtr \Sigma \to \tau$ and $\Gamma \mid [] \vdash N:\tau'$. Then
\[
  \sem{\Gamma \mid \Sigma \vdash (\lambda x. M) \, N} = \sem{\Gamma \mid \Sigma \vdash M[N / x]}.
\]}

\subsection{Supplementary Materials for \cref{sec:opsem}}

\begin{figure}
\begin{framed}
  \begin{gather*}
    \infer{V\Downarrow_1^{[]} V}{}\qquad
    \infer{\sample_\dist\Downarrow^{[s]}_{\pdf_\dist(s)}\underline s}{}\\
    \infer[r<0]{\ifc LMN\Downarrow_{w_1\cdot w_2\cdot w_3}^{\tr_1\app\tr_2\app\tr_3} V}{L\Downarrow_{w_1}^{\tr_1}\underline r&M\Downarrow_{w_2}^{\tr_2} V&N\Downarrow_{w_3}^{\tr_3} V'}\\
    \infer[r\geq 0]{\ifc LMN\Downarrow_{w_1\cdot w_2\cdot w_3}^{\tr_1\app\tr_2\app\tr_3} V'}{L\Downarrow_{w_1}^{\tr_1}\underline r&M\Downarrow_{w_2}^{\tr_2} V&N\Downarrow_{w_3}^{\tr_3} V'}\\
    \infer[\circ\in\{+,\cdot\}]{M_1\infix\PCFcirc M_2\Downarrow_{w_1\cdot w_2}^{\tr_1\app\tr_2}\underline{r_1\circ r_2}}{M_1\Downarrow_{w_1}^{\tr_1}\underline{r_1}&M_2\Downarrow_{w_2}^{\tr_2}\underline{r_2}}\\
    \infer[\mathrm{op}\in\{-,{}^{-1},\exp,\log\}, r\in\dom(\mathrm{op})]{\underline{\mathrm{op}}\,M\Downarrow_w^\tr\underline{\mathrm{op}(r)}}{M\Downarrow_w^\tr\underline{r}}\\
    \qquad
    \infer{M\,N\Downarrow_{w_1\cdot w_2\cdot w_3}^{\tr_1\app\tr_2\app\tr_3} V}{M\Downarrow_{w_1}^{\tr_1}\lambda x\ldotp M'&N\Downarrow_{w_2}^{\tr_2} V'&M'[V'/x]\Downarrow_{w_3}^{\tr_3} V}
  \end{gather*}
  \caption{Operational big-step sampling-based semantics}
  \label{fig:ops}
\end{framed}
\end{figure}

The following can be verified by structural induction on $M$:
\begin{restatable}[Substitution]{lemma}{subst}
  \label{lem:substsimp}
  If $\Gamma,x:\tau'\mid\trt\vdash M:\tau$ and $\Gamma\mid[]\vdash N:\tau'$ then $\Gamma\mid\trt\vdash M[N/x]:\tau$.
\end{restatable}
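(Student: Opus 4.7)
The plan is a routine structural induction on the typing derivation of $\Gamma, x:\tau' \mid \trt \vdash M : \tau$, with an auxiliary weakening lemma established first. Before the main induction I would verify weakening: if $\Gamma \mid \trt \vdash P : \rho$ and $\Gamma \sqsubseteq \Gamma''$ (treating the addition of fresh bindings $y:\rho'$ as a special case), then $\Gamma'' \mid \trt \vdash P : \rho$. This follows by a straightforward induction using the subsumption rule of \cref{fig:stype}, and is needed because substituting under a binder forces $N$ to be retyped in an enlarged context.

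The main induction then proceeds case by case. The variable case splits on whether the variable is $x$: if $M \equiv x$, the typing rule forces $\trt = []$ and $\tau = \tau'$, so $M[N/x] \equiv N$ and the conclusion is exactly the second hypothesis; if $M$ is a different variable or a parameter, $M[N/x] \equiv M$ and the typing carries over unchanged. Constants, primitives, and $\sample_\dist$ contain no free occurrence of $x$, so they are immediate. For $\ifc L {M_1} {M_2}$, $M_1 M_2$, and the unary/binary primitive applications, the inductive hypothesis applied to each subterm yields the required judgements, and the trace-type concatenation is preserved because substitution does not alter it. The subsumption rule is handled by applying the IH and then re-invoking subsumption.

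The only slightly delicate case is abstraction $\lambda y.\, M'$. Here the premise is $\Gamma, x:\tau', y:\tau_1 \mid \trt'' \vdash M' : \tau_2$ with $\tau \equiv \tau_1 \addtr \trt'' \to \tau_2$ and $\trt = []$. By $\alpha$-conversion I may assume $y \neq x$ and $y \notin \mathrm{FV}(N)$, so that $(\lambda y.\, M')[N/x] \equiv \lambda y.\, M'[N/x]$. Applying weakening to the hypothesis on $N$ gives $\Gamma, y:\tau_1 \mid [] \vdash N : \tau'$, and then the IH on $M'$ (viewed with context $(\Gamma, y:\tau_1), x:\tau'$) yields $\Gamma, y:\tau_1 \mid \trt'' \vdash M'[N/x] : \tau_2$, from which the abstraction rule reconstructs the desired judgement.

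The main obstacle, such as it is, is this abstraction case: it is the only place where one must combine $\alpha$-renaming with a weakening step, and where one must check that the trace-type discipline (which requires the outer $\trt$ to be $[]$ at an abstraction) interacts correctly with the substitution. Everything else is bookkeeping on trace concatenation, which is preserved because substitution acts purely on the term skeleton while trace types are synthesised from the typing rules of \cref{fig:stype}.
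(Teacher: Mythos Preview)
Your plan matches the paper's (which says only ``structural induction on $M$''), and the weakening and abstraction cases are handled correctly. However, your one-line treatment of subsumption hides a real obstacle. When the last rule is subsumption, the premise has the form $\Gamma_0 \mid \trt \vdash M : \tau_0$ with $\Gamma_0 \sqsubseteq \Gamma, x:\tau'$ and $\tau_0 \sqsubseteq \tau$. Writing $\Gamma_0 = \Gamma_0', x:\tau''$ (so $\tau' \sqsubseteq \tau''$ and $\Gamma_0' \sqsubseteq \Gamma$), the induction hypothesis for the sub-derivation demands $\Gamma_0' \mid [] \vdash N : \tau''$. You only have $\Gamma \mid [] \vdash N : \tau'$; the type can be upcast via subsumption, but the context cannot be \emph{shrunk} from $\Gamma$ to $\Gamma_0'$, since the subsumption rule of \cref{fig:stype} only enlarges contexts. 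Concretely: take $\Gamma = y{:}\PCFReal$, $\tau' = \PCFRp$, $M \equiv x$, $\tau = \PCFReal$, and $N \equiv \PCFexp\,y$; the subsumption premise may be $x{:}\PCFReal \mid [] \vdash x : \PCFReal$, whose IH would require $\emptyset \mid [] \vdash \PCFexp\,y : \PCFReal$, which fails. (There is also the sub-case $x \notin \dom(\Gamma_0)$, which needs a free-variables lemma you have not stated.)

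The standard repair is either to strengthen the statement proved by induction---for all $\Gamma'$ with $\Gamma \sqsubseteq \Gamma'$ and $\Gamma' \mid [] \vdash N : \tau'$, conclude $\Gamma' \mid \trt \vdash M[N/x] : \tau$---which makes the subsumption case go through directly; or to follow the paper's hint literally and induct on the \emph{term} $M$, first establishing generation (inversion) lemmas that absorb any trailing uses of subsumption into each syntactic case. Either route is routine once identified, but your current ``apply the IH and re-invoke subsumption'' does not type-check as stated.
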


Note that it may not necessarily hold that $\Gamma,x:\tau'\mid\trt\vdash M:\tau$ and $\Gamma\mid\trt'\vdash N:\tau'$ imply $\Gamma\mid\trt\app\trt'\vdash M[N/x]:\tau$.
Take $M \equiv x+x$ and $N \equiv \sample_\nd$.
Then note that
\begin{align*}
  x:\PCFReal\mid[]&\vdash M:\PCFReal&
  \mid[\nd]&\vdash N:\PCFReal&
  \mid[\nd,\nd]&\vdash M[N/x].
\end{align*}

\paragraph{Discussion.}
\cref{lem:substsimp} is a slightly stronger version of the usual substitution lemma for a CBV language: if $\Gamma, x : \tau' \mid \Sigma \vdash M : \tau$ and $\Gamma \mid \Sigma' \vdash V : \tau$ then $\Gamma \mid \Sigma \app  \Sigma' \vdash M[V / x] : \tau$; note that $\Sigma' = []$ necessarily, and we also have $\Gamma \mid \Sigma \app  \Sigma' \vdash (\lambda x.M) \, V : \tau$.
Consequently, subject reduction holds for CBV $\beta$-reduction.

\section{Supplementary Materials for \cref{sec:smoothing}}
\label{app:smooth}

\begin{remark}
  \label{rem:gen}
  Suppose $\phi:X\to Y$ is a function and $(X,\mathcal C_X,\mathcal F_X)$ and $(Y,\mathcal C_Y,\mathcal C_X)$ are vector Fr\"olicher spaces, where the former is generated by $\mathcal C^0\subseteq\Set(\Real,X)$. Then $\phi$ is a morphism iff for all $f\in\mathcal F_Y$ and $c\in\mathcal C^0$, $f\circ\phi \circ c\in C^\infty(\Real,\Real)$ (i.e.\ it is not necessary to check $c\in\mathcal C_X\setminus\mathcal C^0$).
\end{remark}
(Note that $\mathcal C\subseteq\widetilde{\mathcal C}_X\subseteq\mathcal C_X$. Therefore, if $f:X\to\Real$ is such that for all $c\in\mathcal C_X\supseteq\mathcal C$, $f\circ c\in C^\infty(\Real,\Real)$ then $f\in\mathcal F_X$.)

\vfrccc*
\begin{proof}
  \begin{enumerate}
    \item Singleton vector spaces are terminal objects.
    \item Suppose $({X_1},\mathcal C_{X_1},\mathcal F_{X_1})$ and $({X_2},\mathcal C_{X_2},\mathcal F_{X_2})$ are vector Fr\"olicher spaces.
    Consider the vector Fr\"olicher space on $X_1\times X_2$ generated by $\{\langle c_1,c_2\rangle\mid c_1\in\mathcal C_{X_1}, c_2\in\mathcal C_{X_2}\}$.
    By construction
    $({X_1}\times {X_2},  \mathcal C_{{X_1}\times {X_2}},\mathcal F_{{X_1}\times {X_2}})$ is a vector Fr\"olicher space and $\pi_i:({X_1}\times {X_2},  \mathcal C_{{X_1}\times {X_2}},\mathcal F_{{X_1}\times {X_2}})\to ({X_i},  \mathcal C_{X_i},\mathcal F_{X_i})$ are morphisms.
    Now, suppose $Z$ and $f:Z\to {X_1}$ and $g:Z\to {X_2}$ are morphisms. Clearly, $h\defeq\langle f,g\rangle$ is the unique morphism $Z\to {X_1}\times {X_2}$ such that $\pi_1\circ h=f$ and $\pi_2\circ h=g$.

    \item Finally, suppose $(X,\mathcal C_X,\mathcal F_X)$ and $(Y,\mathcal C_Y,\mathcal F_Y)$ are vector Fr\"olicher spaces.
    Consider the vector Fr\"olicher space on the hom-set $\VFr(X,Y)$ generated by $\{c:\Real\to\VFr(X,Y)\mid ((r,x)\mapsto c(r)(x))\in \Fr(\Real\times X,Y)\}$.
    Define $\mathrm{eval}:\VFr(X, Y)\times X\to Y$ by $\mathrm{eval}(f,x)\defeq f(x)$. To see that this is a morphism by \cref{rem:gen} it suffices to consider $c_1:\Real\to C_{X\Rightarrow Y}$ such that $((r,x)\mapsto c_1(r)(x))\in\Fr(\Real\times X,Y)$, $c_2\in\mathcal C_X$ and $g\in\mathcal F_{Y}$.
    Note that
    \begin{align*}
      g\circ\mathrm{eval}\circ\langle c_1,c_2\rangle= g\circ\underbrace{ ((r,x)\mapsto c_1(r)(x))}_{\in \Fr(\Real\times X,Y)}\circ\underbrace{\langle\mathrm{id},c_2\rangle}_{\in\mathcal C_{\Real\times X}}
    \end{align*}
    which is in $C^\infty(\Real,\Real)$ by definition of morphisms.
    Clearly, this satisfies the required universal property.
  \end{enumerate}
\end{proof}

\section{Supplementary Materials for \cref{sec:sgd}}

\subsection{Supplementary Materials for \cref{sec:desi}}
\label{app:desi}

The following immediately follows from a well-known result about exchanging differentiation and integration, which is a consequence of the dominated convergence theorem \cite[Theorem 6.28]{K13}:
\begin{lemma}
  \label{lem:meas}
  Let $U \subseteq\Real$ be \changed[dw]{open}. Suppose $g\from\Real\times\Real^n\to\Real$ satisfies
  \begin{enumerate}
    \item for each $x\in\Real$, $\lat\mapsto g(x,\lat)$ is integrable
    \item $g$ is continuously differentiable everywhere
    \item there exists integrable $h\from\Real^n\to\Real$ such that for all $x\in U$ and $\lat\in\Real^n$, $|\frac{\partial g}{\partial x}(x,\lat)|\leq h(\lat)$.
  \end{enumerate}
  Then for all $x\in U$, $\frac\partial{\partial x}\int g(x,\lat)\diff\lat=\int\frac{\partial g}{\partial x}(x,\lat)\diff\lat$.
\end{lemma}

\begin{corollary}
  \label{cor:meas}
  Let $i\in\{1,\ldots,m\}$, $M>0$ and $U\defeq B_M(\mathbf 0)\subseteq\Real^m$ be the \changed[dw]{open $M$-ball}. Suppose $g\from\Real^m\times\Real^n\to\Real$ satisfies
  \begin{enumerate}
    \item for each $\mathbf x\in\Real^m$, $\lat\mapsto g(\mathbf x,\lat)$ is integrable
    \item $g$ is continuously differentiable everywhere
    \item there exists integrable $h\from\Real^n\to\Real$ such that for all $\mathbf x\in U$ and $\lat\in\Real^n$, $|\frac{\partial g}{\partial x_i}(\mathbf x,\lat)|\leq h(\lat)$.
  \end{enumerate}
  Then for all $\mathbf x\in U$, $\frac\partial{\partial x_i}\int g(\mathbf x,\lat)\diff\lat=\int\frac{\partial g}{\partial x_i}(\mathbf x,\lat)\diff\lat$.
\end{corollary}

\subsection{Supplementary Materials for \cref{sec:poly}}
\label{app:poly}

\absmom*
\begin{proof}
  For negation it is trivial.
  For addition it can be checked as follows:
  \begin{align*}
    \E[|(f+ g)(\lat)|^p]&\leq\E\left[|2f(\lat)|^p+ |2g(\lat)|^p\right]\\
    &\leq
    2^p\cdot\E\left[|f(\lat)|^p\right]+ 2^p\cdot\E\left[|g(\lat)|^p\right]<\infty
  \end{align*}
  For multiplication it follows from Cauchy-Schwarz:
  \begin{align*}
    \E[|(f\cdot g)(\lat)|^p]=\E\left[|f(\lat)|^p\cdot |g(\lat)|^p\right]\leq
    \sqrt{\E\left[|f(\lat)|^{2p}\right]\cdot\E\left[|g(\lat)|^{2p}\right] } <\infty
  \end{align*}
\end{proof}

\begin{figure}
\begin{framed}
  \begin{gather*}
    \infer[\Gamma\sqsubseteq_{\mathrm{poly}}\Gamma',
    \tau\sqsubseteq_{\mathrm{poly}}\tau']{\Gamma'\mid\trt\polyj M:\tau'}{\Gamma\mid\trt\polyj M:\tau}\qquad
    \infer{x:\tau\mid[]\polyj x:\tau}{}\\
    \infer[r\in\Real]{\mid[]\polyj \underline r:\PCFReal}{}\qquad
    \infer[r\in\Real_{>0}]{\mid[]\polyj \underline r:\PCFRp}{}\\
    \infer[\circ\in\{+,\cdot\}]{\mid[]\polyj \PCFcirc:\iota\to\iota\to\iota}{}\qquad
    \infer{\mid[]\polyj \underline-:\PCFReal\to\PCFReal}{}\\
    \infer{\Gamma\mid\trt\app\trt'\app\trt''\polyj\ifc LMN:\mathhighlight\safet}{\Gamma\mid\trt\polyj L:\PCFReal&\Gamma\mid\trt'\polyj M:\mathhighlight\safet&\Gamma\mid\trt''\polyj N:\mathhighlight\safet}\\
    \infer[\mathhighlight{\dist\text{ has finite moments}}]{\mid[s_j\sim\dist]\polyj\sample_{\mathcal\dist}:\PCFReal}{}\\
    \infer{\Gamma\mid[]\polyj\lambda y\ldotp M:\tau_1\addtr\trt\to\tau_2}{\Gamma,y:\tau_1\mid\trt\polyj M:\tau_2}\qquad
    \infer{\Gamma\mid\trt_1\app\trt_2\app\trt_3\polyj M\,N:\tau_2}{\Gamma\mid\trt_1\polyj M:\tau_1\addtr\trt_3\to\tau_2&\Gamma\mid\trt_2\polyj N:\tau_1}
  \end{gather*}
\end{framed}
\caption{Typing judgements for $\polyj$.}
\end{figure}

\fundpolyf*
\begin{proof}
  \dw{subtyping}
  We prove the claim by induction on $M$.
\begin{enumerate}
  \item For constants $\underline r$ and variables $x_i$ this is obvious; for parameters $\theta_i$ it is ensured by \cref{ass:comp}.
  \item $\sem{\sample_\dist}((),[s])=s$ clearly has finite moments because $\dist$ does.
  \item Next, to show $\sem\PCFplus\in\ppred^{(0)}_{\iota\to\iota\to\iota}$ (multiplication can be checked analogously) let $n_1,n_2\in\nat$, $f_1\in\ppred^{(n_1)}_{\iota}$, $f_2\in\ppred^{(n_1+n_2)}_{\iota}$. By definition $f_1$ and $f_2$ are uniformly dominated by some $g_1$ and $g_2$, respectively, with finite moments. By \cref{lem:absmom} $g_1+g_2$ has finite moments to and
  \begin{align*}
    |(\sem\PCFplus\lrcomp f_1\lrcomp f_2)(\para,\tr_1\app\tr_2)|&\leq |f_1(\para,\tr_1)|+|f_2(\para,\tr_1\app\tr_2)|\\
    &\leq g_1(\tr_1)+g_2(\tr_1\app\tr_2)
  \end{align*}
  \item The reasoning for $\PCFmin$ is straightforward and $\PCFinv$, $\PCFexp$ and $\PCFlog$ cannot occur.
  \item The claim for conditionals follows \cref{lem:ppredcond}.

  \item For applications it follows immediately from the inductive hypothesis and the definition.

  Suppose $\para,x_1:\tau_1,\ldots,\para,x_\ell:\tau_\ell\mid\trt_1\app\trt_2\app\trt_3:\tau_\ell\polyj M\,N:\tau$ because
  $\para,x_1:\tau_1,\ldots,\para,x_\ell:\tau_\ell\mid\trt_1:\tau_\ell\polyj M:\tau'\addtr\trt_3\to\tau$ and $\para,x_1:\tau_1,\ldots,\para,x_\ell:\tau_\ell\mid\trt_2:\tau_\ell\polyj M\,N:\tau'$. \dw{check order everywhere}

  Let $n\in\nat$ and $\xi_1\in\ppred^{(n)}_{\tau_1},\ldots,\xi_\ell\in\ppred^{(n)}_{\tau_\ell}$. By the inductive hypothesis,
  \begin{align*}
    \sem M\flcomp\langle\xi_1,\ldots,\xi_\ell\rangle&\in\ppred^{(n+|\trt_1|)}_{\tau'\addtr\trt_3\to\tau}&
    \sem N\flcomp\langle\xi_1,\ldots,\xi_\ell\rangle&\in\ppred^{(n+|\trt_1|+|\trt_2|)}_{\tau'}
  \end{align*}
  By definition of $\ppred^{(n+|\trt_1|)}_{\tau'\addtr\trt_3\to\tau}$,
  \begin{align*}
    (\sem M\flcomp\langle\xi_1,\ldots,\xi_\ell\rangle)\lrcomp(\sem N\flcomp\langle\xi_1,\ldots,\xi_\ell\rangle)\in\ppred^{(n+|\trt_1|+|\trt_2|+|\trt_3|)}_\tau
  \end{align*} and by definition of $\lrcomp$ and $\flcomp$, \dw{check}
  \begin{align*}
    (\sem M\flcomp\langle\xi_1,\ldots,\xi_\ell\rangle)\lrcomp(\sem N\flcomp\langle\xi_1,\ldots,\xi_\ell\rangle)=
    \sem {M\,N}\flcomp\langle\xi_1,\ldots,\xi_\ell\rangle
  \end{align*}
  \item For abstractions suppose $\para,x_1:\tau_1,\ldots,x_\ell:\tau_\ell\mid[]\polyj\lambda y\ldotp M:\tau\addtr\trt\to\tau'$ because $\para,x_1:\tau_1,\ldots,x_\ell:\tau_\ell,y:\tau\mid\trt\polyj M:\tau'$;
  let $n\in\nat$ and $\xi_1\in\ppred^{(n)}_{\tau_1},\ldots,\xi_\ell\in\ppred^{(n)}_{\tau_\ell}$.

  To show the claim, suppose $n_2\in\nat$ and $g\in\ppred_{\tau}^{(n+n_2)}$. By definition of the logical predicate we need to verify
  $\left(\sem M\flcomp\langle\xi_1,\ldots,\xi_\ell\rangle\right)\lrcomp g\in\ppred_{\tau'}^{(n+n_2+|\trt|)}$.

  Call $\widehat\xi_i$ the extension of $\xi_i$ to $\parasp\times\Real^{n+n_2}\to\Real$.
  By the inductive hypothesis,
  \begin{align*}
    \sem M\flcomp\langle\widehat\xi_1,\ldots,\widehat\xi_\ell,g\rangle\in\ppred_{\tau'}^{(n+n_2+|\trt|)}
  \end{align*}
  Finally it suffices to observe that \dw{check}
  \begin{align*}
    \left(\sem M\flcomp\langle\xi_1,\ldots,\xi_\ell\rangle\right)\lrcomp g=\sem M\flcomp\langle\widehat\xi_1,\ldots,\widehat\xi_\ell,g\rangle
  \end{align*}
\end{enumerate}
\end{proof}

\subsection{Supplementary Materials for \cref{sec:gentype}}
\label{app:gentype}
See \cref{fig:genty}.

\begin{figure}
  \begin{framed}

  \begin{subfigure}{\linewidth}
  \begin{gather*}
    \infer[\mathrm{(cond.\ subt.\ 1)}]{\iota^{\ann}\sqsubseteq_?\iota^{\ann'}}{}\qquad\infer[\mathrm{(cond.\ subt.\ 2)}]{\PCFRp^{\ann}\sqsubseteq_?\PCFReal^{\ann'}}{}\\
      \infer{(\tau_1\addtr\trt\to\tau_2)\sqsubseteq_?(\tau_1'\addtr\trt\to\tau_2')}{\tau'_1\sqsubseteq_?\tau_1&
      \tau_2\sqsubseteq_?\tau'_2}
  \end{gather*}
  \caption{Subtyping}
  \label{fig:gensubtyping}
\end{subfigure}

\begin{subfigure}{\linewidth}
\begin{gather*}
  \infer[\Gamma\sqsubseteq_?\Gamma',\tau\sqsubseteq_?\tau']{\Gamma'\mid\trt\genj M:\tau'}{\Gamma\mid\trt\genj M:\tau}
  \qquad
  \infer{x:\tau \mid [] \genj x:\tau}{}\\
  \infer{\Gamma\mid[]\genj\lambda y\ldotp M:\tau_1\addtr\trt\to\tau_2}{\Gamma,y:\tau_1\mid\trt\genj M:\tau_2}\qquad
  \infer{\Gamma\mid\trt_1\app\trt_2\app\trt_3\genj M\,N:\tau_2}{\Gamma\mid\trt_2\genj M:\tau_1\addtr\trt_3\to\tau_2&\Gamma\mid\trt_1\genj N:\tau_1}\\
  \infer[\mathrm{(cond.\ Para)}]{\theta_i:\iota^{\ann}\mid[]\genj\theta_i:\iota^{\ann}}{}\qquad
  \infer[\mathrm{(cond.\ Const)}, r\in\sem\iota]{\mid[]\genj\underline r:\iota^{\ann}}{}\\
  \infer[\mathrm{(cond.\ Add)}]{\mid[]\genj\PCFplus:\iota^{\ann_1}\to\iota^{\ann_2}\to\iota^{\ann}}{}\\
  \infer[\mathrm{(cond.\ Mul)}]{\mid[]\genj\PCFmul:\iota^{\ann_1}\to\iota^{\ann_2}\to\iota^{\ann}}{}\\
  \infer[\mathrm{(cond.\ Min)}]{\mid[]\genj\PCFmin:\PCFReal^{\ann}\to\PCFReal^{\ann}}{}\qquad
  \infer[\mathrm{(cond.\ Inv)}]{\mid[]\genj\PCFinv:\PCFRp^{\ann}\to\PCFRp^{\ann}}{}\\
  \infer[\mathrm{(cond.\ Exp)}]{\mid[]\genj\PCFexp:\PCFReal^{\ann}\to\PCFRp^{\ann'}}{}\qquad
  \infer[\mathrm{(cond.\ Log)}]{\mid[]\genj\PCFlog:\PCFRp^{\ann}\to\PCFReal^{\ann'}}{}\\
  \infer[\mathrm{(cond.\ If)}]{\Gamma\mid\trt\app\trt'\app\trt''\genj\ifc LMN:\safet}{\Gamma\mid\trt\genj L:\iota^{\ann}&\Gamma\mid\trt'\genj M:\safet&\Gamma\mid\trt''\genj N:\safet}\\
  \infer[\mathrm{(cond.\ Sample)}]{\mid[s_j\sim\dist]\genj\sample_\dist:\PCFReal^{\ann}}{}
\end{gather*}
\caption{Typing rules for $\genj$}
\end{subfigure}
\caption{Generic type system with annotations.}
\label{fig:genty}
\end{framed}
\end{figure}

\subsection{Supplementary Materials for \cref{sec:int}}
\label{app:int}
\begin{figure}
  \begin{framed}
    \begin{gather*}
      \infer[\Gamma\sqsubseteq_{\mathrm{SGD}}\Gamma',\tau\sqsubseteq_{\mathrm{SGD}}\tau']{\Gamma'\mid\trt\intj M:\tau'}{\Gamma\mid\trt\intj M:\tau}
      \qquad
      \infer{x:\tau \mid [] \intj x:\tau}{}\\
      \infer{\Gamma\mid[]\intj\lambda y\ldotp M:\tau_1\addtr\trt\to\tau_2}{\Gamma,y:\tau_1\mid\trt\intj M:\tau_2}\\
      \infer{\Gamma\mid\trt_1\app\trt_2\app\trt_3\intj M\,N:\tau_2}{\Gamma\mid\trt_2\intj M:\tau_1\addtr\trt_3\to\tau_2&\Gamma\mid\trt_1\intj N:\tau_1}\\
      \infer{\theta_i:\iota^{(0)}\mid[]\intj\theta_i:\iota^{(0)}}{}\qquad
      \infer[r\in\sem\iota]{\mid[]\intj\underline r:\iota^{(0)}}{}\\
      \infer{\mid[]\intj\PCFplus:\iota^{(0)}\to\iota^{(0)}\to\iota^{(0)}}{}\qquad
      \infer{\mid[]\intj\PCFmul:\iota^{(e)}\to\iota^{(e)}\to\iota^{(e)}}{}\\
      \infer{\mid[]\intj\PCFmin:\PCFReal^{(0)}\to\PCFReal^{(0)}}{}\qquad
      \infer{\mid[]\intj\PCFinv:\PCFRp^{(e)}\to\PCFRp^{(e)}}{}\\
      \infer{\mid[]\intj\PCFexp:\PCFReal^{(0)}\to\PCFRp^{(1)}}{}\qquad
      \infer{\mid[]\intj\PCFlog:\PCFRp^{(e)}\to\PCFReal^{(0)}}{}\\
      \infer{\Gamma\mid\trt\app\trt'\app\trt''\intj\ifc LMN:\safet}{\Gamma\mid\trt\intj L:\iota^{(0)}&\Gamma\mid\trt'\intj M:\safet&\Gamma\mid\trt''\intj N:\safet}\\
      \infer[\dist\text{ has finite moments}]{\mid[s_j\sim\dist]\intj\sample_\dist:\PCFReal^{(0)}}{}
    \end{gather*}
    \caption{Typing rules for $\intj$}
  \end{framed}
\end{figure}

We define the logical predicate $\qpred^{(n)}_\tau$ on $\parasp\times\Real^n\to\sem\tau$ in $\VFr$:
\begin{enumerate}
  \item $f\in\qpred^{(n)}_{\iota^{(e)}}$ if
  \begin{enumerate}
    \item partial derivatives of $\log^e\circ f$ up to order 2 are uniformly dominated by a function with finite moments
    \item if $\iota^{(e)}$ is $\PCFRp^{(0)}$ then $f$ is dominated by a positive constant function
  \end{enumerate}
  \item $f\in\ppred^{(n)}_{\tau_1\addtr\trt_3\to\tau_2}$ if for all $n_2\in\nat$ and $g\in\qpred^{(n+n_2)}_{\tau_1}$, $f\lrcomp g\in\qpred^{(n+n_2+|\trt_3|)}_{\tau_2}$.
\end{enumerate}

\begin{lemma}[Fundamental]
  If $\theta_1:\iota_1^{(0)},\ldots,\theta_m:\iota_m^{(0)},x_1:\tau_1,\ldots,x_\ell:\tau_\ell\mid\trt\intj M:\tau$, $n\in\nat$, $\xi_1\in\qpred^{(n)}_{\tau_1},\ldots,\xi_\ell\in\qpred^{(n)}_{\tau_\ell}$ then
  $\sema M\flcomp\langle\xi_1,\ldots,\xi_\ell\rangle\in\qpred^{(n+|\trt|)}_{\tau}$.
\end{lemma}
\begin{proof}
  Similar to \cref{lem:fundpoly1}, exploiting standard rules for logarithm and partial derivatives.
\end{proof}
\iffalse
\begin{proof}
\begin{enumerate}
  \item addition: exploit closure under addition

  Next, to show $\sema\PCFplus\in\qpred^{(0)}_{\iota^{(0)}\to\iota^{(0)}\to\iota^{(0)}}$ let $n_1,n_2\in\nat$, $f_1\in\qpred^{(n_1)}_{\iota^{(0)}}$, $f_2\in\qpred^{(n_1+n_2)}_{\iota^{(0)}}$. Let $g_1$ and $g_2$ be the respective dominating functinos. Call $\widehat g_2$ the extension of $g_2$ to $\Real^{n_1+n_2}\to\sem\iota$.
  Since $\widehat g_1$ and $g_2$ have finite absolute moments so does $\widehat g_1+g_2$ (\cref{lem:absmom}). \dw{reference} Furthermore clearly,
  \begin{align*}
    |(\sema\PCFplus\lrcomp f_1\lrcomp f_2)(\para,\tr_1\app\tr_2)|&=|f_1(\para,\tr_1)+f_2(\para,\tr_1\app\tr_2)|\\
    &\leq |f_1(\para,\tr_1)|+|f_2(\para,\tr_1\app\tr2)|\\
    &\leq g_1(\tr_1)+g_2(\tr_1\app\tr_2)\\
    &=(\widehat g_1+g_2)(\tr_1\app\tr_2)
  \end{align*}
  \dw{clearly constant if necessary}
  \item For multiplication we similarly exploit closure under addition and multiplication for $e=1$ and $e=0$, respectively (\cref{lem:absmom}). \dw{}
  \item primitive operations
  \item Suppose $\para,x_1:\tau_1,\ldots,x_\ell:\tau_\ell\mid\trt_1\app\trt_2\app\trt_3\intj\ifc LMN:\iota^{(e)}$ \dw{simplify rule accordingly, safe type} because $\para,x_1:\tau_1,\ldots,x_\ell:\tau_\ell\mid\trt_1\intj L:\PCFReal^{(0)}$, $\para,x_1:\tau_1,\ldots,x_\ell:\tau_\ell\mid\trt_2\intj M:\iota^{(0)}$ and $\para,x_1:\tau_1,\ldots,x_\ell:\tau_\ell\mid\trt_3\intj N:\iota^{(0)}$. Let $n\in\nat$ and $\xi_1\in\qpred_{\tau_1}^{(n)},\ldots,\xi_\ell\in\qpred_{\tau_\ell}^{(n)}$.

  By the inductive hypothesis there exist $f:\Real^{n+|\trt_1|}\to\Real$, $g:\Real^{n+|\trt_2|}\to\Real$ and $h:\Real^{n+|\trt_3|}\to\Real$ with absolute moments dominating $\sema L\flcomp\langle\xi_1,\ldots,\xi_\ell\rangle$, $\sema M\flcomp\langle\xi_1,\ldots,\xi_\ell\rangle$ and $\sema N\flcomp\langle\xi_1,\ldots,\xi_\ell\rangle$, respectively. Let $\widehat f,\widehat g,\widehat h$ be the respective extensions to $\Real^{n+|\trt_1|+|\trt_2|+|\trt_3|}\to\Real$. By \cref{lem:absmom} $g+h$ has finite moments. Besides,
  \begin{align*}
    &|(\sema {\ifc LMN}\flcomp\langle\xi_1,\ldots,\xi_\ell\rangle)(\para,\tr\app\tr_1\app\tr_2\app\tr_3)|\\
    &\leq|(\sema M\flcomp\langle\xi_1,\ldots,\xi_\ell\rangle)(\para,\tr\app\tr_2)|+|(\sema N\flcomp\langle\xi_1,\ldots,\xi_\ell\rangle)(\para,\tr\app\tr_3)|\\
    &\leq(\widehat g+\widehat h)(\tr\app\tr_1\app\tr_2\app\tr_3)
  \end{align*}

  \dw{constant if necessary}

  \dw{higher types}
  \item For abstractions \dw{similar}
  \item For applications it follows immediately from the inductive hypothesis and the definition.
  \item The reasoning for $-$ is similar.
  \item For $\mid[\dist]\intj\sample_\dist$ this is obvious because $\sema\sample=((),[s])\mapsto s$ is clearly a polynomial.   \item $\PCFinv$, $\PCFexp$ and $\PCFlog$ cannot occur.
\end{enumerate}
\end{proof}
\fi



\section{Supplementary Materials for \cref{sec:conv}}
\label{app:conv}

\begin{example}[Divergence]
  \label{ex:div}
  Suppose $M\equiv\ifc{((\lambda z\ldotp z^3+\theta)\,\sample_\nd)}{\underline 0}{\underline 1}$.  Let $\phi_\theta(z)\defeq z^3+\theta$.
  Note that despite being bijective, $\phi_\theta\from\Real\to\Real$ is not a diffeomorphism because $\phi^{-1}_\theta(\alpha)=\sqrt[3]{\alpha-\theta}$ is not differentiable at $\alpha=\theta$.
  Then
  \begin{align*}
    \E_{z\sim\nd}[\sem M(\theta,z)]&=\int_{-\sqrt[3]{-\theta}}^\infty\nd(z\mid 0,1)\diff z\\
    \frac\partial{\partial\theta}\E_{z\sim\nd}[\sem M(\theta,z)]&=\frac 1 3\cdot \nd(-\sqrt[3]{-\theta}\mid 0,1)\cdot \theta^{-\frac 2 3}
  \end{align*}
  Therefore $\theta\mapsto\E_{z\sim\nd}[\sem M(\para,z)]$ is not differentiable at $0$. Besides, for $\theta=0$,
  \begin{align*}
    \E_{z\sim\nd}\left[\frac\partial{\partial\theta}\sema{M(\para,z)} \right]=\E_{z\sim\nd}\left[\sigma_\eta'(z^3)\right]\to\infty
  \end{align*}
\end{example}

\begin{figure}
  \begin{framed}
    \begin{gather*}
      \infer[\Gamma\sqsubseteq_{\mathrm{unif}}\Gamma',\tau\sqsubseteq_{\mathrm{unif}}\tau']{\Gamma'\mid\trt\gsj M:\tau'}{\Gamma\mid\trt\gsj M:\tau}
      \qquad
      \infer{x:\tau \mid [] \gsj x:\tau}{}\\
      \infer{\Gamma\mid[]\gsj\lambda y\ldotp M:\tau_1\addtr\trt\to\tau_2}{\Gamma,y:\tau_1\mid\trt\gsj M:\tau_2}\qquad
      \infer{\Gamma\mid\trt_1\app\trt_2\app\trt_3\gsj M\,N:\tau_2}{\Gamma\mid\trt_2\gsj M:\tau_1\addtr\trt_3\to\tau_2&\Gamma\mid\trt_1\gsj N:\tau_1}\\
      \infer{\theta_i:\iota^{(\false,\emptyset)}\mid[]\gsj\theta_i:\iota^{(\false,\emptyset)}}{}\qquad
      \infer[r\in\sem\iota]{\mid[]\gsj\underline r:\iota^{(\false,\emptyset)}}{}\\
      \infer[\circ\in\{+,\cdot\}]{\mid[]\gsj\PCFcirc:\iota^{(\false,\dep)}\to\iota^{(\false,\dep)}\to\iota^{(\false,\dep)}}{}\\
      \infer[\circ\in\{+,\cdot\},\dep_1\cap\dep_2=\emptyset]{\mid[]\gsj\PCFcirc:\iota^{(\true,\dep_1)}\to\iota^{(\true,\dep_2)}\to\iota^{(\true,\dep_1\cup\dep_2)}}{}\\
      \infer{\mid[]\gsj\PCFmin:\PCFReal^{(g,\dep)}\to\PCFReal^{(g,\dep)}}{}\qquad
      \infer{\mid[]\gsj\PCFinv:\PCFRp^{(g,\dep)}\to\PCFRp^{(g,\dep)}}{}\\
      \infer{\mid[]\gsj\PCFexp:\PCFReal^{(g,\dep)}\to\PCFRp^{(g,\dep)}}{}\qquad
      \infer{\mid[]\gsj\PCFlog:\PCFRp^{(g,\dep)}\to\PCFReal^{(g,\dep)}}{}\\
      \infer{\Gamma\mid\trt\app\trt'\app\trt''\gsj\ifc LMN:\safet}{\Gamma\mid\trt\gsj L:\iota^{(\true,\dep)}&\Gamma\mid\trt'\gsj M:\safet&\Gamma\mid\trt''\gsj N:\safet}\\
      \infer{\mid[s_j\sim\dist]\gsj\sample_\dist:\PCFReal^{(\true,\{s_j\})}}{}\\
      \infer[T\text{ diffeomorphic}]{\para\mid[s_j\sim\dist]\gsj\transt\dist T:\PCFReal^{(\true,\{s_j\})}}{\para\mid[]\gsj\sd:\PCFReal^{\ann}\to\PCFReal^{\ann}}\\
      \infer[a\in\nat_{>0}]{\Gamma\mid[]\gsj\infix\PCFtimes M:\iota^{(\true,\dep)}}{\Gamma\mid[]\gsj M:\iota^{(\true,\dep)}}\qquad
      \infer[a\in\nat_{>0}]{\Gamma\mid[]\gsj M\infix\PCFpower:\iota^{(\true,\dep)}}{\Gamma\mid[]\gsj M:\iota^{(\true,\dep)}}
    \end{gather*}
    \caption{Typing rules for $\gsj$.}
    \label{fig:gsjt}
  \end{framed}
\end{figure}

\subsection{Properties of Uniform Almost Uniform Convergence}


Let $\mu(U)=\E_{\lat\sim\mdist}[[\lat\in U]]$, where $\mdist$ has finite moments and $\rep_\para$ be a diffeomorphism. We continue assuming compactness of $\parasp$.

\begin{lemma}
  \label{lem:balldim}
  $\lim_{k\in\nat}\sup_{\para\in\parasp}\mu(\rep_\para^{-1}(\Real^n\setminus\ball k))=0$
\end{lemma}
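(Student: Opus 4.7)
The plan is to rewrite the event, identify a single dominating measurable function that is everywhere finite, and conclude via the continuity of the finite measure $\mu$ from above. First I would unfold the definition: $\rep_\para^{-1}(\Real^n \setminus \ball k) = \{\lat \in \Real^n : \|\rep_\para(\lat)\| \geq k\}$, so the quantity of interest is $\Pr_{\lat\sim\mdist}(\|\rep_\para(\lat)\| \geq k)$, and the $\sup_\para$ suggests looking at a pointwise envelope.

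Next, I would introduce
\[
  B(\lat) \;\defeq\; \sup_{\para\in\parasp}\,\|\rep_\para(\lat)\|.
\]
Since the reparameterisation term is interpreted in $\VFr$ and so the map $(\para,\lat)\mapsto\rep_\para(\lat)$ is (jointly) smooth (hence continuous), and $\parasp$ is compact, for each fixed $\lat$ the function $\para\mapsto\|\rep_\para(\lat)\|$ is continuous on a compact set, so $B(\lat)<\infty$ for every $\lat$. Measurability of $B$ follows from the joint continuity: the supremum can be restricted to a countable dense subset of $\parasp$, rendering $B$ a countable supremum of measurable (indeed continuous) functions.

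The key comparison is that for every $\para\in\parasp$,
\[
  \{\lat : \|\rep_\para(\lat)\|\geq k\} \;\subseteq\; \{\lat : B(\lat)\geq k\},
\]
so $\sup_{\para\in\parasp}\mu\bigl(\rep_\para^{-1}(\Real^n\setminus\ball k)\bigr) \leq \mu\{B\geq k\}$. The sets $A_k\defeq\{B\geq k\}$ are decreasing in $k$, and $\bigcap_k A_k = \{B=\infty\}=\emptyset$ since $B$ is everywhere finite. Because $\mu$ is a probability measure ($\mdist$ is a probability distribution with finite moments, hence in particular finite mass), continuity of measure from above gives $\mu(A_k)\searrow 0$, which yields the claim.

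The main obstacle is justifying that $B$ is finite (and measurable) everywhere. In the language of the paper, this rests on the fact that the diffeomorphic reparameterisation terms $T$ arising in $\transt\dist T$ have a $\VFr$-interpretation that is smooth jointly in $(\para,\lat)$, combined with compactness of $\parasp$. If one prefers an entirely quantitative argument, the finite-moments hypothesis on $\mdist$ can be exploited instead: for the reparameterisations actually producible in our type system, $\|\rep_\para(\lat)\|$ admits a polynomial bound in $\|\lat\|$ with coefficients continuous in $\para$, hence uniformly bounded over $\parasp$; Markov's inequality then gives $\mu\{\|\rep_\para(\lat)\|\geq k\}\leq C/k^p$ uniformly in $\para$.
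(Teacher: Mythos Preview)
Your argument is correct. Defining the envelope $B(\lat)=\sup_{\para\in\parasp}\|\rep_\para(\lat)\|$, using joint continuity of $(\para,\lat)\mapsto\rep_\para(\lat)$ together with compactness of $\parasp$ to get $B<\infty$ everywhere, and then invoking continuity of the probability measure $\mu$ from above on the decreasing sets $\{B\geq k\}$ is a clean and complete proof.

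The paper takes a different, more quantitative route: it bounds $|\rep_\para^{(i)}(\lat)|$ explicitly via the mean value theorem by $\delta_*^{(i)}+d_k^{(i)}\cdot k$ for $\lat$ in a $k$-ball, with $\delta_*^{(i)}$ and $d_k^{(i)}$ being suprema over $\parasp$ of $|\rep_\para^{(i)}(\lat_0)|$ and of the gradient norm, respectively. This yields $\rep_\para^{-1}(\Real^n\setminus\mathbf B_{\delta_k}(\mathbf 0))\subseteq\Real^n\setminus\ball k$ uniformly in $\para$, and hence $\sup_\para\mu(\rep_\para^{-1}(\Real^n\setminus\mathbf B_{\delta_k}(\mathbf 0)))\leq\mu(\Real^n\setminus\ball k)\to 0$; the conclusion for general radii then follows since $\delta_k\to\infty$ and the quantity is monotone in the radius (the paper's marginal note ``reason in opposite direction'' flags that this last step is not spelled out). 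Your envelope argument sidesteps this bookkeeping entirely and needs only continuity and compactness, whereas the paper's approach gives an explicit growth bound on $\rep_\para$---useful if one wanted rates, but unnecessary for the bare limit statement.
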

\begin{proof}
  Let $\lat_0\in\Real^n$ be arbitrary.
  \renewcommand\ball[1]{\mathbf B_{#1}(\lat_0)}

  \begin{align*}
    \delta_*^{(i)}&\defeq\sup_{\para\in\parasp}|\rep_\para^{(i)}(\lat_0)|\\
    d_k^{(i)}&\defeq\sup_{\para\in\parasp}\sup_{\lat\in\ball k}\|\nabla_\para\rep^{(i)}_\para(\lat)\|
  \end{align*}
  thus if $\lat\in\ball k$,
  \begin{align*}
    |\rep_\para(\lat)^{(i)}|&\leq \|\rep^{(i)}_\para(0)\|+\langle\nabla\rep^{(i)}_\para(\zeta),x\rangle\\
    &\leq \delta_*^{i}+\|\nabla\rep^{(i)}_\para(\zeta)\|\cdot\|x\|\\
    &\leq \delta_*^{i}+ d_k^{(i)}\cdot k
  \end{align*}
  Let
  \begin{align*}
    \delta_k^{(i)}&\defeq \delta_*^{i}+ d_k^{(i)}\cdot k&
    \delta_k&\defeq\sqrt n\cdot\max_{i=1}^n\delta_k^{(i)}
  \end{align*}
  Note that for $\lat\in\Real^n$,
  if $\|\rep_\para(\lat)\|>\delta_k$ then $|\rep_\para^{(i)}(\lat)|>\delta_r^{(i)}$ for some $1\leq i\leq n$ and thus $\lat\in\Real^n\setminus\ball k$.
  As a consequence, $\rep_\para^{-1}(\Real^n\setminus\mathbf B_{\delta_k}(\mathbf 0))\subseteq\Real^n\setminus\ball k$.

  Now, it suffices to observe that $\lim_k\mu(\Real^n\setminus\ball k)=0$. \dw{reason in opposite direction}
\end{proof}

\begin{lemma}
  For each $k\in\nat$ there exists $c>0$ such that $\mu(\rep^{-1}_\para(U\cap\ball k))\leq c\cdot\Leb(U\cap\ball k)$.
\end{lemma}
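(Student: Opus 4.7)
The goal is to show that on the bounded region $\ball k$, the pushforward of $\mu$ under $\rep_\para^{-1}$ has a density with respect to Lebesgue measure that is bounded uniformly in $\para\in\parasp$. Concretely, I would write, for any measurable $U$,
\[
\mu(\rep_\para^{-1}(U\cap\ball k))
= \int_{\rep_\para^{-1}(U\cap\ball k)} p(\lat)\diff\lat
= \int_{U\cap\ball k} p\bigl(\rep_\para^{-1}(\mathbf y)\bigr)\,\bigl|\det \Jac_{\rep_\para^{-1}}(\mathbf y)\bigr|\diff\mathbf y,
\]
using that $\rep_\para$ is a diffeomorphism and that $\mdist$, being a product of standard continuous distributions with finite moments, admits a (bounded, continuous) density $p$. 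So it suffices to exhibit a $\para$-uniform upper bound on the integrand over $\ball k$.

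The first step is to bound the preimage: since $\rep_\para\in \VFr$ is smooth jointly in $(\para,\lat)$ and $\parasp$ is compact, the family of maps $\{\rep_\para^{-1}:\ball k\to\Real^n\mid \para\in\parasp\}$ is equicontinuous and takes values in a bounded set, so there exists $K=K(k)$ with $\rep_\para^{-1}(\ball k)\subseteq\ball K$ for every $\para$. Since $p$ is continuous and (at least on any bounded ball) bounded, $p\circ\rep_\para^{-1}$ is uniformly bounded on $\ball k$ by some $c_1=c_1(k)$. Similarly, $(\para,\mathbf y)\mapsto\det \Jac_{\rep_\para^{-1}}(\mathbf y)$ is continuous on the compact set $\parasp\times\overline{\ball k}$, and hence bounded by some $c_2=c_2(k)$.

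Setting $c:=c_1\cdot c_2$ then yields
\[
\mu(\rep_\para^{-1}(U\cap\ball k))
\leq c\int_{U\cap\ball k}\diff\mathbf y
= c\cdot\Leb(U\cap\ball k),
\]
which is the desired bound.

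The main obstacle I expect is the step bounding $\rep_\para^{-1}(\ball k)$ uniformly in $\para$: this relies on joint smoothness of $\rep$ in $(\para,\lat)$ and compactness of $\parasp$. If $\rep_\para^{-1}$ is only guaranteed to be smooth in $\lat$ for each fixed $\para$, one has to invoke that $\rep$ (or rather its inverse constructed via the implicit function theorem using the diffeomorphism assumption) depends continuously on $\para$ as well; this is essentially where the hypothesis that $\parasp$ is compact is doing real work, analogously to how it was used in \cref{lem:balldim}. Everything else then follows from continuity of $p$ and standard change-of-variables.
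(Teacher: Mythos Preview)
Your proposal is correct and follows essentially the same approach as the paper: both proofs apply the change-of-variables formula to write $\mu(\rep_\para^{-1}(U\cap\ball k))$ as an integral of $f(\rep_\para^{-1}(\repx))\cdot|\det\Jac\rep_\para^{-1}(\repx)|$ over $U\cap\ball k$, and then bound the integrand uniformly in $\para$ via continuity and compactness of $\parasp\times\overline{\ball k}$. In fact your write-up is more detailed than the paper's own proof, which merely states the change-of-variables identity and then remarks that an upper bound exists by continuity of $\rep_\para^{-1}$; your explicit decomposition into a bound $c_1$ on the density factor (via $\rep_\para^{-1}(\ball k)\subseteq\ball K$) and a bound $c_2$ on the Jacobian factor makes the argument complete.
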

\begin{proof}
  Let $f:\Real^n\to\Real$ be the density of $\mu$ \dw{details}. Then
  \begin{align*}
    \mu(\rep^{-1}_\para(U\cap\ball k))&=\int_{\rep^{-1}_\para(U\cap\ball k)} f(\lat)\diff\lat\\
    &=\int_{U\cap\ball k} f(\rep^{-1}_\para(\repx))\cdot|\det\Jac\rep^{-1}_\para(\repx)|\diff\repx
  \end{align*}
  \dw{there exists upper bound by continuity of $\rep_\para^{-1}$}
\end{proof}

\begin{lemma}
  \label{lem:ausig}
  Suppose $f_\eta\circ\rep_{(-)}(-)\sau f\circ\rep_{(-)}(-)$ and $f\neq 0$ a.e.

  Then $\smooth\circ f_\eta\circ\rep_{(-)}(-)\sau [f(\rep_{(-)}(-))>0]$.
\end{lemma}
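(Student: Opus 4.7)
The overall idea is to reduce the $\sau$-convergence of $\smooth\circ f_\eta\circ\rep_{(-)}$ to the Heaviside indicator to uniform convergence of $\smooth$ on a region where the guard $f\circ\rep_\para$ is bounded away from $0$. For each index $k$ in the definition of $\sau$ I build the exceptional set $U_{k,\para}$ as a union of three pieces: (i) the exceptional set $U^{\mathrm{hyp}}_{k',\para}$ supplied by the hypothesis at a finer index $k'$ to be chosen later; (ii) the preimage $\rep_\para^{-1}(\Real^n\setminus\ball{R_k})$ of a tail region; and (iii) the preimage $\rep_\para^{-1}(V_{\delta_k''}\cap\ball{R_k})$ of a thin neighbourhood $V_{\delta_k''}\defeq\{\repx\mid |f(\repx)|\leq\delta_k''\}$ of the zero locus of $f$. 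Each of the three pieces contributes at most $\delta_k/3$ to $\mu(U_{k,\para})$.

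Each piece is sized to be uniformly small in $\para$. Piece (ii) is controlled by choosing $R_k$ large, using \cref{lem:balldim}. For piece (iii), note that $V_{\delta_k''}\cap\ball{R_k}$ decreases to $\{f=0\}\cap\ball{R_k}$ as $\delta_k''\searrow 0$, which has Lebesgue measure zero by the hypothesis $f\neq 0$ a.e.; hence for $\delta_k''$ chosen small enough its Lebesgue measure is below any prescribed level, and the Jacobian-bound lemma immediately preceding the statement transfers this to a uniform-in-$\para$ bound on the $\mu$-measure of the preimage. Piece (i) is controlled directly by the hypothesis, taking $k'$ large so that the hypothesis' tolerance satisfies $\epsilon_{k'}<\delta_k''/2$ and its mass bound satisfies $\delta_{k'}<\delta_k/3$.

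It remains to check that outside $U_{k,\para}$ and for $\eta$ below a suitable threshold $\overline\eta_k$, the smoothed expression is within $\epsilon_k$ of the indicator. For such $s$, piece (iii) gives $|f(\rep_\para(s))|>\delta_k''$, and piece (i) gives $|f_\eta(\rep_\para(s))-f(\rep_\para(s))|<\delta_k''/2$. Consequently $f_\eta(\rep_\para(s))$ has the same sign as $f(\rep_\para(s))$ and satisfies $|f_\eta(\rep_\para(s))|>\delta_k''/2$. Applying \cref{ass:sig2} with threshold $\delta_k''/2$ and tolerance $\epsilon_k$ produces $\overline\eta_k$ such that $|\smooth(f_\eta(\rep_\para(s)))-[f(\rep_\para(s))>0]|<\epsilon_k$ on the good set, exactly the $\sau$-estimate required.

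The main obstacle is securing uniformity in $\para$ of the near-zero exclusion (iii): the raw hypothesis $f\neq 0$ a.e.\ only gives Lebesgue-nullity in $\repx$-coordinates, whereas $\sau$-convergence asks for control of the $\mu$-measure of preimages under a parameter-dependent diffeomorphism simultaneously for all $\para\in\parasp$. Confinement to the compact ball $\ball{R_k}$ together with the uniform Jacobian bound on the compact parameter space (granted by continuity of $\rep^{-1}_\para$ in $\para$) is exactly what makes the transfer work; losing either ingredient would allow the Jacobian constant or the tail mass to blow up as $\para$ varies, and uniformity would fail.
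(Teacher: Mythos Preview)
Your proposal is correct and follows essentially the same approach as the paper's proof: both build the exceptional set from the hypothesis' bad set at a refined index, a ball complement controlled via \cref{lem:balldim}, and a thin neighbourhood of $\{f=0\}$ whose preimage is bounded via the Jacobian lemma; then both conclude by forcing $|f_\eta\circ\rep_\para|$ above a fixed positive threshold on the good set and invoking \cref{ass:sig2}. The only cosmetic difference is that you list the tail $\rep_\para^{-1}(\Real^n\setminus\ball{R_k})$ as a separate piece of the exceptional set, whereas the paper folds it into $\rep_\para^{-1}(V_{i_k})$ and only separates it when bounding the measure.
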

\begin{proof}
  Let $\delta_k$, $\epsilon_k$ and $\eta_k$ be witnesses for $f_\eta\circ\rep_{(-)}(-)\sau f\circ\rep_{(-)}(-)$.

  For $i\in\nat$ define $V_i\defeq\{\repx\in\Real^n\mid|f(\repx)|<\frac 1 i\}$.
  For every $k\in\nat$ there exists $i_k\in\nat$ such that $\Leb(V_{i_k}\cap\ball k)<\frac 1 k$.
    (This is because $\Leb((-)\cap\ball k)$ is a finite measure and $\cap_{i\in\nat} V_{i}\subseteq f^{-1}(0)$ and $f\neq 0$ a.e.)

  Furthermore, for $k\in\nat$ let $K_k\in\nat$ be such that $\epsilon_{K_k}<\frac 1{2{i_k}}$.
  By \cref{ass:sig2} there exists $0<\eta_k'<\eta_{K_k}$ such that for all $0<\eta<\eta'_k$ and $y>\frac 1{2i_k}$, $\smooth(-y)<\frac 1 k$ and $\smooth(y)>1-\frac 1 k$.
  We also define
  \begin{align*}
    \delta'_k&\defeq\delta_{K_k}+\sup_{\para\in\parasp}\mu(\rep_\para^{-1}(\Real^n\setminus\ball k))+\frac 1 k&
    \epsilon'_k&\defeq\frac 1 k
  \end{align*}
  By \cref{lem:balldim}, $\lim\delta'_k=0=\lim\epsilon'_k$.

  Now, suppose $\para\in\parasp$ and $k\in\nat$.

  Define $U'\defeq U_{K_k}\cup\rep_\para^{-1}(V_{i_k})$ where $U_{K_k}\subseteq\Real^n$ is the corresponding set for $[f(\rep_{(-)}(-))>0]$, $\para$ and $K_k$.
  It holds
  \begin{align*}
    \mu(U')&\leq
    \mu(U_{K_k})+\mu(\rep_\para^{-1}(\Real^n\setminus\ball{k}))+\mu(\rep_\para^{-1}(V_{i_k}\cap\ball{k}))\\
    &\leq\mu(U_{K_k})+\mu(\rep_\para^{-1}(\Real^n\setminus\ball{k}))+c\cdot\Leb(V_{i_k}\cap\ball{k})\\
    &\leq\delta'_k
  \end{align*}
  \dw{$c$, make sure it works}
  Besides,
  for $0<\eta<\eta_k'$ and $\lat\in\Real^n\setminus U'$,
  $|f_\eta(\rep_\para(\lat))-f(\rep_\para(\lat))|<\frac 1{2i_k}$ and $|f(\rep_\para(\lat))|\geq\frac 1{i_k}$ thus
  $|f_\eta(\rep_\para(\lat))|>\frac 1{2{i_k}}$.

  Consequently, $|\smooth(f_\eta(\rep_\para(\lat)))-[f(\rep_\para(\lat))>0]|<\frac 1 k$.
\end{proof}

\begin{lemma}
  \label{lem:aucomp}
  If $f:U_1\times U_2\to\Real$ (for open and \changed[dw]{connected} $U_1,U_2\subseteq\Real$) is continuously differentiable and $g_\eta\sau g:\parasp\times\Real^n\to U_1$ and $h_\eta\sau h:\parasp\times\Real^n\to U_2$, \changed[dw]{$g,h$ are also bounded on bounded subsets of $\Real^n$} then $f\circ\langle g_\eta,h_\eta\rangle\sau f\circ\langle g,h\rangle:\parasp\times\Real^n\to\Real$. \dw{is also bounded on bounded subsets of $\Real^n$}.
\end{lemma}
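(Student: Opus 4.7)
The plan is to construct witnessing sequences $(\delta'_k)$, $(\epsilon'_k)$ and $(\eta'_k)$ for $f \circ \langle g_\eta, h_\eta \rangle \sau f \circ \langle g, h \rangle$ by decomposing $\Real^n$ into a large ball $\ball{k}$ (outside of which the probability measure $\mu$ contributes negligible mass) and, within this ball, a small exceptional set on which either $g_\eta$ is far from $g$ or $h_\eta$ is far from $h$. Fix $k\in\nat$. Since $\mu$ is a finite measure with finite moments, $\mu(\Real^n\setminus\ball k)\to 0$ as $k\to\infty$.

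Next, by the bounded-on-bounded-sets hypothesis (and compactness of $\parasp$) there is $R_k>0$ with $(g(\para,\lat),h(\para,\lat))\in[-R_k,R_k]^2\cap(U_1\times U_2)$ for all $\para\in\parasp$ and $\lat\in\ball k$. Pick a compact $K_k\subseteq U_1\times U_2$ containing this image together with a slightly larger compact $K'_k\subseteq U_1\times U_2$ containing the $r_k$-neighbourhood of $K_k$ for some $r_k>0$. Since $f$ is continuously differentiable, it is Lipschitz on the compact $K'_k$ with some constant $L_k$.

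Now apply the hypotheses $g_\eta\sau g$ and $h_\eta\sau h$ with approximation error $\min\!\bigl(r_k,\tfrac{\epsilon'_k}{2L_k}\bigr)$, where $\epsilon'_k\defeq 1/k$. This yields thresholds $\eta^g_k,\eta^h_k>0$, values $\delta^g_k,\delta^h_k\to 0$, and for every $\para\in\parasp$ exceptional sets $U^g_{k,\para},U^h_{k,\para}\subseteq\Real^n$ of $\mu$-measure at most $\delta^g_k$ and $\delta^h_k$ respectively, outside of which $|g_\eta-g|$ and $|h_\eta-h|$ are bounded by $\min\!\bigl(r_k,\tfrac{\epsilon'_k}{2L_k}\bigr)$ whenever $0<\eta$ stays below the corresponding threshold. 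Set $U_{k,\para}\defeq(\Real^n\setminus\ball k)\cup U^g_{k,\para}\cup U^h_{k,\para}$, $\delta'_k\defeq\mu(\Real^n\setminus\ball k)+\delta^g_k+\delta^h_k$ and $\eta'_k\defeq\min(\eta^g_k,\eta^h_k)$; all three tend to $0$. For $\lat\notin U_{k,\para}$ and $0<\eta<\eta'_k$, both $(g(\para,\lat),h(\para,\lat))$ and $(g_\eta(\para,\lat),h_\eta(\para,\lat))$ lie in $K'_k$ by the $r_k$-closeness guarantee, so the Lipschitz bound gives
\[
 |f(g_\eta,h_\eta)-f(g,h)|(\para,\lat)\;\le\;L_k\bigl(|g_\eta-g|+|h_\eta-h|\bigr)\;<\;\epsilon'_k.
\]

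The main obstacle is ensuring that the images of $(g,h)$, and therefore also of the approximants $(g_\eta,h_\eta)$, stay in a \emph{compact} subset of the \emph{open} set $U_1\times U_2$, so that $f$ is genuinely Lipschitz there. For $U_i=\Real$ this is automatic; for $U_i=\Real_{>0}$ (as in the $\PCFlog$ and $\PCFinv$ cases) one must also argue that $g$ (resp.\ $h$) is bounded away from $0$ on $\parasp\times\ball k$. This strengthening of ``bounded on bounded subsets'' fits naturally into the inductive clauses of the logical relation invoked in the fundamental lemma, so the uniform almost uniform convergence established here can be composed through the smooth primitive operations as required.
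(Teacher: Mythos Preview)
Your proof is correct and follows essentially the same approach as the paper: restrict to a ball $\ball k$, use compactness to bound the gradient (equivalently, a Lipschitz constant) of $f$ on the relevant region, pass to a far-enough index in the witnesses for $g_\eta\sau g$ and $h_\eta\sau h$ to make the error small relative to that bound, and take the union of the three exceptional sets (the paper uses the mean-value theorem plus Cauchy--Schwarz where you invoke the Lipschitz estimate directly). Your closing remark about needing the image of $(g,h)$ to sit in a \emph{compact} subset of $U_1\times U_2$ when $U_i\subsetneq\Real$ is well taken; the paper simply asserts its $d_k=\sup_{\ball{M_k}\cap(U_1\times U_2)}\|\nabla f\|<\infty$ without addressing this point.
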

\begin{proof}
  First, note that $f\circ\langle g,h\rangle$ is bounded on bounded subsets of $\Real^n$ because $f$ is continuously differentiable and $g$ and $h$ also satisfies this property.

  Let $\delta_k^{(i)}$, $\epsilon_k^{(i)}$ and $\eta_k^{(i)}$ ($i\in\{1,2\}$) be witnesses for $g_\eta\sau g$ and $h_\eta\sau h$.
  W.l.o.g.\ all $\epsilon_k^{(i)}\leq 1$.
  Observe that for $k\in\nat$,
  \begin{align*}
    M_k&\defeq \sup_{(\para,\lat)\in\parasp\times\ball k} \|(g(\para,\lat),h(\para,\lat))\|+\sqrt 2<\infty
  \end{align*}
  because
  $g(\parasp\times\ball k)$ and $h(\parasp\times\ball k)$ are bounded by assumption (also \cref{ass:comp})
  and therefore
  \begin{align*}
    d_k&\defeq\sup_{(x,y)\in M_k\cap(U_1\times U_2)}\left\|\nabla f(x,y)\right\|<\infty
  \end{align*}
   is well-defined.
  For $k\in\nat$ there exists $K_k\geq k$ such that each
  $\sqrt 2\cdot d_k\cdot\epsilon_{K_k}^{(i)}<\frac 1 k$.

  Define
  \begin{align*}
    \delta_k&\defeq\mu(\Real^n\setminus\ball k)+\delta^{(1)}_{K_k}+\delta^{(2)}_{K_k}&
    \epsilon_k&\defeq\frac 1 k&
    \eta_k&\defeq\min\{\eta_{K_k}^{(1)},\eta_{K_k}^{(2)}\}
  \end{align*}
  Note that by \cref{lem:balldim}, $\lim\delta_k=0=\lim\epsilon_k$.

  Let $\para\in\parasp$ and $k\in\nat$.
  Let $V\defeq (\Real^n\setminus\ball k)\cup V^{(1)}\cup V^{(2)}$, where $V^{(1)}$ (and $V^{(2)}$, respectively) are the sets for $g$ (and $h$, respectively), $\para$ and $K_k$.
  Note that $\mu(V)\leq\delta_k$. Besides for every $0<\eta<\eta_k$ and $\lat\in\Real^n\setminus V$,
  $|g_\eta(\para,\lat)|\leq|g(\para,\lat)|+\epsilon_{K_k}^{(1)}\leq|g(\para,\lat)|+1$ (similarly for $h$). Hence,
  every point between $(g_\eta(\para,\lat),h_\eta(\para,\lat))$ and $(g(\para,\lat),h(\para,\lat))$ is in $\ball {M_k}\cap(U_1\times U_2)$ and therefore by the mean value theorem,
  \begin{align*}
    &\left|f(g_\eta(\para,\lat),h_\eta(\para,\lat))-f(g(\para,\lat),h(\para,\lat))\right|\\
    &\leq
    \sup_{\zeta\in \ball {M_k}\cap(U_1\times U_2)}\left|\left\langle\nabla f(\zeta),\left(g_\eta(\para,\lat)-g(\para,\lat),h_\eta(\para,\lat)-h(\para,\lat)\right)\right\rangle\right|\\
    &\leq\sup_{\zeta\in \ball {M_k}\cap(U_1\times U_2)}\|\nabla f(\zeta)\|\cdot
    \left\|\left(g_\eta(\para,\lat)-g(\para,\lat),h_\eta(\para,\lat)-h(\para,\lat)\right)\right\|\\
    &< d_k\cdot\sqrt{2}\cdot\max\{\epsilon_{K_k}^{(1)},\epsilon_{K_k}^{(2)}\}\\
    &\leq\epsilon_k
    &&
  \end{align*}
  using the Cauchy–Schwarz inequality in the second step.
\end{proof}



\unisau*
\begin{proof}
  It suffices to show the uniform convergence of $\E_{\lat\sim\mdist}[|f_\eta(\para,\lat)-f(\para,\lat)|]$ to $0$.

  By assumption there exists $M>0$ such that $\E_{\lat\sim\mdist}\left[|f_\eta(\para,\lat)-f(\para,\lat)|^2\right]<M$ for all $\eta>0$ and $\para\in\parasp$. \dw{needs slightly stronger or argument}

  Let $\epsilon>0$. By uniform almost uniform convergence of $f_\eta$ to $f$ there exists $k$ such that $\delta_k\cdot M,\epsilon_k<\frac\epsilon 2$.

  Suppose $\para\in\parasp$ and $\eta<\eta_k$.
  Let $U\subseteq\Real^n$ be the witness for almost uniform convergence of $f_\eta$, $k$ and $\para$. In particular,
  $\E_{\lat\sim\mdist}[[\lat\in U]]\cdot M<\delta_k\cdot M<\frac\epsilon 2$ and for every $\lat\in\Real^n\setminus U$, $|f_\eta(\para,\lat)-f(\para,\lat)|<\epsilon_k<\frac\epsilon 2$.

  \begin{align*}
    &\E_{\lat\sim\mdist}[|f_\eta(\para,\lat)-f(\para,\lat)|]\\
    &\leq\E_{\lat\sim\mdist}\left[[\lat\in U]\cdot|f_\eta(\para,\lat)-f(\para,\lat)|\right]+\E_{\lat\sim\mdist}\left[[\lat\in\Real^n\setminus U]\cdot|f_\eta(\para,\lat)-f(\para,\lat)|\right]\\
    &\leq\E_{\lat\sim\mdist}\left[[\lat\in U]\right]\cdot\E_{\lat\sim\mdist}\left[|f_\eta(\para,\lat)-f(\para,\lat)|^2\right]+\E_{\lat\sim\mdist}\left[[\lat\in\Real^n\setminus U]\cdot\frac{\epsilon}{2}\right]\\
    &\leq\epsilon
  \end{align*}
  \dw{check}
\end{proof}

\subsection{Type Soundness}
In order to aggregate the effect of transformations we employ lists (typically denoted by $\Phi$)
of diffeomorphisms.
A list $[\phi_{(-)}^{(1)},\ldots,\phi_{(-)}^{(n)}]$ of diffeomorphisms $\parasp\times\Real\to\Real$ defines a diffeomorphism
\begin{align*}
  \rep_{(-)}:\parasp\times\Real^n&\to\Real^n\\
  (\para,[s_1,\ldots,s_n])&\mapsto\left(\phi^{(1)}_\para(s_1),\ldots,\phi^{(n)}_\para(s_n)\right)
\end{align*}
and we use concatentation notation.

We posit the following infinitary logical relation $\rel^\Phi_\tau$ between sequences of elements $\parasp\times\Real^n\to\sem\tau$ in $\VFr$ (corresponding to the smoothings) and $\parasp\times\Real^n\to\sem\tau$ in $\qbs$ (corresponding to the measurable standard semantics):
\begin{enumerate}
  \item $(f_\eta,f)\in\rel^\Phi_{\iota^{(\false,\dep)}}$ if $f_\eta\sau f$
  \item $(f_\eta,f)\in\rel^\Phi_{\iota^{(\true,\dep)}}$ if $f_\eta\sau f$, $f_\eta=g_\eta\circ\rep_{(-)}$ and $f=g\circ\rep_{(-)}$, where
  \begin{enumerate}
    \item $\rep$ is defined by $\Phi$ as above
    \item\label{it:nae} $g:\Real^n\to\Real$ is piecewise \dw{countable} analytic and non-constant
    \item\label{it:dep} on each piece \dw{???} $g$ may only depend on (transformed) $z_j$ if $s_j\in\dep$
  \end{enumerate}
  \item $(f_\eta,f)\in\rel^\Phi_{\tau_1\addtr\trt_3\to\tau_2}$ iff for all $\Phi_2$ and $(g_\eta,g)\in\rel^{\Phi\app\Phi_2}_{\tau_1}$, there exists $\Phi_3$ such that $|\Phi_3|=|\trt_3|$ and $(f_\eta\lrcomp g_\eta,f\lrcomp g)\in\rel^{\Phi\app\Phi_2\app\Phi_3}_{\tau_2}$.
\end{enumerate}
Note that \cref{it:nae} implies $f\neq 0$ a.e.\ because non-constant analytic functions vanish on negligible sets \cite{M15} and diffeomorphisms preserve negligibility.

\begin{lemma}
  \label{lem:relcond}
   If $(f_\eta,f)\in\rel^{\Phi}_{\PCFReal^{(\true,\dep)}}$ and $(g_\eta,g),(h_\eta,h)\in\rel^{\Phi}_\safet$ then \dw{notation}
   \begin{align*}
     ((\smooth\circ (-f_\eta))\cdot g_\eta+(\smooth\circ f_\eta)\cdot h_\eta ,[f(-)<0]\cdot g+[f(-)\geq 0]\cdot h)\in\rel^\Phi_\safet
   \end{align*}
\end{lemma}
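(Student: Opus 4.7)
I would prove the statement by induction on the safe type $\safet$, treating the base-type case as the heart of the argument and reducing the arrow case to it via the usual logical-relation unfolding.

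For the \emph{base case} $\safet = \iota^{(g',\dep')}$, the task splits into two halves. First, I have to verify uniform almost uniform convergence of the smoothed term to the measurable term. The definition of $\rel^\Phi_{\PCFReal^{(\true,\dep)}}$ tells me that $f = \bar f \circ \rep_{(-)}$ with $\bar f$ piecewise analytic and non-constant; since non-constant (countably) piecewise analytic functions vanish on a set of Lebesgue measure zero and $\rep_{(-)}$ is a diffeomorphism (hence preserves null sets), we have $f(\para,\cdot)\neq 0$ almost everywhere. This is exactly the hypothesis required by \cref{lem:ausig}, so I obtain $\smooth \circ f_\eta \sau [f(-)\geq 0]$ and, symmetrically, $\smooth \circ (-f_\eta) \sau [f(-)<0]$. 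I then feed these together with the inductive hypotheses $g_\eta \sau g$ and $h_\eta \sau h$ into \cref{lem:aucomp} applied to multiplication and addition to conclude that the convex combination converges uniformly almost uniformly to $[f<0]\cdot g+[f\geq 0]\cdot h$.

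Second, if $g' = \true$ in the output type, I also have to exhibit the structural decomposition required by clauses \ref{it:nae}--\ref{it:dep} of the relation. Here I use that both $g$ and $h$ factor through the same diffeomorphism $\rep_{(-)}$ as $\bar g \circ \rep_{(-)}$ and $\bar h \circ \rep_{(-)}$, with $\bar g,\bar h$ piecewise analytic. I then set
\[
\bar G \defeq [\bar f < 0] \cdot \bar g + [\bar f \geq 0] \cdot \bar h,
\]
refine the existing partitions of $\bar g$ and $\bar h$ by intersecting with the sign-pieces of $\bar f$, and observe that on each refined piece $\bar G$ agrees with an analytic function and depends only on the coordinates mentioned in $\dep \cup \dep'$. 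Non-constancy of $\bar G$ follows from non-constancy of one of the summands on a piece of positive measure.

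For the \emph{arrow case} $\safet = \safet_1 \addtr [] \to \safet_2$, I unfold the logical-relation definition: given any extension $\Phi_2$ and any $(\xi_\eta,\xi)\in\rel^{\Phi\app\Phi_2}_{\safet_1}$, I must produce $\Phi_3$ such that the applied pair lies in $\rel^{\Phi\app\Phi_2\app\Phi_3}_{\safet_2}$. Using the standard extension lemma for the relation (weakening $(f_\eta,f)$, $(g_\eta,g)$, $(h_\eta,h)$ along $\Phi_2$), the hypotheses at $\safet$ produce pairs $(g_\eta\lrcomp\xi_\eta, g\lrcomp\xi)$ and $(h_\eta\lrcomp\xi_\eta,h\lrcomp\xi)$ in $\rel^{\Phi\app\Phi_2\app\Phi_3}_{\safet_2}$ for suitable $\Phi_3$. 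The inductive hypothesis on $\safet_2$ then gives membership of the desired pair, and distributivity of the smooth combination over application closes the case.

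\paragraph{Anticipated obstacle.} The routine part is the convergence estimate; what requires care is the base-case structural bookkeeping when $g'=\true$. The relation demands a single diffeomorphism $\rep_{(-)}$ through which \emph{all} of $f,g,h$ factor, together with a common refined piecewise-analytic partition on which the dependency sets $\dep$ and $\dep'$ are respected. Aligning the partitions of $\bar f$, $\bar g$ and $\bar h$ into one countable piecewise-analytic partition for $\bar G$, and verifying that non-constancy survives this refinement (so that Lemma~\ref{lem:ausig} will be applicable the next time the resulting term is used inside a guard), is the technical crux.
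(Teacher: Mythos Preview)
Your proposal is correct and follows essentially the same approach as the paper: induction on the safe type $\safet$, with the base case handled via \cref{lem:ausig} (using that $f\neq 0$ a.e.\ from the piecewise-analytic non-constant hypothesis) and \cref{lem:aucomp}, and the arrow case reduced to the inductive hypothesis through the extension lemma and distributivity of the weighted combination over $\lrcomp$. The paper's own proof is terser---it simply asserts that items \ref{it:nae} and \ref{it:dep} are ``clearly'' satisfied---whereas you spell out the partition-refinement argument for the $g'=\true$ subcase; your additional care there is warranted but does not diverge in substance from what the paper intends.
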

\begin{proof}
  We focus on the argument for the case where $\safet$ is the annotated base type, in particular $\iota^{(\true,\dep)}$, which is most interesting; the extension to higher orders can be obtained similarly as for \cref{lem:ppredcond}.
  Clearly, \cref{it:nae,it:dep} are satisfied and u.a.u.\ convergence follows from \cref{lem:ausig,lem:aucomp}.
\end{proof}

  Intuitively, $\Phi$ describes how samples which \emph{may} have been drawn during execution are transformed
  We can add additional samples, which are ignored:
  \begin{lemma}
    \label{lem:extpred}
    Let $(f_\eta,f)\in\rel_{\Phi,\tau}$ and $\Phi'$ be a list of diffeomorphisms.
    Then $(g_\eta,g)\in\rel_{\Phi\app\Phi',\tau}$, where $g_\eta(\para,\tr\app\tr')\defeq f_\eta(\para,\tr)$ and
    $g(\para,\tr\app\tr')\defeq f(\para,\tr)$.
  \end{lemma}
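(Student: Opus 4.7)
}
I will prove the statement by induction on the type $\tau$, showing that the obvious padding construction preserves membership in the logical relation. The key observation is that $g_\eta$ and $g$ ignore the extra trace coordinates $\tr'$, so the new diffeomorphism list $\Phi \app \Phi'$ acts as the block-diagonal product $\rep^{\Phi\app\Phi'}_{(-)} = \rep^{\Phi}_{(-)} \times \rep^{\Phi'}_{(-)}$, and all structural properties of $f_\eta, f$ lift trivially to $g_\eta, g$.

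For the base case $\tau = \iota^{(g,\dep)}$, I first show that $g_\eta \sau g$ on $\parasp \times \Real^{|\Phi|+|\Phi'|}$. Given witnessing sequences $(\delta_k), (\epsilon_k), (\eta_k)$ for $f_\eta \sau f$, and $\para \in \parasp$, $k \in \nat$, let $U \subseteq \Real^{|\Phi|}$ be the exceptional set for $f$; I take $U' \defeq U \times \Real^{|\Phi'|}$ as the exceptional set for $g$. Since the trace distributions are probability measures, the product measure satisfies $\mu(U') = \mu(U) \cdot 1 < \delta_k$, and for any $(\tr, \tr') \notin U'$ we have $|g_\eta(\para, \tr \app \tr') - g(\para, \tr \app \tr')| = |f_\eta(\para,\tr) - f(\para,\tr)| < \epsilon_k$. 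If $g = \true$, then by assumption $f_\eta = \widehat f_\eta \circ \rep^\Phi_{(-)}$ and $f = \widehat f \circ \rep^\Phi_{(-)}$ with $\widehat f$ piecewise analytic, non-constant, and depending only on transformed $z_j$ for $s_j \in \dep$. I define $\widehat g(\mathbf z, \mathbf z') \defeq \widehat f(\mathbf z)$ and similarly $\widehat g_\eta$; then $g = \widehat g \circ \rep^{\Phi\app\Phi'}_{(-)}$ and $g_\eta = \widehat g_\eta \circ \rep^{\Phi \app \Phi'}_{(-)}$. The function $\widehat g$ remains piecewise analytic and non-constant (it is constant in the padded coordinates, not everywhere), and its set of depended-on coordinates is still $\{z_j : s_j \in \dep\}$, now viewed as a subset of the larger coordinate space.

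For the inductive case $\tau = \tau_1 \addtr \trt_3 \to \tau_2$, I must exhibit, for each $\Psi_2$ and $(h_\eta, h) \in \rel^{\Phi \app \Phi' \app \Psi_2}_{\tau_1}$, a list $\Phi_3$ with $|\Phi_3| = |\trt_3|$ such that $(g_\eta \lrcomp h_\eta, g \lrcomp h) \in \rel^{\Phi \app \Phi' \app \Psi_2 \app \Phi_3}_{\tau_2}$. Setting $\Psi_2' \defeq \Phi' \app \Psi_2$ and invoking the assumption $(f_\eta, f) \in \rel^{\Phi}_{\tau_1 \addtr \trt_3 \to \tau_2}$ on this $\Psi_2'$ (noting that $(h_\eta, h) \in \rel^{\Phi \app \Psi_2'}_{\tau_1}$ is literally the hypothesis) yields a witness $\Phi_3$ such that $(f_\eta \lrcomp h_\eta, f \lrcomp h) \in \rel^{\Phi \app \Psi_2' \app \Phi_3}_{\tau_2} = \rel^{\Phi \app \Phi' \app \Psi_2 \app \Phi_3}_{\tau_2}$. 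A short calculation confirms equality of maps:
\begin{align*}
(g_\eta \lrcomp h_\eta)(\para, \tr_1 \app \tr' \app \tr_2 \app \tr_3)
&= g_\eta(\para, \tr_1 \app \tr')\bigl(h_\eta(\para, \tr_1 \app \tr' \app \tr_2), \tr_3\bigr) \\
&= f_\eta(\para, \tr_1)\bigl(h_\eta(\para, \tr_1 \app \tr' \app \tr_2), \tr_3\bigr) \\
&= (f_\eta \lrcomp h_\eta)(\para, \tr_1 \app (\tr' \app \tr_2) \app \tr_3),
\end{align*}
and identically for the non-smoothed side. Thus the same $\Phi_3$ transfers membership from $f$ to $g$, completing the induction.

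The statement is essentially structural, so no delicate analytic argument is required; the only subtlety is making sure the padded-by-$\Real^{|\Phi'|}$ exceptional set still has small measure (true because trace distributions are probability measures), and that the non-constancy/dependency conditions for guard-safe base types are not disturbed by adding coordinates on which $\widehat g$ does not depend. I expect the main care needed is bookkeeping of the various trace partitions and verifying the $\lrcomp$ equality above, rather than any conceptual obstacle.
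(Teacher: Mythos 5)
The paper states this lemma without proof, so there is nothing to compare against directly; your argument fills that gap and is correct. You prove it by induction on $\tau$ with padding, which is the approach the paper's framework anticipates (the lemma is invoked in the proof of \cref{lem:fundconv} exactly to allow this kind of trace extension). Your base-case argument that $g_\eta \sau g$ is sound: the exceptional set $U \times \Real^{|\Phi'|}$ has the same measure as $U$ because the lifted reference measure is a product with a probability measure on $\Real^{|\Phi'|}$, and the error on the complement is unchanged. For the guard-safe case $\iota^{(\true,\dep)}$, your observation that the block-diagonal factorisation $\rep^{\Phi\app\Phi'}_{(-)} = \rep^{\Phi}_{(-)} \times \rep^{\Phi'}_{(-)}$ lets you take $\widehat g(\mathbf z, \mathbf z') \defeq \widehat f(\mathbf z)$ is exactly right, and $\widehat g$ inherits piecewise analyticity, non-constancy (hence a.e.\ non-vanishing, via Fubini), and the dependency set. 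The inductive case cleanly threads the extra block $\Phi'$ into the existentially quantified list $\Phi_2$ of the function-type clause, and the pointwise identity $g_\eta \lrcomp h_\eta = f_\eta \lrcomp h_\eta$ (and likewise for the non-smoothed pair) follows from the definition of $\lrcomp$ as you verify. No gap.
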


\begin{restatable}[Fundamental]{lemma}{fundconv}
  \label{lem:fundconv}
If $\theta_1:\iota_1^{(\false,\emptyset)},\ldots,\theta_m:\iota_m^{(\false,\emptyset)},x_1:\tau_1,\ldots,x_\ell:\tau_\ell\mid\trt\vdash M:\tau$, $\Phi$ be a list of diffeomorphisms, $(\xi_\eta^{(1)},\xi^{(1)})\in\rel^\Phi_{\tau_1},\ldots,(\xi_\eta^{(\ell)},\xi^{(\ell)})\in\rel^\Phi_{\tau_\ell}$ then there exists a list $\Phi'$ of diffeomorphisms that $|\trt|=|\Phi'|$ and
$(\sema M\flcomp\langle\xi_\eta^{(1)},\ldots,\xi_\eta^{(\ell)}\rangle, \sem M\flcomp\langle\xi^{(1)},\ldots,\xi^{(\ell)}\rangle)\in\rel^{\Phi\app\Phi'}_\tau$, where $\flcomp$ is defined as in \cref{lem:fundpoly1}.
\end{restatable}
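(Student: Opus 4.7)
The plan is to proceed by induction on the typing derivation of $\Theta,x_1{:}\tau_1,\ldots,x_\ell{:}\tau_\ell\mid\trt\gsj M:\tau$ (where $\Theta$ abbreviates the parameter context), following the standard logical-relations recipe. For each case I will construct a list $\Phi'$ of diffeomorphisms of length $|\trt|$ and verify that the pair of smoothed and measurable semantics, pre-composed with the environment assumptions $(\xi_\eta^{(i)},\xi^{(i)})$, lies in $\rel^{\Phi\app\Phi'}_\tau$. The partition of $\Phi'$ across subterms is determined uniquely by \cref{lem:unitt}, so the assembly of the various $\Phi'_i$ returned by sub-derivations is well-defined.

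The easy cases are variables (immediate from the hypothesis); constants $\underline r$ and parameters $\theta_i$, both annotated $(\false,\emptyset)$, for which the smoothed and measurable semantics coincide so $\sau$ is trivial and $\Phi'=[]$; and subtyping, handled by showing $\rel^{\Phi}_{(-)}$ is downward closed under $\sqsubseteq_{\mathrm{unif}}$ (switching $g$ from $\true$ to $\false$ drops the factorisation requirement, and enlarging $\dep$ weakens the dependency constraint in item iii). Abstraction unfolds the arrow clause, extending the environment via \cref{lem:extpred} and invoking the IH on the body; application applies the IH to both subterms and consumes the arrow premise. Unary primitives preserve $\sau$-convergence on their continuously differentiable domains via \cref{lem:aucomp}, and preserve the piecewise-analytic, non-constant structure together with the dependency set. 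For binary $+,\cdot$ at annotation $(\false,\dep)$ only $\sau$-convergence is needed (again \cref{lem:aucomp}); at annotation $(\true,\dep_1\cup\dep_2)$ the side condition $\dep_1\cap\dep_2=\emptyset$ ensures that combining two non-constant piecewise analytic functions in disjoint sample coordinates remains non-constant, so item (ii) is preserved.

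The genuine work lies in sampling and conditionals. For $\mid[s_j{\sim}\dist]\gsj\sample_\dist:\PCFReal^{(\true,\{s_j\})}$ I take $\Phi'=[\id]$: both semantics are the projection onto the new coordinate, of the form $f\circ\rep_{(-)}$ with $f$ the coordinate projection, which is piecewise analytic, non-constant, and depends only on $s_j$. For $\transt\dist T$, the diffeomorphism side condition gives $\sem T(\para,[])=\sema T(\para,[])$ a diffeomorphism, so I extend by the singleton $[\sem T(-,[])]$ and the same factorisation holds. For $\ifc L M N$ with guard $L$ of type $\iota^{(\true,\dep)}$, the IH supplies pairs for $L,M,N$ with $\Phi'=\Phi'_L\app\Phi'_M\app\Phi'_N$; \cref{lem:relcond} then delivers the required relation for the $\smooth$-weighted interpolation against the measurable branching. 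Crucially, the $g=\true$ annotation on $L$ ensures the guard factors as $f\circ\rep$ with $f$ non-constant piecewise analytic, hence $f\neq 0$ almost everywhere, which is exactly the hypothesis \cref{lem:ausig} needs to convert $\sau$-convergence of $f_\eta$ into $\sau$-convergence of $\smooth\circ f_\eta$ to the Iverson bracket.

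The main obstacle I expect is the conditional case, for two reasons. First, \cref{lem:relcond} as stated is directly justified only at base type; extending it to arbitrary safe types $\safet$ requires an inner induction on $\safet$ that unfolds the arrow clause of $\rel$ and threads an arbitrary further argument $(g_\eta,g)\in\rel^{\Phi\app\Phi_2}_{\sigma_1}$ through both summands of the weighted combination, using \cref{lem:extpred} to extend the environment before re-invoking \cref{lem:relcond} at $\sigma_2$. Second, one must verify that as $\Phi$ grows across subterms the composite reparameterisation $\rep_{(-)}$ built from $\Phi\app\Phi'$ remains a diffeomorphism on $\Real^{|\Phi|+|\trt|}$ uniformly in $\para\in\parasp$, so that \cref{lem:balldim,lem:ausig,lem:aucomp} apply; this reduces to the fact that finite products of parameter-indexed diffeomorphisms on $\Real$ are parameter-indexed diffeomorphisms on $\Real^n$, which can be checked componentwise.
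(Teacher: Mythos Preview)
Your proposal is correct and follows essentially the same route as the paper: induction on the typing derivation, with the substantive cases being $\transt\dist T$ (extend $\Phi$ by the diffeomorphism $\sem T(\para,[])$ and factor through the projection), the guard-safe binary operations (use $\dep_1\cap\dep_2=\emptyset$ to argue that the sum/product of two non-constant piecewise-analytic functions in disjoint coordinates remains non-constant after extending via \cref{lem:extpred}), and conditionals via \cref{lem:relcond}. You have also correctly anticipated the paper's treatment of the higher-type extension of \cref{lem:relcond} by an inner induction on the safe type $\safet$.
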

\begin{proof}
The claim is proven by induction on the typing judgements. We focus on the most interesting cases:
\begin{enumerate}
  \item For conditionals we exploit the inductive hypothesis and \cref{lem:relcond}.
  \item Suppose
  $\para\mid[s_j\sim\dist]\gsj\transt\dist T:\PCFReal^{(\true,\{s_j\})}$ because $T$ is diffeomorphic.
  We define
  \begin{align*}
    g(s_j)&\defeq s_j&
    \phi_\para(s)&\defeq\sem T(\para,\etr)(s,\etr)=\sema T(\para,\etr)(s,\etr)
  \end{align*}
  and therefore we can easily see that
  \begin{align*}
    \sema{\transt\dist T}=g\circ\phi_{(-)}=\sem{\transt\dist T}
  \end{align*}
  and $(\sema{\transt\dist T},\sem{\transt\dist T})\in\rel^{[\phi_{(-)}]}_{\PCFReal^{(\true,\{s_j\})}}$ follows immediately.
  \item For addition we focus on the interesting case   $\mid[]\gsj\PCFplus:\iota^{(\true,\dep_1)}\to\iota^{(\true,\dep_2)}\to\iota^{(\true,\dep_1\cup\dep_2)}$, where $\dep_1\cap\dep_2=\emptyset$. Let $\Phi$, $\Phi_1$ and $\Phi_2$ be lists of diffeomorphisms, $(f^{(1)}_\eta,f^{(1)})\in\rel^{\Phi\app\Phi_1}_{\iota^{(\true,\dep_1)}}$ and $(f^{(2)}_\eta,f^{(2)})\in\rel^{\Phi\app\Phi_1\app\Phi_2}_{\iota^{(\true,\dep_2)}}$.
  By definition there are decompositions
  \begin{align*}
    f^{(1)}_\eta&=g_\eta^{(1)}\circ\rep_{(-)}^{(1)}&
    f^{(1)}&=g^{(1)}\circ\rep_{(-)}^{(1)}&
    f^{(2)}_\eta&=g_\eta^{(2)}\circ\rep_{(-)}^{(2)}&
    f^{(2)}&=g^{(2)}\circ\rep_{(-)}^{(2)}
  \end{align*}
  \dw{agree}
  Let $\widehat {g^{(1)}_\eta}$ and $\widehat {g^{(1)}}$ be the extension of $g^{(1)}_\eta$ and $g^{(1)}$, respectively, to $\Real^{|\Phi|+|\Phi_1|+|\Phi_2|}\to\Real$. Note that
  \begin{align*}
    \sema\PCFplus\lrcomp f_\eta^{(1)}\lrcomp f_\eta^{(2)}&= (\widehat {g_\eta^{(1)}}+g_\eta^{(2)})\circ \rep_\para^{(2)}&
    \sem\PCFplus\lrcomp f^{(1)}\lrcomp f^{(2)}= (\widehat {g^{(1)}}+g^{(2)})\circ\rep_\para^{(2)}
  \end{align*}
  Clearly (using \cref{lem:extpred}), $\widehat {g^{(1)}}+g^{(2)}$ is again piecewise analytic and on each piece depends on (transformed) samples either $g^{(1)}$ or $g^{(2)}$ depends on. Furthermore, on each piece $\widehat {g^{(1)}}+g^{(2)}$ is not constant because $g^{(1)}$ and $g^{(2)}$ are not constant and depend on different variables. \dw{does this make sense?}

\end{enumerate}
\end{proof}

\section{Supplementary Materials for \cref{sec:empirical}}

\subsection{Experimental Setup}
\label{app:expt-setup}

To generate the ELBO trajectories shown in \cref{fig:estimator variance}, we separately took 1000 samples of the ELBO every 100 iterations, taking extra samples to reduce the variance in the graphs presented.
The random samples were the same across estimators, which leads to the correlation in noise seen in their trajectories.

\cref{tab:var} compares the average variance of the estimators, where the average is taken over a single optimisation trajectory.
For each estimator, we took 1000 Monte Carlo samples of the gradient every 100 iterations to compute the variance of the estimator at that iteration; we then computed the average of these variances.
Since the gradients are vectors, the variance was measured in two ways: averaging the component-wise variances and the variance of the $L_2$ norm.

We then separately benchmark each estimator by measuring how many iterations each can complete in a fixed time budget and setting the computational cost to be the reciprocal of that.
This is then used to compute a \emph{work-normalised variance} \cite{glynn1992asymptotic,Botev2017} that is taken to be the product of the computational cost and variance.
Intuitively, we divide by the relative time taken since we can reduce the variance by the same factor running the faster estimator more times.

\subsection{Models}
\label{app:models}

We include the models from \cite{LYY18}, which are as follows:
\begin{itemize}
  \item \texttt{temperature} \cite{DBLP:conf/qest/SoudjaniMN17} models a controller keeping the temperature of a room within set bounds.
  The discontinuity arises from the discrete state of the controller, being either on or off, which disrupts the continuous state representing the temperature of the room.
  Given a set of noisy measurements of the room temperature, the goal is to infer the controller state at each of 21 time steps. The model has a 41-dimensional latent variable and 80 if-statements.

  \item \texttt{textmsg} \cite{DavidsonPilon15} models daily text message rates, and the goal is to discover a change in the rate over the 74-day period of data given.
  The non-differentiability arises from the point at which the rate is modelled
  to change. The model has a 3-dimensional latent variable (the two rates and
  the point at which they change) and 37 if-statements.

  \item \texttt{influenza}  \cite{ShumwayS05} models the US influenza mortality for
  1969.
  In each month, the mortality rate depends on the dominant virus strain being of type 1 or type 2, producing a non-differentiablity for each month.
  Given the mortality data, the goal is to infer the dominant virus strain in each month. The model has a 37-dimensional latent variable and 24 if-statements.
\end{itemize}

Additionally, we introduce the following models:

\begin{itemize}
  \item \texttt{cheating} 
  \cite{DavidsonPilon15} simulates a differential privacy setting where students taking an exam are surveyed to determine the prevalence of cheating without exposing the details for any individual.
  Students are tasked to toss a coin, on heads they tell the truth (cheating or not cheating) and on tails they toss a second coin to determine their answer.
  The tossing of coins here is a source of discontinuity.
  The goal, given the proportion of students who answered yes, is to predict a posterior on the cheating rate.
  In this model there are 300 if-statements and a 301-dimensional latent space, although we only optimise over a single dimension with the other 300 being sources of randomness.

  \item \texttt{xornet} is a simple multi-layer neural network trained to compute the exclusive-or (XOR) function.
  It has a \changed[lo]{\texttt{2-4-2-1}} network architecture with two inputs and one output, and all activation functions being the Heaviside step function which is traditionally infeasible for gradient-based optimisation because of the discontinuity at $0$ and a zero gradient everywhere else.
  The model has a \changed[lo]{25-dimensional} latent space (for all the weights and biases) and 28 if-statements.
  Note that this model is not applicable to the \lyy estimator since the branch conditions are not all affine in the latent space.
\end{itemize}

\subsection{Analysis of Results}
\label{app:results}


The ELBO graph for the \texttt{temperature} model in \cref{fig:temperature-graph} shows that the \reparam estimator is biased, converging to a suboptimal value when compared to the \nested and \lyy estimators.
We can also see from the graph and the data in \cref{tab:temperature} that the \score estimator exhibits extremely high variance, and does not converge.

The \texttt{textmsg} and \texttt{influenza} ELBO graphs in \cref{fig:textmsg-graph} and \cref{fig:influenza-graph} both show all estimators converging towards roughly the same value, with \score exhibiting a larger variance.
The work-normalised variance of the \nested estimator across both model is the lowest across both variance measures.

For the \texttt{cheating} model in \cref{fig:cheating-graph}, we have another visual indicator of the bias of the \reparam gradient.
Here \nested outperforms again with the lowest work-normalised variance (ignoring that of \reparam since it is biased).

Finally, the \texttt{xornet} model shows the difficulty of training step-function based neural nets.
The \lyy estimator is not applicable here since the boundary integral has no general efficient estimator for non-affine conditionals, which is the case here.
In \cref{fig:xornet-graph}, the \reparam estimator makes no progress while other estimators manage to converge to close to $0$ ELBO, showing that they learn a network that correctly classifies all points.
In particular, the \nested estimator converges the quickest.

To summarise, the results show cases where the \reparam estimator is biased and how the \nested estimator do not have the same limitation.
Where the \lyy estimator is defined, they converge to roughly the same objective value; and the smoothing approach is generalisable to more complex models such as neural networks with non-linear boundaries.
Our proposed \nested estimator has consistently significantly lower work-normalised variance, up to 3 orders of magnitude.



\begin{table}[ht]
  \caption{Computational cost and work-normalised variances, all given as ratios
  with respect to the \textsc{Score} estimator (whose data are omitted since they would be a row of $1$s). We chose $\eta=0.15$ for \textsc{Smooth}.}
  \label{tab:var}
  \centering
 \small
  \begin{subtable}[h]{.4\textwidth}
    \caption{\texttt{temperature}}
    \label{tab:temperature}
    \centering
    \begin{tabular}{lcccr}
      \toprule
      Estimator & Cost & $\textrm{Avg}(V(.))$ & $V(\|.\|_2)$ \\
      \midrule
  \textsc{Smooth}  &  1.62e+00  &  3.17e-10  &  2.09e-09    \\
  \textsc{Reparam}  &     1.28e+00  &  1.48e-08 &   2.01e-08   \\
  \textsc{Lyy18}  &       9.12e+00 &   1.22e-06 &   4.76e-05 \\
      \bottomrule
    \end{tabular}
  \end{subtable}
\qquad\qquad
  \begin{subtable}[h]{.4\textwidth}
    \caption{\texttt{textmsg}}
    \label{tab:textmsg}
    \centering
    \begin{tabular}{lcccr}
      \toprule
      Estimator & Cost & $\textrm{Avg}(V(.))$ & $V(\|.\|_2)$ \\
      \midrule
      \textsc{Smooth} & 2.00e+00 & 2.29e-02 & 3.79e-02 \\
      \textsc{Reparam} & 1.18e+00 & 1.43e-02 & 2.29e-02 \\
      \textsc{Lyy18} & 4.00e+00 & 5.76e-02 & 8.46e-02 \\
      \bottomrule
    \end{tabular}
  \end{subtable}
\\
 \vspace*{10pt}
  \begin{subtable}[h]{.4\textwidth}
    \caption{\texttt{influenza}}
    \label{tab:influenza}
    \centering
    \begin{tabular}{lcccr}
      \toprule
      Estimator & Cost & $\textrm{Avg}(V(.))$ & $V(\|.\|_2)$ \\
      \midrule
    \textsc{Smooth}  &      1.47e+00  &  9.15e-03  &  4.58e-03    \\
    \textsc{Reparam}  &     1.17e+00  &  7.45e-03  &  3.68e-03    \\
    \textsc{Lyy18}  &       8.30e+00  &  5.88e-02  &  2.91e-02 \\
      \bottomrule
    \end{tabular}
  \end{subtable}
\qquad\qquad
  \begin{subtable}[h]{.4\textwidth}
    \caption{\texttt{cheating}}
    \label{tab:cheating}
    \centering
    \begin{tabular}{lcccr}
      \toprule
      Estimator & Cost & $\textrm{Avg}(V(.))$ & $V(\|.\|_2)$ \\
      \midrule
      \textsc{Smooth} &      1.59e+00  &  3.64e-03 &   5.94e-03    \\
      \textsc{Reparam} &     9.66e-01  &  6.47e-19 &   1.74e-18    \\
      \textsc{Lyy18} &        2.51e+00 &   5.39e-02 &   1.34e-01\\
      \bottomrule
    \end{tabular}
  \end{subtable}
\\
\vspace*{10pt}
  \begin{subtable}[h]{.5\textwidth}
    \caption{\texttt{xornet}}
    \label{tab:xornet}
    \centering
    \begin{tabular}{lcccr}
      \toprule
      Estimator & Cost & $\textrm{Avg}(V(.))$ & $V(\|.\|_2)$ \\
      \midrule
      \textsc{Smooth} &      1.66e+00  &  9.57e-03 &   4.46e-02    \\
      \textsc{Reparam} &     3.51e-01  &  7.55e-09 &   2.37e-09 \\
      \bottomrule
    \end{tabular}
  \end{subtable}
\end{table}


\end{document}